\documentclass[11pt]{article}
\usepackage[margin=0.75in]{geometry}
\pdfoutput=1
\usepackage[utf8]{inputenc}
\usepackage[english]{babel}
\usepackage[T1]{fontenc}
\usepackage{qcircuit}
\usepackage{amsmath,amssymb}
\usepackage{graphicx}           
\usepackage{relsize}
\usepackage{tcolorbox}
\usepackage{amsthm}
\usepackage{thm-restate}
\usepackage{braket}
\usepackage{csquotes}
\usepackage[range-phrase=--]{siunitx}

\usepackage{soul}
\usepackage{mathtools}
\usepackage{multirow}
\usepackage{enumitem}

\usepackage{cite}
\usepackage[figurename=Fig.,font=footnotesize]{caption}
\usepackage{subcaption}
\theoremstyle{definition}
\usepackage{float}
\makeatletter
\newlength\min@xx

\makeatother

\usepackage{hyperref}
\hypersetup{
    colorlinks=true,
    linkcolor=blue,
    citecolor=magenta,
    filecolor=magenta,      
    urlcolor=blue,
    pdftitle={},
    pdfpagemode=FullScreen,
    }

\usepackage{lipsum}
\usepackage[capitalize]{cleveref}
\usepackage{bbold}

\usepackage[compat=0.6]{yquant}
\yquantset{/yquant/operator/minimum width=2mm}
\yquantset{/yquant/operators/every text/.style={
    inner timesep=-1mm,
}}
\yquantset{/yquant/operators/every rectangular box/.style={
    shape=yquant-rectangle,
    draw,
    inner timesep=0.8mm,
    inner spacesep=0.5mm,
    outer timesep=1mm,
    time radius=1mm,
    /yquant/default fill,
}}
\yquantdefinebox{dots}[
    anchor=south,
    inner timesep=-5pt,
    inner spacesep=-5pt,
    time radius=-5pt,
]{$\vdots$}

\usetikzlibrary{quotes}
\useyquantlanguage{groups}

\newtheorem{theorem}{Theorem}
\newtheorem{lemma}{Lemma}
\newtheorem{corollary}{Corollary}
\newtheorem{definition}{Definition}

\newtheorem{proposition}{Proposition}
\newtheorem{example}{Example} 

\newcommand{\codepar}[1]{\ensuremath{[\![#1]\!]}}

\newcommand{\g}[1]{\mathsf{#1}}
\newcommand{\logg}[1]{\overline{\mathsf{#1}}}
\newcommand{\up}[1]{\mathrm{#1}}

\usepackage{amsbsy}

\newcommand{\bal}{\begin{equation}\begin{aligned}}
\newcommand{\eal}{\end{aligned}\end{equation}}

\crefname{statement}{Statement}{Statements}
{\setlength{\extrarowheight}{1.5pt}

\title{Construction of the full logical Clifford group for high-rate quantum Reed--Muller codes using only transversal and fold-transversal gates}

\author{Theerapat Tansuwannont$^{1,2,}$\footnote{\href{mailto:t.tansuwannont@phys.s.u-tokyo.ac.jp}{t.tansuwannont@phys.s.u-tokyo.ac.jp}} , \;Tim Chan$^{3,2,}$\footnote{\href{mailto:timothy.chan@materials.ox.ac.uk}{timothy.chan@materials.ox.ac.uk}} , \;and Ryuji Takagi$^{4,}$\footnote{\href{mailto:ryujitakagi@g.ecc.u-tokyo.ac.jp}{ryujitakagi@g.ecc.u-tokyo.ac.jp}}}
\date{\small
$^1$Department of Physics, Graduate School of Science, The University of Tokyo, 7-3-1 Hongo, Bunkyo-ku, Tokyo, 113-0033, Japan\\
$^2$Center for Quantum Information and Quantum Biology, The University of Osaka, Toyonaka, Osaka 560-0043, Japan \\
$^3$Department of Materials, University of Oxford, Parks Road, Oxford OX1 3PH, United Kingdom \\
$^4$Department of Basic Science, The University of Tokyo, 3-8-1 Komaba, Meguro-ku, Tokyo, 153-0041, Japan \\
}

\begin{document}

\maketitle

\begin{abstract}
To build large-scale quantum computers while minimizing resource requirements, one may want to use high-rate quantum error-correcting codes that can efficiently encode information. However, realizing an addressable gate---a logical gate on a subset of logical qubits within a high-rate code---in a fault-tolerant manner can be challenging and may require ancilla qubits. Transversal and fold-transversal gates could provide a means to fault-tolerantly implement logical gates using a constant-depth circuit without ancilla qubits, but available gates of these types could be limited depending on the code and might not be addressable. In this work, we study a family of \codepar{n=2^m,k={m \choose m/2}\approx n/\sqrt{\pi \log_2(n)/2},d=2^{m/2}=\sqrt{n}} self-dual quantum Reed--Muller codes, where $m$ is a positive even number. 
For any code in this family, we construct a generating set of the full logical Clifford group comprising only transversal and fold-transversal gates, thus enabling the implementation of any addressable Clifford gate. 
To our knowledge, this is the first known construction of the full logical Clifford group using only transversal and fold-transversal gates without requiring ancilla qubits for a family of codes in which $k$ grows near-linearly in $n$ up to a $1/\sqrt{\log n}$ factor.
\end{abstract}

\section{Introduction} \label{sec:intro}

\subsection{Background}

In quantum computing, undesirable interactions between quantum systems and the environment can cause errors that lead to unreliable computation results. To protect a quantum state against noise, quantum error correcting codes (QECCs) can be used to encode some number of logical qubits into a larger number of physical qubits. In a setting where any physical
component---including physical gates, qubit preparation, and qubit measurement---can
be faulty, fault-tolerant error correction (FTEC) and fault-tolerant quantum computation (FTQC) schemes are required to ensure that errors due to circuit faults cannot propagate too severely \cite{Shor96,AB08,Kitaev97,KLZ96,Preskill98,TB05,ND05,AL06,AGP06}. Such fault-tolerant schemes generally require additional space and time overhead (e.g., ancilla qubits and quantum gates) \cite{Steane03,PR12,CJL17,TYC17}; thus, schemes that require small overhead are favorable for practical implementation. 

QECCs with high encoding rates have gained a lot of recent attention since they use fewer physical qubits per logical qubit compared to well-known codes such as surface codes \cite{Kitaev97,BK98} or color codes \cite{BM06} with the same code distance, making them good candidates for building large-scale quantum computers. 
However, manipulating the logical information encoded in a block of high-rate codes can be challenging, especially if one wants to implement an addressable gate, i.e., a logical gate that targets a specific subset of logical qubits. One general technique to implement any addressable Clifford gate on any stabilizer code is gate teleportation \cite{ZLC00}. This requires two ancilla blocks prepared in a specific entangled state depending on the logical gate to be applied; one of the ancilla blocks is then jointly measured with the data block in the logical Bell basis. For a Calderbank--Shor--Steane (CSS) code, the gate teleportation scheme can be reformulated in terms of logical Pauli measurements \cite{BZHJL15}, and schemes to prepare required ancilla states have been proposed \cite{ZLB18,ZLBK20}. Still, these teleportation-based techniques are not ideal for practical implementation since the required ancilla blocks can be large. Several FTQC techniques that utilize ancilla states with joint measurements have been tailored for specific families of codes to further minimize the required overhead. Examples of such techniques are lattice surgery \cite{HHFDM12} and homomorphic logical measurements \cite{HJY23}. 

Transversal gates \cite{Gottesman97} and fold-transversal gates \cite{Moussa16,BB24} can provide a means to fault-tolerantly implement logical gates with constant time overhead and without ancilla qubits. Numerical simulations have provided evidence that transversal and fold-transversal gates are fault tolerant in phenomenological and circuit-level noise models \cite{GV24,CCLP24,CBZGM+25,SMB25}. However, the set of available gates can be limited and depend on the symmetries of the code. Recently, there has been an active search on families of codes that admit a rich set of logical gates from transversal and fold-transversal gates. Some of these results include families of codes that admit addressable and transversal non-Clifford gates \cite{ZSPCB25,HVWZ25a,HVWZ25b,Guemard25,GL25}, and constructions of fold-transversal Clifford gates for quantum low-density parity check (qLDPC) codes \cite{QWV23,KZ23,GV24,TB25,ES25,EPS24,GK25,MGFCS+25}, floquet codes \cite{MJ26}, and concatenated symplectic double codes \cite{BD25}. Ref. \cite{SKWRB25} provides computational tools to search for available transversal and fold-transversal gates for any stabilizer code. 

Universal quantum computation can be achieved by Clifford gates plus any gate not in the Clifford group \cite{NRS01}. One may want to perform universal quantum computation in the logical level using only transversal gates. Unfortunately, this possibility is ruled out by the Eastin--Knill theorem \cite{ZCC11,CCCZC08,EK09}. A common attempt to achieve universality is to find a code in which Clifford gates can be implemented with low overhead, then apply an additional technique such as magic state preparation and injection \cite{KLZ97,BK05,ITHF25,GSJ24} or code switching \cite{PR13,ADP14} to implement non-Clifford gates. However, for the aforementioned families of codes \cite{QWV23,KZ23,GV24,TB25,ES25,EPS24,GK25,MGFCS+25,MJ26,BD25} in which Clifford gates can be implemented by transversal and fold-transversal gates, either the number of logical qubits is fixed for all codes in the same family, or in the case of high-rate codes, additional techniques are required to achieve the full logical Clifford group. The only exception is the family of subsystem hypergraph product simplex (SHYPS) codes developed in \cite{MGFCS+25}, for which the full logical Clifford group can be generated using only transversal and fold-transversal gates while the number of logical qubits $k$ scales poly-logarithmically with the block length $n$.

Here we aim to find a family of codes that admits a rich set of logical Clifford gates implementable by transversal and fold-transversal gates and has encoding rates better than the SHYPS codes in \cite{MGFCS+25}. We turn our attention to quantum Reed--Muller codes \cite{ADP14}, which are high-rate codes that can be constructed from two classical Reed--Muller codes \cite{MS77} through CSS code construction \cite{CS96,Steane96}. For each code in the family, physical qubits can be laid out on a hypercube, and depending on the parameters of its corresponding classical Reed--Muller codes, various logical Clifford or non-Clifford operations can be implemented transversally \cite{BCHK26}. Two well-known quantum Reed--Muller codes are the \codepar{4,2,2} code \cite{VGW96}, and the \codepar{16,6,4} tesseract code \cite{DR20}. Punctured versions of some quantum Reed--Muller codes, such as the \codepar{15,1,3} code, are known for their utility in magic state preparation \cite{DK25} and code switching \cite{PR13,ADP14}. Fold-transversal gates of some quantum Reed--Muller codes have also been studied recently \cite{GR24}. Thanks to their simple geometrical interpretations, implementations of certain quantum Reed--Muller codes have been demonstrated on ion trap \cite{RACCD+24,DBBHK25} and neutral atom platforms \cite{BGLEB25}.

\subsection{Our contributions}

In this work, we focus on a family of self-dual quantum Reed--Muller codes with code parameters \codepar{n=2^m,k={m \choose m/2}\approx n/\sqrt{\pi \log_2(n)/2},d=2^{m/2}=\sqrt{n}}, where $m$ is a positive even number, which can be constructed from classical Reed--Muller codes $\mathrm{RM}(m/2-1,m)$ through CSS construction. 
Here we use an approach different from \cite{GR24} to construct fold-transversal gates for the quantum Reed--Muller codes (see discussions in \cref{sec:discussion} for a comparison).
Our contributions are as follows.
\begin{enumerate}
    \item For each $m$, we construct a family of fold-transversal gates of the quantum code from certain automorphisms of the corresponding classical Reed--Muller code, and evaluate
    the logical action of each gate on our choice of logical $\g{X}$ and $\g{Z}$ operators of the code.
    Using these fold-transversal gates, we prove that all addressable phase gates ($\logg{S}$) of the quantum code as well as addressable controlled-$\g{Z}$ gates ($\logg{C_{00}Z}$) on some pairs of logical qubits can be constructed. We then provide explicit constructions of addressable Hadamard gates ($\logg{H}$), addressable swap gates ($\logg{SW}$), and addressable controlled-$\g{Z}$ gates ($\logg{C_{00}Z}$) on any logical qubit (or any pair of logical qubits) from a sequence of addressable gates from the previous step and a transversal Hadamard gate $\g{H}^{\otimes n}$. With these addressable gates, the full logical Clifford group $\overline{C}_k$ can be generated.
    \item We discuss fundamental limitations on circuit depth for realizing an arbitrary logical Clifford gate, and prove that for any code family that admits the full logical Clifford group from transversal and fold-transversal gates, it is impossible to realize all logical Clifford gates in constant time if $k=\omega(\sqrt{n\log n})$, which is true for the quantum Reed--Muller code family of our interest.
    \item To visualize our construction and assist future studies of quantum Reed--Muller codes, we provide an open-source Python package that can explicitly construct any addressable $\logg{H}$, $\logg{S}$, $\logg{C_{00}Z}$, or $\logg{SW}$ gate for any code in the studied family as a sequence of transversal and fold-transversal gates. 
    The package can be used to verify all theorems related to our construction of logical gates (\crefrange{thm:PnQ_preserve_stb}{thm:addr_SW_CZ} and \cref{cor:addr_S_CZ}).
    More generally,
    our package can construct and compute the logical action of any transversal or fold-transversal gate introduced in this paper.
    Our Python package is available at \cite{Chan2026}.
\end{enumerate}
Our main results are summarized in the following theorems.

\setcounter{theorem}{8}
\begin{theorem}[The full logical Clifford group of quantum Reed--Muller codes, informal] \label{thm:full_Clifford_sim}
Let $m$ be a positive even number and let $\mathrm{QRM}(m)$ be the \codepar{n=2^m,k={m \choose m/2},d=2^{m/2}} self-dual quantum Reed--Muller code whose stabilizers are constructed from a generator matrix of the classical Reed--Muller code $\mathrm{RM}(m/2-1,m)$ through CSS construction. For any $m$, there exists a set of transversal and fold-transversal gates $\{\g{U}_{i}\}_i$ constructed from automorphisms of $\mathrm{RM}(m/2-1,m)$ such that the full logical Clifford group $\overline{C}_k$ of $\mathrm{QRM}(m)$ is $\overline{C}_k=\left\langle\g{H}^{\otimes n},\g{U}_{i}\right\rangle_i$.
\end{theorem}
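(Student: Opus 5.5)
The plan follows the contribution list. I will exhibit enough addressable generators inside $\left\langle\g{H}^{\otimes n},\g{U}_{i}\right\rangle_i$ and then invoke the standard fact that $\overline{C}_k$ (modulo logical Paulis and phases) is generated by single-qubit $\logg{H}$, $\logg{S}$ on every logical qubit together with $\logg{C_{00}Z}$ on every pair. Logical Paulis are themselves transversal Pauli operators, so they can be added harmlessly. Throughout, the logical basis is taken as the one fixed earlier, where logical $\g{X}$ and $\g{Z}$ operators are indexed by degree-$m/2$ monomials (equivalently $m/2$-subsets of $[m]$), which gives $k={m\choose m/2}$.

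\textbf{Step 1: physical gates.} The gates $\g{U}_i$ will be of two kinds. The first is $\g{H}^{\otimes n}$, which acts as logical $\overline{\g{H}}^{\otimes k}$ up to a permutation/pairing determined by self-duality. The second is fold-transversal gates built from automorphisms $\pi$ of $\mathrm{RM}(m/2-1,m)$, taken from the affine group $\mathrm{AGL}(m,2)$. For an involutive automorphism $\pi$, I apply $\g{S}$ on the fixed points of $\pi$ and $\g{CZ}$ on each orbit pair $\{x,\pi(x)\}$, with $\g{S}^\dagger$/$\g{S}$ corrections where needed. Combined with the qubit permutation $\pi$ itself, which is a transversal relabeling, these generate logical maps.

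\textbf{Step 2: stabilizer preservation and logical action.} Using the earlier theorems (beginning with \cref{thm:PnQ_preserve_stb}), I will check that these gates preserve the stabilizer group. I will then compute the logical action of each gate on the monomial basis. The expected form of a fold-transversal phase-type gate is $\overline{X}_a\mapsto \overline{X}_a\prod_b\overline{Z}_b^{M_{ab}}$, where $M$ is a symmetric binary matrix determined by weight-parity (triple intersection) counts of monomial supports under $\pi$. The choice of $\pi$ will be made so that $M$ is diagonal or supported on specific pairs. Composing several such gates, and conjugating by permutation automorphisms that permute the monomials transitively, then yields every addressable $\logg{S}_a$ and $\logg{C_{00}Z}_{ab}$ for certain pairs. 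This is the content of \cref{cor:addr_S_CZ}.

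\textbf{Step 3: remaining addressable gates.} For an addressable Hadamard, I write $\logg{H}_a$ using $\g{H}^{\otimes n}$ sandwiched between phase-type gates. The single-qubit identity $H = SHS\,H\,SHS$-type relations, applied as global $\overline{H}$ on all qubits with $\overline{S}$ applied to all qubits except $a$, realize $\logg{H}_a$ up to Paulis on the other qubits. This works because $(\overline{H}\,\overline{S})^3\propto I$ while a target with no $\overline{S}$ simply sees $\overline{H}$; I will pick the exact word so the untouched qubits see the identity. Addressable $\logg{SW}_{ab}$ follows from three $\logg{CNOT}$s, each built as $\logg{H}_b\logg{C_{00}Z}_{ab}\logg{H}_b$ for the available pairs. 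Swaps along available pairs can then be composed into arbitrary transpositions, provided the graph of available $\logg{C_{00}Z}$ pairs is connected, and conjugating by swaps transports $\logg{C_{00}Z}$ to every pair (\cref{thm:addr_SW_CZ}). With addressable $\logg{H}$, $\logg{S}$ and $\logg{C_{00}Z}$ on all qubits and pairs, the full group $\overline{C}_k$ is generated.

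\textbf{Main obstacle.} The hard part is Step 2: computing the logical action of the fold-transversal gates exactly and showing that the resulting matrices $M$ span enough to isolate single-qubit $\logg{S}$ and to make the pair-graph connected. I would first try reflection-type automorphisms $x_i\leftrightarrow x_j$ and translations $x\mapsto x+e_i$. For these, the triple-intersection counts reduce to parities of $2^{m-|A\cup B\cup C|}$ restricted to fixed subcubes. I would then check by induction on $m$, together with the monomial symmetry under $S_m$, that every monomial can be singled out.
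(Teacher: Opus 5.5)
Your outline matches the paper's architecture: addressable $\logg{S}$ and some $\logg{C_{00}Z}$ come from phase-type fold-transversal gates, $\logg{H}$ comes from $\g{H}^{\otimes n}$ plus $\logg{S}$, $\logg{SW}$ comes from three $\logg{C_{00}Z}$/$\logg{H}$ layers, swaps are chained along $m/2$-subsets differing in one index, and the $\{\logg{H},\logg{S},\logg{C_{11}Z}\}$ generating set finishes the argument. If you simply invoke \cref{cor:addr_S_CZ,thm:addr_H,thm:addr_SW_CZ}, this reduces to the paper's proof of \cref{thm:full_Clifford}.

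\textbf{Gap in Step 2.} As a plan for establishing those ingredients, Step 2 has a genuine gap. The step that carries all the content is producing one addressable $\logg{S}$ and an addressable $\logg{C_{00}Z}$. You leave it open, and the route you propose for it cannot work. Take $m\ge 4$ and let $\pi$ be a reflection $P(i,j)$ or a translation $x\mapsto x+e_i$. Then $\mathrm{wt}(\mathbf{v}_B\wedge M_\pi\mathbf{v}_B)\in\{0,2^{m/2-1},2^{m/2}\}$, which is always even. Also, the logical $\g{Z}$-part of the image of $\logg{X}(B)$ never contains $\logg{Z}(B)$, since it is $\logg{Z}(B^{\mathrm c})$ or $\logg{Z}((B\triangle\{i,j\})^{\mathrm c})$. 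Hence $\g{U_P}(\pi)$ acts as a product of $\logg{C_{11}Z}$/$\logg{C_{00}Z}$ gates, i.e.\ a real matrix; for the translation it is just the action of $\g{U_P}(e)$. Since $\g{H}^{\otimes n}$ and swap-type relabelings are also real, every composition or conjugation stays in the real Clifford group. So no $\logg{S}$ can ever be isolated, and neither $S_m$-transitivity nor induction on $m$ changes this.

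\textbf{Missing ideas.} Two things are needed.
\begin{enumerate}
\item Use the transvections $Q(i,j)\colon\mathbf{v}_i\mapsto\mathbf{v}_i+\mathbf{v}_j$ and their disjoint products $Q(K)$. When no pair of $K$ lies inside $B$, $\mathrm{wt}(\mathbf{v}_B\wedge M_{Q(K)}\mathbf{v}_B)=2^{m/2-|B\cap F_1(K)|}$. An odd phase therefore occurs exactly when $B=F_1(K)$ and $|K|=m/2$.
\item Take the inclusion--exclusion product $\prod_{L\subseteq K}\g{U_P}(Q(L))$. Expand $M_{Q(L)}=\sum_{L'\subseteq L}R(L')$, where $R(L')\mathbf{v}_B$ is either $\mathbf{v}_{(B\cup F_2(L'))\setminus F_1(L')}$ or $\mathbf{0}$. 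Each logical gate attached to a proper subset $M\subsetneq K$ then occurs $2^{|K\setminus M|}$ times and cancels. What remains is exactly $\logg{S}^{(\dagger)}(F_1(K))$ when $|K|=m/2$, and a single $\logg{C_{00}Z}(B,B')$ with $|B\cap B'|=m/2-1$ when $|K|=m/2-1$.
\end{enumerate}
This gives every $\logg{S}(B)$ directly by choosing $F_1(K)=B$, with no transitivity argument needed.

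\textbf{Problem in Step 3.} Your Hadamard construction is also justified incorrectly. You treat $\g{H}^{\otimes n}$ as a bare global $\logg{H}$, but it is $\logg{H}^{\otimes k}$ followed by $\prod_i\logg{SW}(i,i+k/2)$, which pairs $B$ with $B^{\mathrm c}$. Pushing single-qubit gates through the swaps, any word containing an odd number of copies of $\g{H}^{\otimes n}$ equals $\logg{SW}(B,B^{\mathrm c})$ times single-qubit gates on every pair. It therefore can never be an addressable $\logg{H}$. Your $(\logg{H}\logg{S})^3$ word with $\logg{S}$ on all qubits but $a$ leaves a residual swap on every pair.

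The fix is to exploit the pairing rather than fight it. Since $\g{H}^{\otimes n}\logg{S}(B)\,\g{H}^{\otimes n}=(\logg{H}\logg{S}\logg{H})(B^{\mathrm c})$, one has $\logg{H}(B^{\mathrm c})\propto\logg{S}(B^{\mathrm c})\,\g{H}^{\otimes n}\logg{S}(B)\,\g{H}^{\otimes n}\logg{S}(B^{\mathrm c})$. This needs only two addressable $\logg{S}$ gates.

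The rest is fine. Three $\logg{H}_b\logg{C_{00}Z}\logg{H}_b$ layers give a swap up to Paulis, and the one-index-change graph on $m/2$-subsets is connected.
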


\begin{theorem}[The circuit depth of an arbitrary logical Clifford gate, informal] \label{thm:depth fully addressable Clifford_sim}
For any \codepar{n,k,d} code that admits the full logical Clifford group $\overline{C}_k$ from transversal and fold-transversal gates comprising single- and two-qubit physical Clifford gates, there exists a logical Clifford gate $\logg{G} \in \overline{C}_k$ such that any implementation of $\logg{G}$ by transversal and fold-transversal gates has depth $\Omega(\frac{k^2}{n\log n})$.
\end{theorem}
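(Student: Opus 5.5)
This is a counting argument. We compare the size of the logical Clifford group with the number of distinct circuits of depth $D$ built from transversal and fold-transversal layers, and choose $D$ so that the circuits cannot cover the group.

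\textbf{Lower bound on the group.} Up to Paulis and phases, the logical Clifford group satisfies $|\overline{C}_k| \ge |\mathrm{Sp}(2k,\mathbb{F}_2)| = 2^{k^2}\prod_{j=1}^{k}(4^j-1) \ge 2^{2k^2}/c$ for a constant $c$. Hence $\log_2|\overline{C}_k| = \Theta(k^2)$. Only the symplectic quotient is needed, since distinct symplectic images must come from distinct physical circuits.

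\textbf{Upper bound on depth-one layers.} Each transversal or fold-transversal gate consists of single- and two-qubit physical Clifford gates acting on disjoint supports. Such a layer is specified by two pieces of data. The first is a partial matching of the $n$ qubits into pairs, of which there are at most $n^{n} = 2^{n\log_2 n}$ choices. The second is a label for each qubit and each pair, drawn from the finite single-qubit Clifford group (24 elements modulo phase) or the finite two-qubit Clifford group (11520 elements). This gives at most $2^{O(n)}$ label choices. Altogether the number $L$ of distinct depth-one physical layers satisfies $\log_2 L = O(n\log n)$. This count includes every layer that could possibly appear, so it is an upper bound on the fold-transversal gates available for any code. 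A depth-$D$ implementation is a product of $D$ such layers, so at most $L^D$ physical circuits exist, and hence at most $L^D$ logical Cliffords are realizable at depth $\le D$. Identity layers are included in this count, so it covers all depths up to $D$.

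\textbf{Conclusion.} If every $\logg{G}\in\overline{C}_k$ were implementable in depth $\le D$, then $L^D \ge |\overline{C}_k|$, i.e.\ $D\cdot O(n\log n) \ge \Omega(k^2)$. Therefore some $\logg{G}$ requires depth $D=\Omega\!\left(\frac{k^2}{n\log n}\right)$.

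The only step needing care is the definition of depth. If a "fold-transversal gate" is itself allowed to have constant physical depth $c$, the layer count becomes $L^{c}$, which only changes constants. A further subtlety is that a fold-transversal gate may also be composed with a qubit permutation. If arbitrary permutations are counted as free, they contribute $n! = 2^{O(n\log n)}$ per layer, which still preserves the same bound. So in the main step I would state explicitly that each layer is a permutation (possibly trivial) followed by disjoint 1- and 2-qubit Cliffords, and that its description length is $O(n\log n)$ bits. Finally, $k=\omega(\sqrt{n\log n})$ makes the bound superconstant, and $k=\binom{m}{m/2}\approx n/\sqrt{\pi\log_2(n)/2}$ satisfies this, which gives the remark about $\mathrm{QRM}(m)$.
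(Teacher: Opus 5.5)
Your proposal is correct and follows the same counting argument as the paper. The paper likewise bounds the number of depth-one layers of disjoint constant-support physical Cliffords by $2^{O(n\log n)}$ and compares $D$ times the logarithm of that count with $\log|\overline{C}_k| = \Theta(k^2)$. Your use of $|\mathrm{Sp}(2k,\mathbb{F}_2)|$ and of the $n^n$ bound on partial matchings are only minor variants of the paper's partition count and Clifford-group order; in fact you state the needed asymptotics more carefully, as $\log|\overline{C}_k|=\Theta(k^2)$, rather than the paper's $|C_n|=\Theta(2^{n^2})$.
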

\setcounter{theorem}{0}

Our results imply that on any quantum Reed--Muller code in the studied family, any logical Clifford gate can be implemented by a sequence of transversal and fold-transversal gates, although the depth might not be constant in the block length $n$. If transversal and fold-transversal gates are assumed to be fault tolerant in a given noise model, the overall circuit for any logical Clifford gate can be made fault tolerant by interleaving FTEC gadgets into the sequence. This provides a way to fault-tolerantly implement any logical Clifford circuit on the code without requiring ancilla qubits. We note that although our proofs are based on one particular symplectic basis (which defines the logical $\g{X}$ and $\g{Z}$ operators of the code), both of the main theorems hold for any symplectic basis; see \cref{subsubsec:full_Clifford} for discussions.

Our results can be viewed as a generalization of the logical controlled-$\g{Z}$ gate of the \codepar{4,2,2} code, a result that was introduced in Gottesman's proposal of small experimental schemes to demonstrate quantum fault tolerance \cite{Gottesman16}.
In fact, the \codepar{4,2,2} code is the smallest code in the code family focused in this work. As this family allows implementation of any addressable Clifford gate by a sequence of transversal and fold-transversal gates without ancilla qubits, and as $k$ grows near-linearly in $n$ up to a $1/\sqrt{\log n}$ factor, we believe that other codes in this family would also be good candidates for demonstrating fault-tolerant quantum computation with high-rate codes on near- or mid-term quantum devices.

This paper is organized as follows. In \cref{sec:CRM_QRM}, we review classical Reed--Muller codes, discuss some code properties and code automorphisms that will be used in our construction of fold-transversal gates, and review a construction of quantum Reed--Muller codes. In \cref{sec:logical_for_QRM}, we propose a construction of fold-transversal gates for quantum Reed--Muller codes from automorphisms of classical Reed--Muller codes, then construct addressable $\logg{H}$, $\logg{S}$, $\logg{C_{00}Z}$, and $\logg{SW}$ gates from a sequence of transversal and fold-transversal gates, which in turn generate the full logical Clifford group. In \cref{sec:limitations}, we discuss fundamental limitations on circuit depth for realizing an arbitrary logical Clifford gate. We then discuss and conclude our work in \cref{sec:discussion}.

\section{Classical and quantum Reed--Muller codes} \label{sec:CRM_QRM}

In this section, we provide some mathematical definitions, propositions, and theorems relevant to classical and quantum Reed--Muller codes. These notations and statements will be used in our main construction of logical gates for quantum Reed--Muller codes in \cref{sec:logical_for_QRM}.

\subsection{Definition of classical Reed--Muller codes} \label{subsec:CRM_def}

Given an integer $m \geq 0$, we let the number of physical bits be $n=2^m$. We label each physical bit with a number in base 2, where $x_{m}x_{m-1}\cdots x_{1}$ is equivalent to $x_{m}2^{m-1}+x_{m-1}2^{m-2}+\cdots+x_{1}2^{0}$ in base 10. We refer to the label of each bit as \emph{bit index}, which runs from $0$ to $2^m-1$ in base 10. Next, we define basis vectors $\mathbf{v}_i \in \mathbb{Z}_2^n$ where $i \in [m] = \{1,\dots,m\}$ as follows. $\mathbf{v}_i$ is a binary vector in which the bits labeled by $x_{m}\cdots x_{i+1} 1 x_{i-1} \cdots x_{1}$ have value one, and the bits labeled by $x_{m}\cdots x_{i+1} 0 x_{i-1} \cdots x_{1}$ have value zero, where $x_{j \neq i} \in \{0,1\}$. We refer to the subscript $i$ of such $\mathbf{v}_i$ as \emph{basis vector index}. We also define $\mathbf{1}$ to be the vector of all ones. Adding $\mathbf{1}$ to any vector will flip the locations of zeros and ones. For example, $\mathbf{v}_i+\mathbf{1}$ has ones at bits $x_{m}\cdots x_{i+1} 0 x_{i-1} \cdots x_{1}$ and has zeros at bits $x_{m}\cdots x_{i+1} 1 x_{i-1} \cdots x_{1}$ (where $x_{j \neq i} \in \{0,1\}$).

Let $\mathbf{w}_1,\mathbf{w}_2 \in \mathbb{Z}_2^n$ be two binary vectors. The wedge product $\mathbf{w}_1\wedge \mathbf{w}_2$ (also denoted by $\mathbf{w}_1\mathbf{w}_2$) is obtained by applying bitwise AND between $\mathbf{w}_1$ and $\mathbf{w}_2$. For example, let $\mathbf{v}_{i_1},\mathbf{v}_{i_2}$ be two basis vectors with $i_1 < i_2$. Then, $\mathbf{v}_{i_1}\mathbf{v}_{i_2}$ is the vector that has ones at bits $x_{m}\cdots x_{i_2+1} 1 x_{i_2-1} \cdots x_{i_1+1} 1 x_{i_1-1} \cdots x_{1}$ (where $x_{j} \in \{0,1\}, j \neq i_1,i_2$) and zeros at the other bits.

We can also define the wedge product of multiple basis vectors given a set of basis vector indices as follows. 
\begin{definition} \label{def:vector_from_set}
    Let $A = \{i_1,...,i_c\} \subseteq [m]$ be a set of basis vector indices. The vector $\mathbf{v}_A \in \mathbb{Z}_2^{2^m}$ is defined as
\begin{equation}
    \mathbf{v}_A = \bigwedge_{a \in A} \mathbf{v}_{a} = \mathbf{v}_{i_1} \wedge \mathbf{v}_{i_2} \wedge \cdots \wedge \mathbf{v}_{i_c}, \label{eq:vector_on_set}
\end{equation}
and we define $\mathbf{v}_\emptyset = \mathbf{1}$, where $\emptyset$ is the empty set.
\end{definition}

Let an $[n,k_\up{c},d_\up{c}]$ code denote a classical binary linear code that encodes $k_\up{c}$ logical bits into $n$ physical bits and has code distance $d_\up{c}$. A classical Reed--Muller code \cite{MS77} can be defined as follows.
\begin{definition} \label{def:CRM}
    Let $r,m$ be integers satisfying $0 \leq r \leq m$. A \emph{classical Reed--Muller code} of order $r$ and block length $n=2^m$, denoted by $\mathrm{RM}(r,m)$, is the $\left[2^m, \sum_{i=0}^{r} {m \choose i}, 2^{m-r}\right]$ code generated by all vectors $\mathbf{v}_A$ such that $0 \leq |A| \leq r$.
\end{definition}

\begin{example}
    Let $m=4$. All possible $\mathbf{v}_{A}$ with $0 \leq |A|\leq 4$ are listed in \cref{tab:all_vA_m=4}.
    The classical Reed--Muller codes $\textrm{RM}(r,4)$ are generated by the following:
    \begin{equation}
        \begin{tabular}{| c | c |}
            \hline
            Code & Generators \\
            \hline
            $\textrm{RM}(0,4)$ & Row 1 \\
            \hline
            $\textrm{RM}(1,4)$ & Rows \numrange{1}{5} \\
            \hline
            $\textrm{RM}(2,4)$ & Rows \numrange{1}{11} \\
            \hline
            $\textrm{RM}(3,4)$ & Rows \numrange{1}{15} \\
            \hline
            $\textrm{RM}(4,4)$ & Rows \numrange{1}{16} \\
            \hline
        \end{tabular} \nonumber
    \end{equation}
    
    \begin{table}[htbp]
    \centering        
    \begin{tabular}{| c | c | c  c  c  c  c  c  c  c  c  c  c  c  c  c  c  c |}
         \hline
         Row & $A$ & \multicolumn{16}{|c|}{$\mathbf{v}_{A}$} \\
         \hline
         1 & $\emptyset$ & $1$ & $1$ & $1$ & $1$ & $1$ & $1$ & $1$ & $1$ & $1$ & $1$ & $1$ & $1$ & $1$ & $1$ & $1$ & $1$ \\
         \hline
         2 & $\{1\}$ & $0$ & $1$ & $0$ & $1$ & $0$ & $1$ & $0$ & $1$ & $0$ & $1$ & $0$ & $1$ & $0$ & $1$ & $0$ & $1$ \\
         \hline
         3 & $\{2\}$ & $0$ & $0$ & $1$ & $1$ & $0$ & $0$ & $1$ & $1$ & $0$ & $0$ & $1$ & $1$ & $0$ & $0$ & $1$ & $1$ \\
         \hline
         4 & $\{3\}$ & $0$ & $0$ & $0$ & $0$ & $1$ & $1$ & $1$ & $1$ & $0$ & $0$ & $0$ & $0$ & $1$ & $1$ & $1$ & $1$ \\
         \hline
         5 & $\{4\}$ & $0$ & $0$ & $0$ & $0$ & $0$ & $0$ & $0$ & $0$ & $1$ & $1$ & $1$ & $1$ & $1$ & $1$ & $1$ & $1$ \\
         \hline
         6 & $\{1,2\}$ & $0$ & $0$ & $0$ & $1$ & $0$ & $0$ & $0$ & $1$ & $0$ & $0$ & $0$ & $1$ & $0$ & $0$ & $0$ & $1$ \\
         \hline
         7 & $\{1,3\}$ & $0$ & $0$ & $0$ & $0$ & $0$ & $1$ & $0$ & $1$ & $0$ & $0$ & $0$ & $0$ & $0$ & $1$ & $0$ & $1$ \\
         \hline
         8 & $\{1,4\}$ & $0$ & $0$ & $0$ & $0$ & $0$ & $0$ & $0$ & $0$ & $0$ & $1$ & $0$ & $1$ & $0$ & $1$ & $0$ & $1$ \\
         \hline
         9 & $\{2,3\}$ & $0$ & $0$ & $0$ & $0$ & $0$ & $0$ & $1$ & $1$ & $0$ & $0$ & $0$ & $0$ & $0$ & $0$ & $1$ & $1$ \\
         \hline
         10 & $\{2,4\}$ & $0$ & $0$ & $0$ & $0$ & $0$ & $0$ & $0$ & $0$ & $0$ & $0$ & $1$ & $1$ & $0$ & $0$ & $1$ & $1$ \\
         \hline
         11 & $\{3,4\}$ & $0$ & $0$ & $0$ & $0$ & $0$ & $0$ & $0$ & $0$ & $0$ & $0$ & $0$ & $0$ & $1$ & $1$ & $1$ & $1$ \\
         \hline
         12 & $\{1,2,3\}$ & $0$ & $0$ & $0$ & $0$ & $0$ & $0$ & $0$ & $1$ & $0$ & $0$ & $0$ & $0$ & $0$ & $0$ & $0$ & $1$ \\
         \hline
         13 & $\{1,2,4\}$ & $0$ & $0$ & $0$ & $0$ & $0$ & $0$ & $0$ & $0$ & $0$ & $0$ & $0$ & $1$ & $0$ & $0$ & $0$ & $1$ \\
         \hline
         14 & $\{1,3,4\}$ & $0$ & $0$ & $0$ & $0$ & $0$ & $0$ & $0$ & $0$ & $0$ & $0$ & $0$ & $0$ & $0$ & $1$ & $0$ & $1$ \\
         \hline
         15 & $\{2,3,4\}$ & $0$ & $0$ & $0$ & $0$ & $0$ & $0$ & $0$ & $0$ & $0$ & $0$ & $0$ & $0$ & $0$ & $0$ & $1$ & $1$ \\
         \hline
         16 & $\{1,2,3,4\}$ & $0$ & $0$ & $0$ & $0$ & $0$ & $0$ & $0$ & $0$ & $0$ & $0$ & $0$ & $0$ & $0$ & $0$ & $0$ & $1$ \\
         \hline
         \parbox[t]{3.5mm}{\multirow{4}{*}{\rotatebox[origin=c]{90}{\textrm{bit indices}}}} & $x_4$ & \multicolumn{1}{|c|}{$0$} & \multicolumn{1}{|c|}{$0$} & \multicolumn{1}{|c|}{$0$} & \multicolumn{1}{|c|}{$0$} & \multicolumn{1}{|c|}{$0$} & \multicolumn{1}{|c|}{$0$} & \multicolumn{1}{|c|}{$0$} & \multicolumn{1}{|c|}{$0$} & \multicolumn{1}{|c|}{$1$} & \multicolumn{1}{|c|}{$1$} & \multicolumn{1}{|c|}{$1$} & \multicolumn{1}{|c|}{$1$} & \multicolumn{1}{|c|}{$1$} & \multicolumn{1}{|c|}{$1$} & \multicolumn{1}{|c|}{$1$} & \multicolumn{1}{|c|}{$1$} \\
         & $x_3$ & \multicolumn{1}{|c|}{$0$} & \multicolumn{1}{|c|}{$0$} & \multicolumn{1}{|c|}{$0$} & \multicolumn{1}{|c|}{$0$} & \multicolumn{1}{|c|}{$1$} & \multicolumn{1}{|c|}{$1$} & \multicolumn{1}{|c|}{$1$} & \multicolumn{1}{|c|}{$1$} & \multicolumn{1}{|c|}{$0$} & \multicolumn{1}{|c|}{$0$} & \multicolumn{1}{|c|}{$0$} & \multicolumn{1}{|c|}{$0$} & \multicolumn{1}{|c|}{$1$} & \multicolumn{1}{|c|}{$1$} & \multicolumn{1}{|c|}{$1$} & \multicolumn{1}{|c|}{$1$} \\
         & $x_2$ & \multicolumn{1}{|c|}{$0$} & \multicolumn{1}{|c|}{$0$} & \multicolumn{1}{|c|}{$1$} & \multicolumn{1}{|c|}{$1$} & \multicolumn{1}{|c|}{$0$} & \multicolumn{1}{|c|}{$0$} & \multicolumn{1}{|c|}{$1$} & \multicolumn{1}{|c|}{$1$} & \multicolumn{1}{|c|}{$0$} & \multicolumn{1}{|c|}{$0$} & \multicolumn{1}{|c|}{$1$} & \multicolumn{1}{|c|}{$1$} & \multicolumn{1}{|c|}{$0$} & \multicolumn{1}{|c|}{$0$} & \multicolumn{1}{|c|}{$1$} & \multicolumn{1}{|c|}{$1$} \\
         & $x_1$ & \multicolumn{1}{|c|}{$0$} & \multicolumn{1}{|c|}{$1$} & \multicolumn{1}{|c|}{$0$} & \multicolumn{1}{|c|}{$1$} & \multicolumn{1}{|c|}{$0$} & \multicolumn{1}{|c|}{$1$} & \multicolumn{1}{|c|}{$0$} & \multicolumn{1}{|c|}{$1$} & \multicolumn{1}{|c|}{$0$} & \multicolumn{1}{|c|}{$1$} & \multicolumn{1}{|c|}{$0$} & \multicolumn{1}{|c|}{$1$} & \multicolumn{1}{|c|}{$0$} & \multicolumn{1}{|c|}{$1$} & \multicolumn{1}{|c|}{$0$} & \multicolumn{1}{|c|}{$1$} \\
         \hline
    \end{tabular}
    \caption{The list of all binary vectors $\mathbf{v}_{A}$ defined over each subset $A \subseteq \{1,2,3,4\}$. Classical Reed--Muller codes $\textrm{RM}(r,4)$ where $r\in \{0,1,2,3,4\}$ can be generated by these vectors.}
        \label{tab:all_vA_m=4}
    \end{table}
\end{example}

The following properties of classical Reed--Muller codes \cite{MS77} will be useful in the construction of quantum Reed--Muller codes:
\begin{enumerate}
    \item Let $r_1,r_2 \leq m$ be positive integers. $\mathrm{RM}(r_1,m) \subseteq \mathrm{RM}(r_2,m)$ if and only if $r_1 \leq r_2$.
    \item The dual code of $\mathrm{RM}(r,m)$ is $\mathrm{RM}(r,m)^{\perp} = \mathrm{RM}(m-r-1,m)$.
\end{enumerate}

Let the Hamming weight $\mathrm{wt}(\mathbf{w})$ of a binary vector $\mathbf{w}$ be the number of bits of $\mathbf{w}$ with value one,
and define the dot product between two binary vectors $\mathbf{w} = (w_0,\dots,w_{n-1})$ and $\mathbf{z} = (z_0,\dots,z_{n-1})$ as follows:
\begin{equation}
    \mathbf{w}\cdot\mathbf{z} = (w_0z_0+\dots +w_{n-1}z_{n-1})\textrm{\;mod\;} 2.
\end{equation}
The following proposition can be obtained.
\begin{proposition} \label{prop:Hamming_weight}
    Let $A,B \subseteq [m]$ be sets of basis vector indices.
    \begin{enumerate}
        \item For any $A,B$ such that $A \cap B = \emptyset$, the Hamming weight of $\left(\bigwedge_{a \in A} \mathbf{v}_a \right) \wedge \left(\bigwedge_{b \in B} (\mathbf{v}_b+\mathbf{1}) \right)$ is $2^{m-|A|-|B|}$;
        in the special case where $B=\emptyset$, we have that $\mathrm{wt}(\mathbf{v}_A)=2^{m-|A|}$;
        \item For any $A,B$, $\mathbf{v}_A \cdot \mathbf{v}_B = \mathrm{wt}(\mathbf{v}_{A \cup B})\textrm{\;mod\;} 2$;
        \item For any $A,B$, $\mathbf{v}_A \cdot \mathbf{v}_B = 1$ if and only if $A\cup B = [m]$.
    \end{enumerate}
\end{proposition}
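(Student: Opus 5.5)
The plan is to interpret every vector as the indicator of a set of bit indices and to count those indices directly. Identify bit $x_m\cdots x_1$ with a point $x\in\mathbb{Z}_2^m$. By definition, $\mathbf{v}_a$ is the indicator of $\{x : x_a=1\}$, and $\mathbf{v}_b+\mathbf{1}$ is the indicator of $\{x : x_b=0\}$. The wedge product is bitwise AND, so it corresponds to intersection of these sets.

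For item 1, the vector $\left(\bigwedge_{a\in A}\mathbf{v}_a\right)\wedge\left(\bigwedge_{b\in B}(\mathbf{v}_b+\mathbf{1})\right)$ is then the indicator of $\{x : x_a=1\ \forall a\in A,\ x_b=0\ \forall b\in B\}$. Since $A\cap B=\emptyset$, these constraints fix exactly $|A|+|B|$ distinct coordinates and leave the remaining $m-|A|-|B|$ coordinates free. The set therefore has $2^{m-|A|-|B|}$ elements, which is the claimed weight. Taking $B=\emptyset$ gives $\mathrm{wt}(\mathbf{v}_A)=2^{m-|A|}$; this also holds for $A=\emptyset$, since $\mathbf{v}_\emptyset=\mathbf{1}$ has weight $2^m$.

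For item 2, I would use the elementary fact that $\mathbf{w}\cdot\mathbf{z}=\mathrm{wt}(\mathbf{w}\wedge\mathbf{z}) \bmod 2$. This holds because $w_iz_i=1$ exactly when both bits are one. It then remains to check that $\mathbf{v}_A\wedge\mathbf{v}_B=\mathbf{v}_{A\cup B}$. This follows because AND is idempotent, associative and commutative, so indices shared by $A$ and $B$ collapse. The edge cases where $A$ or $B$ is empty are covered because $\mathbf{1}$ is the identity for AND.

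For item 3, combine items 1 and 2 to get $\mathbf{v}_A\cdot\mathbf{v}_B = 2^{m-|A\cup B|} \bmod 2$. This quantity equals $1$ if and only if $m-|A\cup B|=0$. Because $A\cup B\subseteq[m]$, that condition is equivalent to $A\cup B=[m]$.

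There is no genuine obstacle in this argument. The only point needing care is item 1: its count relies on $A\cap B=\emptyset$, so I will state explicitly that disjointness is what makes the $|A|+|B|$ fixed coordinates distinct and keeps the constraints consistent.
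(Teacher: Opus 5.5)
Your proposal is correct and follows essentially the same argument as the paper: counting the free coordinates for item 1, using $\mathbf{w}\cdot\mathbf{z}=\mathrm{wt}(\mathbf{w}\wedge\mathbf{z}) \bmod 2$ together with the idempotence $\mathbf{v}_i\wedge\mathbf{v}_i=\mathbf{v}_i$ for item 2, and using the parity of $2^{m-|A\cup B|}$ for item 3. Your explicit handling of the empty-set cases is a small addition that the paper leaves implicit.
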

\begin{proof}
    (1) Suppose that $A \cap B = \emptyset$. $\left(\bigwedge_{a \in A} \mathbf{v}_a \right) \wedge \left(\bigwedge_{b \in B} (\mathbf{v}_b+\mathbf{1}) \right)$ is the vector with ones only at the bits whose label $x_m \cdots x_1$ satisfies the following conditions:
    \begin{enumerate}
        \item $x_a=1$ for all $a \in A$,
        \item $x_b=0$ for all $b \in B$,
        \item $x_j$ is unrestricted (can be either 0 or 1) if $j \notin A\cup B$.
    \end{enumerate}
    There are $2^{m-|A|-|B|}$ such bits since the number of unrestricted $x_j$ is $m-|A|-|B|$. Thus, the Hamming weight of $\left(\bigwedge_{a \in A} \mathbf{v}_a \right) \wedge \left(\bigwedge_{b \in B} (\mathbf{v}_b+\mathbf{1}) \right)$ is $2^{m-|A|-|B|}$.
    
    (2) By the definition of the dot product, $\mathbf{w}\cdot\mathbf{z}$ is equal to the number of bits that have value one on both $\mathbf{w}$ and $\mathbf{z}$ modulo 2, which is also equal to $\mathrm{wt}(\mathbf{w}\wedge\mathbf{z}) \mod 2$. Observe that for any $\mathbf{v}_i$, $\mathbf{v}_i \wedge \mathbf{v}_i = \mathbf{v}_i$. Therefore, $\mathbf{v}_A \wedge \mathbf{v}_B = \mathbf{v}_{A\cup B}$, which implies that $\mathbf{v}_A \cdot \mathbf{v}_B = \mathrm{wt}(\mathbf{v}_{A \cup B})\textrm{\;mod\;} 2$.

    (3) If $A\cup B = [m]$, then $\mathrm{wt}(\mathbf{v}_{A \cup B}) = 2^{m-m}=1$ by (1), which implies that $\mathbf{v}_A \cdot \mathbf{v}_B = 1$ by (2). In contrast, if $A\cup B \subsetneq [m]$, then the size of $A\cup B$ is smaller than $m$. By (1), $\mathrm{wt}(\mathbf{v}_{A \cup B})=2^{m-|A\cup B|}$ is some even number, which implies that $\mathbf{v}_A \cdot \mathbf{v}_B = 0$ by (2).
\end{proof}

\subsection{Automorphisms of classical Reed--Muller codes} \label{subsec:CRM_aut}

Let $\pi: \{0,\dots,2^m-1\} \rightarrow \{0,\dots,2^m-1\}$ be a permutation of $n=2^m$ bits (where $n$ is the block length) and let $M_\pi$ be the $n \times n$ permutation matrix of $\pi$. The elements of $M_\pi$ are as follows.
\begin{equation}
    (M_\pi)_{q,p} = \begin{cases} 
   1 & \text{if } q = \pi(p), \\
   0 & \text{otherwise}.
  \end{cases} \label{eq:perm_matrix}
\end{equation}
Let $\mathbf{w}$ be an $n$-bit binary column vector. Under the permutation $\pi$, $\mathbf{w}$ is mapped to $M_\pi\mathbf{w}$; the $\pi(p)$-th element of $M_\pi\mathbf{w}$ is the same as the $p$-th element of $\mathbf{w}$.

Given a classical binary code $\mathcal{C} = \left\langle \mathbf{w}_1, \dots, \mathbf{w}_c \right\rangle$, we say that $\pi$ is an \emph{automorphism} of $\mathcal{C}$ if $\mathcal{C}=\left\langle M_\pi\mathbf{w}_1, \dots, M_\pi\mathbf{w}_c\right\rangle$; i.e., the code space of $\mathcal{C}$ is invariant under $\pi$. In this work, we focus on the following automorphisms of classical Reed--Muller codes.
\begin{definition} \label{def:PQ_def}
    Consider a classical Reed--Muller code $\textrm{RM}(r,m)$ where $r,m$ are integers satisfying $0 \leq r \leq m$. Let $i,j \in [m]$, $i \neq j$. Automorphisms $P(i,j)$ and $Q(i,j)$ of $\textrm{RM}(r,m)$ are defined as follows:
    \begin{enumerate}
    \item $P(i,j)$ maps $\mathbf{v}_i$ to $\mathbf{v}_j$, maps $\mathbf{v}_j$ to $\mathbf{v}_i$, and maps $\mathbf{v}_l$ to itself if $l \neq i,j$. 
    \item $Q(i,j)$ maps $\mathbf{v}_i$ to $\mathbf{v}_i+\mathbf{v}_j$, and maps $\mathbf{v}_l$ to itself if $l \neq i$ (including $\mathbf{v}_j$). 
\end{enumerate}
\end{definition}
Note that $P(i,j) = P(j,i)$, but $Q(i,j) \neq Q(j,i)$. Both $P(i,j)$ and $Q(i,j)$ satisfy $P(i,j)^2 = Q(i,j)^2 = e$ for any $i,j$, where $e$ is the trivial permutation. Therefore, when $\pi$ is either $P(i,j)$ or $Q(i,j)$, its action can be physically realized by applying a physical swap gate to each pair of bits $\{p,\pi(p)\}$ satisfying $p < \pi(p)$, and applying nothing to any bit $q$ satisfying $q=\pi(q)$.\footnote{Note that in a quantum analogue where qubits are permuted according to a permutation $\pi$ that satisfies $\pi^2=e$, the unitary matrix that describes the operation of physical swap gates is a $2^n \times 2^n$ matrix and is not the same as the permutation matrix $M_{\pi}$ (which is an $n \times n$ matrix).}

The actions of $P(i,j)$ and $Q(i,j)$ can be explicitly described in terms of bit indices. In what follows, we assume that $i<j$ for simplicity, but a similar argument is also applicable in the case $i>j$. Observe the following facts.
\begin{enumerate}
    \item $\mathbf{v}_i$ has ones at bits $x_{m}\cdots x_{i+1} 1 x_{i-1} \cdots x_{1}$ and has zeros at bits $x_{m}\cdots x_{i+1} 0 x_{i-1} \cdots x_{1}$.
    \item $\mathbf{v}_j$ has ones at bits $x_{m}\cdots x_{j+1} 1 x_{j-1} \cdots x_{1}$ and has zeros at bits $x_{m}\cdots x_{j+1} 0 x_{j-1} \cdots x_{1}$.
    \item $\mathbf{v}_i+\mathbf{v}_j$ has ones at bits $x_{m}\cdots x_{j+1} 0 x_{j-1} \cdots x_{i+1} 1 x_{i-1} \cdots x_{1}$ and bits $x_{m}\cdots x_{j+1} 1 x_{j-1} \cdots x_{i+1} 0 x_{i-1} \cdots x_{1}$, and has zeros at bits $x_{m}\cdots x_{j+1} 0 x_{j-1} \cdots x_{i+1} 0 x_{i-1} \cdots x_{1}$ and bits $x_{m}\cdots x_{j+1} 1 x_{j-1} \cdots x_{i+1} 1 x_{i-1} \cdots x_{1}$.
\end{enumerate}
(When unspecified, the value of $x_l$ where $l \in [m]$ can be either 0 or 1.) The actions of $P(i,j)$ and $Q(i,j)$ are as follows.
\begin{enumerate}
    \item $P(i,j)$ swaps each pair of bits $x_{m}\cdots x_{j+1} 0 x_{j-1} \cdots x_{i+1} 1 x_{i-1} \cdots x_{1}$ and $x_{m}\cdots x_{j+1} 1 x_{j-1} \cdots x_{i+1} 0 x_{i-1} \cdots x_{1}$ with the same $x_l$ where $l \neq i,j$. It acts trivially on bits $x_{m}\cdots x_{j+1} 0 x_{j-1} \cdots x_{i+1} 0 x_{i-1} \cdots x_{1}$ and bits $x_{m}\cdots x_{j+1} 1 x_{j-1} \cdots x_{i+1} 1 x_{i-1} \cdots x_{1}$.
    \item $Q(i,j)$ swaps each pair of bits $x_{m}\cdots x_{j+1} 1 x_{j-1} \cdots x_{i+1} 0 x_{i-1} \cdots x_{1}$ and $x_{m}\cdots x_{j+1} 1 x_{j-1} \cdots x_{i+1} 1 x_{i-1} \cdots x_{1}$ with the same $x_l$ where $l \neq i,j$. It acts trivially on bits $x_{m}\cdots x_{j+1} 0 x_{j-1} \cdots x_{1}$.
\end{enumerate}
The permutation matrix of $P(i,j)$ or $Q(i,j)$ can be constructed from the descriptions above by converting bit indices from base 2 to base 10 and applying \cref{eq:perm_matrix}.

\begin{example}
    Let $m=4$. The operations of automorphisms $P(1,2)$, $P(3,4)$, $Q(1,2)$, and $Q(3,4)$ are as follows.
    \begin{equation}
        \begin{tabular}{ c c c c c c }
             $P(1,2)$ & \textrm{swaps} & $0001 \leftrightarrow 0010$, & $0101 \leftrightarrow 0110$, & $1001 \leftrightarrow 1010$, & $1101 \leftrightarrow 1110$,\\
             $P(3,4)$ &  \textrm{swaps} & $0100 \leftrightarrow 1000$, & $0101 \leftrightarrow 1001$, & $0110 \leftrightarrow 1010$, & $0111 \leftrightarrow 1011$,\\
             $Q(1,2)$ & \textrm{swaps} & $0010 \leftrightarrow 0011$, & $0110 \leftrightarrow 0111$, & $1010 \leftrightarrow 1011$, & $1110 \leftrightarrow 1111$,\\
             $Q(3,4)$ & \textrm{swaps} & $1000 \leftrightarrow 1100$, & $1001 \leftrightarrow 1101$, & $1010 \leftrightarrow 1110$, & $1011 \leftrightarrow 1111$.\\
        \end{tabular} \nonumber
    \end{equation}
\end{example}

The following proposition determines the action of a permutation $\pi$ on any vector $\mathbf{v}_A$ where $A \subseteq [m]$.
\begin{proposition} \label{prop:permute_vA}
    Let $A = \{i_1,...,i_c\} \subseteq [m]$ be a set of basis vector indices and let $\pi: \{0,\dots,2^m-1\} \rightarrow \{0,\dots,2^m-1\}$ be a permutation of $n=2^m$ bits. Then,
    \begin{equation}
        M_\pi\mathbf{v}_A = \bigwedge_{a \in A} M_\pi\mathbf{v}_{a} = M_\pi\mathbf{v}_{i_1} \wedge M_\pi\mathbf{v}_{i_2} \wedge \cdots \wedge M_\pi\mathbf{v}_{i_c}.
    \end{equation}
\end{proposition}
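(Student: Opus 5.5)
The plan is to show that applying a bit permutation commutes with the bitwise AND, that is, $M_\pi(\mathbf{w}\wedge\mathbf{z}) = M_\pi\mathbf{w}\wedge M_\pi\mathbf{z}$ for any $\mathbf{w},\mathbf{z}\in\mathbb{Z}_2^n$. The statement then follows by induction on $|A|$, using \cref{def:vector_from_set} to write $\mathbf{v}_A$ as an iterated wedge product of the basis vectors $\mathbf{v}_a$.

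\textbf{Step 1: the two-vector case.} We work componentwise with \cref{eq:perm_matrix}. The $q$-th entry of $M_\pi\mathbf{w}$ is $\sum_p (M_\pi)_{q,p} w_p = w_{\pi^{-1}(q)}$, since exactly one $p$ satisfies $q=\pi(p)$. The $q$-th entry of $M_\pi(\mathbf{w}\wedge\mathbf{z})$ is therefore $(\mathbf{w}\wedge\mathbf{z})_{\pi^{-1}(q)} = w_{\pi^{-1}(q)} z_{\pi^{-1}(q)}$. This is precisely the AND of the $q$-th entries of $M_\pi\mathbf{w}$ and $M_\pi\mathbf{z}$. Since the two sides agree in every coordinate $q$, the vectors are equal.

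\textbf{Step 2: induction on $c=|A|$.} For $c=1$ the claim is trivial. For $c\ge 2$, write $\mathbf{v}_A = \mathbf{v}_{A\setminus\{i_c\}}\wedge \mathbf{v}_{i_c}$; the wedge product is associative and commutative, so the order of the factors does not matter. Step 1 gives $M_\pi\mathbf{v}_A = M_\pi\mathbf{v}_{A\setminus\{i_c\}}\wedge M_\pi\mathbf{v}_{i_c}$, and the induction hypothesis expands the first factor into $\bigwedge_{a\in A\setminus\{i_c\}} M_\pi\mathbf{v}_a$.

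\textbf{The case $A=\emptyset$.} Here the empty wedge is naturally interpreted as $\mathbf{1}$, and $M_\pi\mathbf{1}=\mathbf{1}$ because a permutation matrix fixes the all-ones vector. This is consistent with the convention $\mathbf{v}_\emptyset=\mathbf{1}$.

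There is no real obstacle in this argument. The only point needing care is the index bookkeeping in \cref{eq:perm_matrix}: one must confirm that $(M_\pi\mathbf{w})_q = w_{\pi^{-1}(q)}$, consistent with the paper's remark that the $\pi(p)$-th element of $M_\pi\mathbf{w}$ equals the $p$-th element of $\mathbf{w}$. The key fact is that a permutation matrix acts by relabeling coordinates, and relabeling commutes with any coordinatewise operation such as AND.
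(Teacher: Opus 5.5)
Your proposal is correct and takes essentially the same approach as the paper: both arguments rest on the fact that $M_\pi$ only relabels coordinates ($(M_\pi\mathbf{w})_{\pi(p)} = w_p$), so it commutes with the coordinatewise AND. The only difference is cosmetic: the paper compares the $p$-th bit of the full $c$-fold product directly, whereas you prove the two-vector case and induct on $|A|$ (and additionally check $A=\emptyset$ via $M_\pi\mathbf{1}=\mathbf{1}$).
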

\begin{proof}
    The statement above follows from the facts that for any $p \in \{0,\dots,2^m-1\}$, (i) the $p$-th bit of $\mathbf{v}_A$ is the product of the $p$-th bits of $\mathbf{v}_{i_1},\dots,\mathbf{v}_{i_c}$, (ii) the $p$-th bit of $\mathbf{v}_A$ is the same as the $\pi(p)$-th bit of $M_{\pi}\mathbf{v}_A$, and (iii) the $p$-th bit of $\mathbf{v}_i$ is the same as the $\pi(p)$-th bit of $M_{\pi}\mathbf{v}_i$ for any $i \in [m]$. 
\end{proof}

\subsection{Definition of quantum Reed--Muller codes} \label{subsec:QRM_def}

A \emph{Calderbank-Shor-Steane (CSS) code} \cite{CS96,Steane96} is a stabilizer code \cite{Gottesman97} in which there exists a choice of stabilizer generators which are purely $\g{X}$- or purely $\g{Z}$-type. Let \codepar{n,k,d} denote a stabilizer code that encodes $k$ logical qubits into $n$ physical qubits and has code distance $d$. A CSS code can be constructed from two classical binary linear codes by the following theorem.
\begin{theorem} [CSS construction] \cite{CS96,Steane96} \label{thm:CSS}
	Let $\mathcal{C}_x$ and $\mathcal{C}_z$ be $[n,k_x,d_x]$ and $[n,k_z,d_z]$ classical linear codes with parity check matrix $H_x$ and $H_z$, respectively. Suppose that $\mathcal{C}_x^\perp\subseteq \mathcal{C}_z$. Let $\mathrm{CSS}\left(\mathcal{C}_x^\perp,\mathcal{C}_z^\perp\right)$ be the CSS code in which $\g{X}$-type ($\g{Z}$-type) stabilizer generators correspond to the rows of $H_x$ ($H_z$), where bit 0 corresponds to the identity operator $\g{I}$ and bit 1 corresponds to the Pauli $\g{X}$ ($\g{Z}$) operator. Then $\mathrm{CSS}\left(\mathcal{C}_x^\perp,\mathcal{C}_z^\perp\right)$ is an \codepar{n,k,d} code with $k=k_x+k_z-n$ and $d\geq\min\{d_x,d_z\}$. Logical $\g{X}$ and logical $\g{Z}$ operators of $\mathrm{CSS}\left(\mathcal{C}_x^\perp,\mathcal{C}_z^\perp\right)$ are described by binary vectors in $\mathcal{C}_z / \mathcal{C}_x^\perp$ and $\mathcal{C}_x / \mathcal{C}_z^\perp$, respectively.
\end{theorem}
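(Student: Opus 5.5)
We prove the standard CSS facts in \cref{thm:CSS} by working with the symplectic (binary) picture of Pauli operators and checking four things in turn: commutation, the count of logical qubits, the description of logical operators, and the distance bound.

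\textbf{Commutation.} First we check that the prescribed generators commute. An $\g{X}$-type operator with support vector $\mathbf{a}$ and a $\g{Z}$-type operator with support $\mathbf{b}$ commute if and only if $\mathbf{a}\cdot\mathbf{b}=0$. The rows of $H_x$ span $\mathcal{C}_x^\perp$ and the rows of $H_z$ span $\mathcal{C}_z^\perp$. So we need $\mathcal{C}_x^\perp \perp \mathcal{C}_z^\perp$, which is equivalent to $\mathcal{C}_x^\perp\subseteq(\mathcal{C}_z^\perp)^\perp=\mathcal{C}_z$, exactly the hypothesis. Generators of the same type commute trivially. Since $-\g{I}$ is not generated (all generators are products of only $\g{X}$'s or only $\g{Z}$'s with $+$ sign), the stabilizer group is abelian and well defined.

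\textbf{Counting.} Next we count the stabilizer rank. Choosing $H_x,H_z$ with independent rows gives $n-k_x$ independent $\g{X}$-generators and $n-k_z$ independent $\g{Z}$-generators. Hence
\[
k=n-(n-k_x)-(n-k_z)=k_x+k_z-n .
\]

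\textbf{Logical operators.} An $\g{X}$-type operator $\g{X}^{\mathbf{a}}$ commutes with all $\g{Z}$-stabilizers if and only if $\mathbf{a}\in(\mathcal{C}_z^\perp)^\perp=\mathcal{C}_z$. It acts trivially if and only if $\mathbf{a}\in\mathcal{C}_x^\perp$. So logical $\g{X}$ classes are $\mathcal{C}_z/\mathcal{C}_x^\perp$, and symmetrically logical $\g{Z}$ classes are $\mathcal{C}_x/\mathcal{C}_z^\perp$. Both quotients have dimension $k_x+k_z-n$, and the dot-product pairing between them is nondegenerate. This lets us pick dual (symplectic) bases, which define logical $\g{X}_i,\g{Z}_i$.

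\textbf{Distance.} This is the one step needing slight care, because general logical operators are not of pure type. Any Pauli $\g{X}^{\mathbf{a}}\g{Z}^{\mathbf{b}}$ in the normalizer must satisfy $\mathbf{a}\in\mathcal{C}_z$ and $\mathbf{b}\in\mathcal{C}_x$, since the two commutation conditions decouple. If it is a nontrivial logical, then $\mathbf{a}\notin\mathcal{C}_x^\perp$ or $\mathbf{b}\notin\mathcal{C}_z^\perp$. Suppose $\mathbf{a}\in\mathcal{C}_z\setminus\mathcal{C}_x^\perp$. Then $\mathbf{a}\neq 0$, so its weight is at least $d_z$. Otherwise $\mathbf{b}\in\mathcal{C}_x\setminus\mathcal{C}_z^\perp$ is nonzero, so its weight is at least $d_x$. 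The weight of the operator is at least $\max(\mathrm{wt}(\mathbf{a}),\mathrm{wt}(\mathbf{b}))$, giving $d\ge\min\{d_x,d_z\}$.
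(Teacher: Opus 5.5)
The paper gives no proof of \cref{thm:CSS}. It quotes the result as background from \cite{CS96,Steane96}, so there is no competing argument to compare with. Your proof is the standard symplectic one and it is correct.

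It also handles correctly the one step that needs care. The normalizer conditions decouple into $\mathbf{a}\in\mathcal{C}_z$ and $\mathbf{b}\in\mathcal{C}_x$, so a nontrivial mixed-type logical $\g{X}(\mathbf{a})\g{Z}(\mathbf{b})$ must have $\mathbf{a}\in\mathcal{C}_z\setminus\mathcal{C}_x^\perp$ or $\mathbf{b}\in\mathcal{C}_x\setminus\mathcal{C}_z^\perp$. That gives the weight bound.

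Two small tightenings:
\begin{enumerate}
\item The exclusion of $-\g{I}$ rests on the commutation you proved first, not on the $+$ signs alone; single-qubit $\g{X}$ and $\g{Z}$ with $+$ signs already generate $-\g{I}$. After reordering commuting generators, every group element is $\g{X}(\mathbf{a})\g{Z}(\mathbf{b})$ with $\mathbf{a}\in\mathcal{C}_x^\perp$ and $\mathbf{b}\in\mathcal{C}_z^\perp$, and this is never $-\g{I}$.
\item $H_x$ and $H_z$ are given rather than chosen, so phrase the count via ranks. Use $\mathrm{rank}\,H_x=n-k_x$, $\mathrm{rank}\,H_z=n-k_z$, and the joint independence of the two types, which live in different halves of $\mathbb{Z}_2^{2n}$. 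Then cite the standard fact that $s$ independent commuting generators, not generating $-\g{I}$, fix a $2^{n-s}$-dimensional space.
\end{enumerate}
The nondegeneracy of the pairing you assert is a one-line check, and the statement as written does not need it.

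Your self-contained version also shows something the citation hides. Combined with the trivial upper bound from pure-type logicals, it pins $d$ down exactly as the minimum weight over $(\mathcal{C}_z\setminus\mathcal{C}_x^\perp)\cup(\mathcal{C}_x\setminus\mathcal{C}_z^\perp)$, not just $d\geq\min\{d_x,d_z\}$. In the paper's application this is how one gets $d=2^{m/2}$ for $\mathrm{QRM}(m)$ rather than merely $d\geq 2^{m/2}$, since the minimum is attained by the weight-$2^{m/2}$ operator $\logg{X}(B)$.
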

\noindent
In the case that $\mathcal{C}_x=\mathcal{C}_z$, the $\g{X}$- and $\g{Z}$-type stabilizer generators of the CSS code obtained from \cref{thm:CSS} are of the same form. Such a code is called \emph{self-dual CSS code}.

A \emph{Quantum Reed--Muller code} \cite{ADP14} is a CSS code which is constructed from two classical Reed--Muller codes through the CSS construction in \cref{thm:CSS}. Suppose that $\mathcal{C}_x^\perp = \mathrm{RM}(r_x,m)$ and $\mathcal{C}_z^\perp = \mathrm{RM}(r_z,m)$, where $r_x,r_z \in \{0,\dots,m\}$. Since $\mathcal{C}_z = \mathrm{RM}(r_z,m)^\perp = \mathrm{RM}(m-r_z-1,m)$, the condition $\mathcal{C}_x^\perp\subseteq \mathcal{C}_z$ is satisfied if and only if $r_x \leq m-r_z-1$, or equivalently, $r_x+r_z+1 \leq m$. In other words, for a given $m$, any code $\mathrm{CSS}\left(\mathrm{RM}(r_x,m),\mathrm{RM}(r_z,m)\right)$ with parameters $r_x,r_z$ that satisfy $r_x+r_z+1 \leq m$ is a valid quantum Reed--Muller code.

In this work, we focus on the following family of self-dual quantum Reed--Muller codes.
\begin{definition} \label{def:QRM}
    Let $m$ be a positive even number. The quantum Reed--Muller code $\mathrm{CSS}\left(\mathrm{RM}(r_x,m),\mathrm{RM}(r_z,m)\right)$ with parameters $r_x=r_z = m/2-1$ is denoted by $\mathrm{QRM}(m)$.
\end{definition}
Two well-known codes in this family are the \codepar{4,2,2} code \cite{VGW96} and the \codepar{16,6,4} tesseract code \cite{DR20}, which are $\mathrm{QRM}(2)$ and $\mathrm{QRM}(4)$, respectively.

Let $\mathbf{w}=(w_0,\dots,w_{n-1}) \in \mathbb{Z}^n_2$ be an $n$-bit binary vector. We define $\g{X}(\mathbf{w})$ as the $n$-qubit Pauli-$\g{X}$ operator $\g{X}^{w_0}\otimes\cdots\otimes \g{X}^{w_{n-1}}$. The $n$-qubit Pauli-$\g{Z}$ operator $\g{Z}(\mathbf{w})$ is defined analogously. The weight of a Pauli operator is defined as the number of its non-trivial tensor factors, thus the weight of $\g{X}(\mathbf{w})$ or $\g{Z}(\mathbf{w})$ is equal to the Hamming weight of $\mathbf{w}$. 

Following the notations introduced in \cref{subsec:CRM_def}, we can describe the stabilizer generators of $\mathrm{QRM}(m)$ by sets of basis vector indices as follows:
\begin{definition} \label{def:stb_def}
    Let $m$ be a positive even number and let $A \subsetneq [m]$ be a set of basis vector indices of size $0 \leq |A| \leq \frac{m}{2}-1$. The $\g{X}$-type and $\g{Z}$-type stabilizer generators of $\mathrm{QRM}(m)$ defined over set $A$, denoted by $g_x(A)$ and $g_z(A)$ respectively, are
\begin{align}
    g_x(A) &= \g{X}\left(\mathbf{v}_A\right) = \g{X}\left(\bigwedge_{a \in A}\mathbf{v}_{a}\right), \label{eq:stb_def1} \\
    g_z(A) &= \g{Z}\left(\mathbf{v}_A\right) = \g{Z}\left(\bigwedge_{a \in A}\mathbf{v}_{a}\right). \label{eq:stb_def2}
\end{align}
We also define $g_x(\emptyset) = \g{X}^{\otimes n}$ and $g_z(\emptyset) = \g{Z}^{\otimes n}$. The stabilizer group of $\mathrm{QRM}(m)$ can be written as $\left\langle g_x(A), g_z(A)\right\rangle_{A \subsetneq [m],0 \leq |A| \leq \frac{m}{2}-1}$.

\end{definition}

Suppose that $A_1$ and $A_2$ satisfy $A_i \subseteq [m]$ and $0 \leq |A_i| \leq \frac{m}{2}-1$. We can readily verify that any pair of $g_x(A_1)$ and $g_z(A_2)$
commute using the facts that (i) $\g{X}(\mathbf{v}_{A_1})$ and $\g{Z}(\mathbf{v}_{A_2})$ commute if and only if $\mathbf{v}_{A_1} \cdot \mathbf{v}_{A_2}=0$, and (ii) $\mathbf{v}_{A_1} \cdot \mathbf{v}_{A_2} = 1$ if and only if $A_1\cup A_2 = [m]$ (by \cref{prop:Hamming_weight}).

We can also define logical $\g{X}$ and logical $\g{Z}$ operators of $\mathrm{QRM}(m)$ by sets of basis vector indices as follows. 
\begin{definition} \label{def:log_def}
Let $m$ be a positive even number and let $B \subsetneq [m]$ be a set of basis vector indices of size $|B| = m/2$. Logical $\g{X}$ and logical $\g{Z}$ operators of $\mathrm{QRM}(m)$ defined over set $B$, denoted by $\logg{X}(B)$ and $\logg{Z}(B)$ respectively, are
\begin{align}
    \logg{X}(B) &= \g{X}\left(\mathbf{v}_B\right) = \g{X}\left(\bigwedge_{b \in B}\mathbf{v}_{b}\right), \label{eq:log_def1} \\
    \logg{Z}(B) &= \g{Z}\left(\mathbf{v}_{B^\up{c}}\right) = \g{Z}\left(\bigwedge_{b' \in B^\up{c}}\mathbf{v}_{b'}\right), \label{eq:log_def2}
\end{align}
where $B^\up{c} = [m]\setminus B$ is the complement of $B$. 
\end{definition}
We can also verify that $\logg{X}(B)$ and $\logg{Z}(B')$ anticommute if and only if $B = B'$ by using the facts that (i) $\g{X}(\mathbf{v}_{B})$ and $\g{Z}(\mathbf{v}_{(B')^\up{c}})$ anticommute if and only if $\mathbf{v}_{B} \cdot \mathbf{v}_{(B')^\up{c}}=1$, and (ii) $\mathbf{v}_{B} \cdot \mathbf{v}_{(B')^\up{c}} = 1$ if and only if $B\cup (B')^\up{c} = [m]$ (which is equivalent to $B=B'$ since $|B|=|B'|=m/2$). 

Following \cref{def:log_def}, we can use a set of basis vector indices $B$ of size $|B| = m/2$ as a \emph{logical qubit index} for a logical qubit defined by $\logg{X}(B)$ and $\logg{Z}(B)$. To assist our explanations, we also label logical qubits of any $\mathrm{QRM}(m)$ using the following convention:
\begin{definition} \label{def:canonical_indices}
For any $\mathrm{QRM}(m)$, the \emph{canonical logical qubit indices} $B_i$ are defined in
lexicographic order for $i \in \{1,\dots,k/2\}$ where $k = {m \choose m/2}$, i.e.,
\begin{align}
B_1&=\{\mathbf{v}_1,\mathbf{v}_2,...,\mathbf{v}_{m/2-1},\mathbf{v}_{m/2}\}, \nonumber \\
B_2&=\{\mathbf{v}_1,\mathbf{v}_2,...,\mathbf{v}_{m/2-1},\mathbf{v}_{m/2+1}\}, \nonumber \\
&\;\;\vdots \nonumber \\
B_{m/2+1}&=\{\mathbf{v}_1,\mathbf{v}_2,...,\mathbf{v}_{m/2-1},\mathbf{v}_{m}\}, \nonumber \\
B_{m/2+2}&=\{\mathbf{v}_1,\mathbf{v}_2,...,\mathbf{v}_{m/2},\mathbf{v}_{m/2+1}\}, \nonumber \\
&\;\;\vdots \nonumber \\
B_{k/2}&=\{\mathbf{v}_1,\mathbf{v}_{m/2+2},\dots,\mathbf{v}_{m}\}, \nonumber
\end{align}
then for $i \in \{k/2+1,\dots,k\}$, we define 
$B_i = (B_{i-k/2})^\up{c}$.
The $i$-th logical qubit of $\mathrm{QRM}(m)$ is the logical qubit defined by $\logg{X}(i)= \logg{X}(B_i)$ and $\logg{Z}(i)= \logg{Z}(B_i)$.
\end{definition}
Note that there are many possible choices of anti-commuting Pauli operators (i.e., symplectic bases) that can be used to define logical Pauli operators of a quantum code. However, throughout this work we will stick with the choice of logical Pauli operators defined by \cref{def:log_def,def:canonical_indices}.

\begin{example} \label{ex:canonical_indices_m4}
Let $m=4$ and consider $\mathrm{QRM}(4)$. Logical qubits 1 to 6 of the code correspond to the canonical logical qubit indices $B_1=\{1,2\}$, $B_2=\{1,3\}$, $B_3=\{1,4\}$, $B_4=\{3,4\}$, $B_5=\{2,4\}$, and $B_6=\{2,3\}$, respectively. Stabilizer generators of $\mathrm{QRM}(4)$ can be constructed from the generators of $\mathrm{RM}(1,4)$. A choice of stabilizer generators corresponding to the generators $\{\mathbf{v}_A: 0\leq|A|\leq 1\}$ of $\mathrm{RM}(1,4)$ is described in \cref{tab:all_g_from_vA}. Logical Pauli operators of $\mathrm{QRM}(4)$ as defined in \cref{def:log_def,def:canonical_indices} are described in \cref{tab:logical_m=4}.

    \begin{table}[htbp]
    \centering        
    \begin{tabular}{| c | c | c  c  c  c  c  c  c  c  c  c  c  c  c  c  c  c |}
         \hline
         \parbox[t]{3.5mm}{\multirow{6}{*}{\rotatebox[origin=c]{90}{\textrm{$\g{X}$-type}}}} & $A$ & \multicolumn{16}{|c|}{$g_x(A)$} \\
         \cline{2-18}
         & $\emptyset$ & $\g{X}$ & $\g{X}$ & $\g{X}$ & $\g{X}$ & $\g{X}$ & $\g{X}$ & $\g{X}$ & $\g{X}$ & $\g{X}$ & $\g{X}$ & $\g{X}$ & $\g{X}$ & $\g{X}$ & $\g{X}$ & $\g{X}$ & $\g{X}$ \\
         \cline{2-18}
         & $\{1\}$ & $\g{I}$ & $\g{X}$ & $\g{I}$ & $\g{X}$ & $\g{I}$ & $\g{X}$ & $\g{I}$ & $\g{X}$ & $\g{I}$ & $\g{X}$ & $\g{I}$ & $\g{X}$ & $\g{I}$ & $\g{X}$ & $\g{I}$ & $\g{X}$ \\
         \cline{2-18}
         & $\{2\}$ & $\g{I}$ & $\g{I}$ & $\g{X}$ & $\g{X}$ & $\g{I}$ & $\g{I}$ & $\g{X}$ & $\g{X}$ & $\g{I}$ & $\g{I}$ & $\g{X}$ & $\g{X}$ & $\g{I}$ & $\g{I}$ & $\g{X}$ & $\g{X}$ \\
         \cline{2-18}
         & $\{3\}$ & $\g{I}$ & $\g{I}$ & $\g{I}$ & $\g{I}$ & $\g{X}$ & $\g{X}$ & $\g{X}$ & $\g{X}$ & $\g{I}$ & $\g{I}$ & $\g{I}$ & $\g{I}$ & $\g{X}$ & $\g{X}$ & $\g{X}$ & $\g{X}$ \\
         \cline{2-18}
         & $\{4\}$ & $\g{I}$ & $\g{I}$ & $\g{I}$ & $\g{I}$ & $\g{I}$ & $\g{I}$ & $\g{I}$ & $\g{I}$ & $\g{X}$ & $\g{X}$ & $\g{X}$ & $\g{X}$ & $\g{X}$ & $\g{X}$ & $\g{X}$ & $\g{X}$ \\
         \hline
         \parbox[t]{3.5mm}{\multirow{6}{*}{\rotatebox[origin=c]{90}{\textrm{$\g{Z}$-type}}}} & $A$ & \multicolumn{16}{|c|}{$g_z(A)$} \\
         \cline{2-18}
         & $\emptyset$ & $\g{Z}$ & $\g{Z}$ & $\g{Z}$ & $\g{Z}$ & $\g{Z}$ & $\g{Z}$ & $\g{Z}$ & $\g{Z}$ & $\g{Z}$ & $\g{Z}$ & $\g{Z}$ & $\g{Z}$ & $\g{Z}$ & $\g{Z}$ & $\g{Z}$ & $\g{Z}$ \\
         \cline{2-18}
         & $\{1\}$ & $\g{I}$ & $\g{Z}$ & $\g{I}$ & $\g{Z}$ & $\g{I}$ & $\g{Z}$ & $\g{I}$ & $\g{Z}$ & $\g{I}$ & $\g{Z}$ & $\g{I}$ & $\g{Z}$ & $\g{I}$ & $\g{Z}$ & $\g{I}$ & $\g{Z}$ \\
         \cline{2-18}
         & $\{2\}$ & $\g{I}$ & $\g{I}$ & $\g{Z}$ & $\g{Z}$ & $\g{I}$ & $\g{I}$ & $\g{Z}$ & $\g{Z}$ & $\g{I}$ & $\g{I}$ & $\g{Z}$ & $\g{Z}$ & $\g{I}$ & $\g{I}$ & $\g{Z}$ & $\g{Z}$ \\
         \cline{2-18}
         & $\{3\}$ & $\g{I}$ & $\g{I}$ & $\g{I}$ & $\g{I}$ & $\g{Z}$ & $\g{Z}$ & $\g{Z}$ & $\g{Z}$ & $\g{I}$ & $\g{I}$ & $\g{I}$ & $\g{I}$ & $\g{Z}$ & $\g{Z}$ & $\g{Z}$ & $\g{Z}$ \\
         \cline{2-18}
         & $\{4\}$ & $\g{I}$ & $\g{I}$ & $\g{I}$ & $\g{I}$ & $\g{I}$ & $\g{I}$ & $\g{I}$ & $\g{I}$ & $\g{Z}$ & $\g{Z}$ & $\g{Z}$ & $\g{Z}$ & $\g{Z}$ & $\g{Z}$ & $\g{Z}$ & $\g{Z}$ \\
         \hline
         \parbox[t]{3.5mm}{\multirow{4}{*}{\rotatebox[origin=c]{90}{\textrm{bit indices}}}} & $x_4$ & \multicolumn{1}{|c|}{$0$} & \multicolumn{1}{|c|}{$0$} & \multicolumn{1}{|c|}{$0$} & \multicolumn{1}{|c|}{$0$} & \multicolumn{1}{|c|}{$0$} & \multicolumn{1}{|c|}{$0$} & \multicolumn{1}{|c|}{$0$} & \multicolumn{1}{|c|}{$0$} & \multicolumn{1}{|c|}{$1$} & \multicolumn{1}{|c|}{$1$} & \multicolumn{1}{|c|}{$1$} & \multicolumn{1}{|c|}{$1$} & \multicolumn{1}{|c|}{$1$} & \multicolumn{1}{|c|}{$1$} & \multicolumn{1}{|c|}{$1$} & \multicolumn{1}{|c|}{$1$} \\
         & $x_3$ & \multicolumn{1}{|c|}{$0$} & \multicolumn{1}{|c|}{$0$} & \multicolumn{1}{|c|}{$0$} & \multicolumn{1}{|c|}{$0$} & \multicolumn{1}{|c|}{$1$} & \multicolumn{1}{|c|}{$1$} & \multicolumn{1}{|c|}{$1$} & \multicolumn{1}{|c|}{$1$} & \multicolumn{1}{|c|}{$0$} & \multicolumn{1}{|c|}{$0$} & \multicolumn{1}{|c|}{$0$} & \multicolumn{1}{|c|}{$0$} & \multicolumn{1}{|c|}{$1$} & \multicolumn{1}{|c|}{$1$} & \multicolumn{1}{|c|}{$1$} & \multicolumn{1}{|c|}{$1$} \\
         & $x_2$ & \multicolumn{1}{|c|}{$0$} & \multicolumn{1}{|c|}{$0$} & \multicolumn{1}{|c|}{$1$} & \multicolumn{1}{|c|}{$1$} & \multicolumn{1}{|c|}{$0$} & \multicolumn{1}{|c|}{$0$} & \multicolumn{1}{|c|}{$1$} & \multicolumn{1}{|c|}{$1$} & \multicolumn{1}{|c|}{$0$} & \multicolumn{1}{|c|}{$0$} & \multicolumn{1}{|c|}{$1$} & \multicolumn{1}{|c|}{$1$} & \multicolumn{1}{|c|}{$0$} & \multicolumn{1}{|c|}{$0$} & \multicolumn{1}{|c|}{$1$} & \multicolumn{1}{|c|}{$1$} \\
         & $x_1$ & \multicolumn{1}{|c|}{$0$} & \multicolumn{1}{|c|}{$1$} & \multicolumn{1}{|c|}{$0$} & \multicolumn{1}{|c|}{$1$} & \multicolumn{1}{|c|}{$0$} & \multicolumn{1}{|c|}{$1$} & \multicolumn{1}{|c|}{$0$} & \multicolumn{1}{|c|}{$1$} & \multicolumn{1}{|c|}{$0$} & \multicolumn{1}{|c|}{$1$} & \multicolumn{1}{|c|}{$0$} & \multicolumn{1}{|c|}{$1$} & \multicolumn{1}{|c|}{$0$} & \multicolumn{1}{|c|}{$1$} & \multicolumn{1}{|c|}{$0$} & \multicolumn{1}{|c|}{$1$} \\
         \hline
    \end{tabular}
    \caption{A choice of stabilizer generators of the quantum Reed--Muller code $\mathrm{QRM}(4)$ according to \cref{def:stb_def}, which correspond to the generators $\{\mathbf{v}_A: 0\leq|A|\leq 1\}$ of the classical Reed--Muller code $\mathrm{RM}(1,4)$.}
        \label{tab:all_g_from_vA}
    \end{table}

    \begin{table}[htbp]
    \centering        
    \begin{tabular}{| c | c | c  c  c  c  c  c  c  c  c  c  c  c  c  c  c  c |}
         \hline
         \parbox[t]{3.5mm}{\multirow{7}{*}{\rotatebox[origin=c]{90}{\textrm{$\g{X}$-type}}}} & $B$ & \multicolumn{16}{|c|}{$\logg{X}(B)$} \\
         \cline{2-18}
         & $B_1=\{1,2\}$ & $\g{I}$ & $\g{I}$ & $\g{I}$ & $\g{X}$ & $\g{I}$ & $\g{I}$ & $\g{I}$ & $\g{X}$ & $\g{I}$ & $\g{I}$ & $\g{I}$ & $\g{X}$ & $\g{I}$ & $\g{I}$ & $\g{I}$ & $\g{X}$ \\
         \cline{2-18}
         & $B_2=\{1,3\}$ & $\g{I}$ & $\g{I}$ & $\g{I}$ & $\g{I}$ & $\g{I}$ & $\g{X}$ & $\g{I}$ & $\g{X}$ & $\g{I}$ & $\g{I}$ & $\g{I}$ & $\g{I}$ & $\g{I}$ & $\g{X}$ & $\g{I}$ & $\g{X}$ \\
         \cline{2-18}
         & $B_3=\{1,4\}$ & $\g{I}$ & $\g{I}$ & $\g{I}$ & $\g{I}$ & $\g{I}$ & $\g{I}$ & $\g{I}$ & $\g{I}$ & $\g{I}$ & $\g{X}$ & $\g{I}$ & $\g{X}$ & $\g{I}$ & $\g{X}$ & $\g{I}$ & $\g{X}$ \\
         \cline{2-18}
         & $B_4=\{3,4\}$ & $\g{I}$ & $\g{I}$ & $\g{I}$ & $\g{I}$ & $\g{I}$ & $\g{I}$ & $\g{I}$ & $\g{I}$ & $\g{I}$ & $\g{I}$ & $\g{I}$ & $\g{I}$ & $\g{X}$ & $\g{X}$ & $\g{X}$ & $\g{X}$ \\
         \cline{2-18}
         & $B_5=\{2,4\}$ & $\g{I}$ & $\g{I}$ & $\g{I}$ & $\g{I}$ & $\g{I}$ & $\g{I}$ & $\g{I}$ & $\g{I}$ & $\g{I}$ & $\g{I}$ & $\g{X}$ & $\g{X}$ & $\g{I}$ & $\g{I}$ & $\g{X}$ & $\g{X}$ \\
         \cline{2-18}
         & $B_6=\{2,3\}$ & $\g{I}$ & $\g{I}$ & $\g{I}$ & $\g{I}$ & $\g{I}$ & $\g{I}$ & $\g{X}$ & $\g{X}$ & $\g{I}$ & $\g{I}$ & $\g{I}$ & $\g{I}$ & $\g{I}$ & $\g{I}$ & $\g{X}$ & $\g{X}$ \\
         \hline
         \parbox[t]{3.5mm}{\multirow{7}{*}{\rotatebox[origin=c]{90}{\textrm{$\g{Z}$-type}}}} & $B$ & \multicolumn{16}{|c|}{$\logg{Z}(B)$} \\
         \cline{2-18}
         & $B_1=\{1,2\}$ & $\g{I}$ & $\g{I}$ & $\g{I}$ & $\g{I}$ & $\g{I}$ & $\g{I}$ & $\g{I}$ & $\g{I}$ & $\g{I}$ & $\g{I}$ & $\g{I}$ & $\g{I}$ & $\g{Z}$ & $\g{Z}$ & $\g{Z}$ & $\g{Z}$ \\
         \cline{2-18}
         & $B_2=\{1,3\}$ & $\g{I}$ & $\g{I}$ & $\g{I}$ & $\g{I}$ & $\g{I}$ & $\g{I}$ & $\g{I}$ & $\g{I}$ & $\g{I}$ & $\g{I}$ & $\g{Z}$ & $\g{Z}$ & $\g{I}$ & $\g{I}$ & $\g{Z}$ & $\g{Z}$ \\
         \cline{2-18}
         & $B_3=\{1,4\}$ & $\g{I}$ & $\g{I}$ & $\g{I}$ & $\g{I}$ & $\g{I}$ & $\g{I}$ & $\g{Z}$ & $\g{Z}$ & $\g{I}$ & $\g{I}$ & $\g{I}$ & $\g{I}$ & $\g{I}$ & $\g{I}$ & $\g{Z}$ & $\g{Z}$ \\
         \cline{2-18}
         & $B_4=\{3,4\}$ & $\g{I}$ & $\g{I}$ & $\g{I}$ & $\g{Z}$ & $\g{I}$ & $\g{I}$ & $\g{I}$ & $\g{Z}$ & $\g{I}$ & $\g{I}$ & $\g{I}$ & $\g{Z}$ & $\g{I}$ & $\g{I}$ & $\g{I}$ & $\g{Z}$ \\
         \cline{2-18}
         & $B_5=\{2,4\}$ & $\g{I}$ & $\g{I}$ & $\g{I}$ & $\g{I}$ & $\g{I}$ & $\g{Z}$ & $\g{I}$ & $\g{Z}$ & $\g{I}$ & $\g{I}$ & $\g{I}$ & $\g{I}$ & $\g{I}$ & $\g{Z}$ & $\g{I}$ & $\g{Z}$ \\
         \cline{2-18}
         & $B_6=\{2,3\}$ & $\g{I}$ & $\g{I}$ & $\g{I}$ & $\g{I}$ & $\g{I}$ & $\g{I}$ & $\g{I}$ & $\g{I}$ & $\g{I}$ & $\g{Z}$ & $\g{I}$ & $\g{Z}$ & $\g{I}$ & $\g{Z}$ & $\g{I}$ & $\g{Z}$ \\
         \hline
         \parbox[t]{3.5mm}{\multirow{4}{*}{\rotatebox[origin=c]{90}{\textrm{bit indices}}}} & $x_4$ & \multicolumn{1}{|c|}{$0$} & \multicolumn{1}{|c|}{$0$} & \multicolumn{1}{|c|}{$0$} & \multicolumn{1}{|c|}{$0$} & \multicolumn{1}{|c|}{$0$} & \multicolumn{1}{|c|}{$0$} & \multicolumn{1}{|c|}{$0$} & \multicolumn{1}{|c|}{$0$} & \multicolumn{1}{|c|}{$1$} & \multicolumn{1}{|c|}{$1$} & \multicolumn{1}{|c|}{$1$} & \multicolumn{1}{|c|}{$1$} & \multicolumn{1}{|c|}{$1$} & \multicolumn{1}{|c|}{$1$} & \multicolumn{1}{|c|}{$1$} & \multicolumn{1}{|c|}{$1$} \\
         & $x_3$ & \multicolumn{1}{|c|}{$0$} & \multicolumn{1}{|c|}{$0$} & \multicolumn{1}{|c|}{$0$} & \multicolumn{1}{|c|}{$0$} & \multicolumn{1}{|c|}{$1$} & \multicolumn{1}{|c|}{$1$} & \multicolumn{1}{|c|}{$1$} & \multicolumn{1}{|c|}{$1$} & \multicolumn{1}{|c|}{$0$} & \multicolumn{1}{|c|}{$0$} & \multicolumn{1}{|c|}{$0$} & \multicolumn{1}{|c|}{$0$} & \multicolumn{1}{|c|}{$1$} & \multicolumn{1}{|c|}{$1$} & \multicolumn{1}{|c|}{$1$} & \multicolumn{1}{|c|}{$1$} \\
         & $x_2$ & \multicolumn{1}{|c|}{$0$} & \multicolumn{1}{|c|}{$0$} & \multicolumn{1}{|c|}{$1$} & \multicolumn{1}{|c|}{$1$} & \multicolumn{1}{|c|}{$0$} & \multicolumn{1}{|c|}{$0$} & \multicolumn{1}{|c|}{$1$} & \multicolumn{1}{|c|}{$1$} & \multicolumn{1}{|c|}{$0$} & \multicolumn{1}{|c|}{$0$} & \multicolumn{1}{|c|}{$1$} & \multicolumn{1}{|c|}{$1$} & \multicolumn{1}{|c|}{$0$} & \multicolumn{1}{|c|}{$0$} & \multicolumn{1}{|c|}{$1$} & \multicolumn{1}{|c|}{$1$} \\
         & $x_1$ & \multicolumn{1}{|c|}{$0$} & \multicolumn{1}{|c|}{$1$} & \multicolumn{1}{|c|}{$0$} & \multicolumn{1}{|c|}{$1$} & \multicolumn{1}{|c|}{$0$} & \multicolumn{1}{|c|}{$1$} & \multicolumn{1}{|c|}{$0$} & \multicolumn{1}{|c|}{$1$} & \multicolumn{1}{|c|}{$0$} & \multicolumn{1}{|c|}{$1$} & \multicolumn{1}{|c|}{$0$} & \multicolumn{1}{|c|}{$1$} & \multicolumn{1}{|c|}{$0$} & \multicolumn{1}{|c|}{$1$} & \multicolumn{1}{|c|}{$0$} & \multicolumn{1}{|c|}{$1$} \\
         \hline
    \end{tabular}
    \caption{Logical Pauli operators of the quantum Reed--Muller code $\mathrm{QRM}(4)$ according to \cref{def:log_def,def:canonical_indices}.}
        \label{tab:logical_m=4}
    \end{table}
\end{example}

For any $\mathrm{QRM}(m)$ of block length $n=2^m$, the total number of stabilizer generators of $\mathrm{QRM}(m)$ is $s= 2\left({m \choose 0} +\dots + {m \choose m/2-1} \right)$, and the total number of logical qubits of $\mathrm{QRM}(m)$ is $k={m \choose m/2}$. These numbers satisfy the property $n=k+s$ of any \codepar{n,k,d} stabilizer code. The (quantum) code distance of $\mathrm{QRM}(m)$ is determined by the (classical) code distance of $\mathcal{C}_x=\mathcal{C}_z=\mathrm{RM}(m/2,2)$, which is $2^{m-m/2} = 2^{m/2}$. That is, $\mathrm{QRM}(m)$ is a \codepar{2^m,{m \choose m/2},2^{m/2}} code. When $m$ is large, we can apply Stirling's approximation and show that the code parameters of $\mathrm{QRM}(m)$ are \codepar{n=2^m,k \approx 2^m/\sqrt{\pi m/2} = n/\sqrt{\pi \log_2(n)/2},d=2^{m/2}=\sqrt{n}}. 

It is also possible to show that for any $\mathrm{QRM}(m)$, the weight of all stabilizer generators can be reduced to $2^{m/2+1}=2\sqrt{n}$.

\begin{theorem}[Weight-reduced stabilizer generators of quantum Reed--Muller codes]\label{thm:reduced_weight}
    Let $m$ be a positive even number. For any $\mathrm{QRM}(m)$, there exists a set of stabilizer generators for the stabilizer group of $\mathrm{QRM}(m)$ such that the weight of each stabilizer generator is $2^{m/2+1}=2\sqrt{n}$ (where $n=2^m$).
\end{theorem}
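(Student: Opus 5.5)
We construct an explicit alternative generating set for $\mathrm{RM}(m/2-1,m)$ made of vectors of Hamming weight exactly $2^{m/2+1}$. Both the $\g{X}$- and $\g{Z}$-type stabilizer groups of $\mathrm{QRM}(m)$ are determined by this classical code (\cref{def:stb_def}), so such a basis immediately gives stabilizer generators of weight $2^{m/2+1}=2\sqrt{n}$ for both types.

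The natural candidates are the "mixed" monomials from \cref{prop:Hamming_weight}(1). For disjoint $A,B\subseteq[m]$ with $|A|+|B|=m/2-1$, set
\begin{equation}
\mathbf{u}_{A,B}=\Big(\bigwedge_{a\in A}\mathbf{v}_a\Big)\wedge\Big(\bigwedge_{b\in B}(\mathbf{v}_b+\mathbf{1})\Big).
\end{equation}
By \cref{prop:Hamming_weight}(1), each such vector has weight $2^{m-(m/2-1)}=2^{m/2+1}$. Expanding the product $\prod_{b\in B}(\mathbf{v}_b+\mathbf{1})$ shows that $\mathbf{u}_{A,B}$ is a sum of terms $\mathbf{v}_{A\cup T}$ with $T\subseteq B$. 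Every such term has $|A\cup T|\le m/2-1$, so $\mathbf{u}_{A,B}\in\mathrm{RM}(m/2-1,m)$.

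It remains to show that these vectors span the code. By \cref{def:CRM}, it suffices to write each $\mathbf{v}_C$ with $|C|\le m/2-1$ as a sum of $\mathbf{u}_{A,B}$'s.

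Fix such a $C$. Choose any set $D$ disjoint from $C$ with $|D|=m/2-1-|C|$; this is possible since $|C|\le m/2-1<m$. Then
\begin{equation}
\mathbf{v}_C=\mathbf{v}_C\wedge\bigwedge_{d\in D}\big(\mathbf{v}_d+(\mathbf{v}_d+\mathbf{1})\big),
\end{equation}
because $\mathbf{v}_d+(\mathbf{v}_d+\mathbf{1})=\mathbf{1}$. Distributing the wedge product over the sums gives
\begin{equation}
\mathbf{v}_C=\sum_{S\subseteq D}\mathbf{u}_{C\cup S,\,D\setminus S}.
\end{equation}
Each term has $|C\cup S|+|D\setminus S|=m/2-1$, so it is one of our weight-$2^{m/2+1}$ vectors. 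Geometrically, this partitions the subcube $\{x:x_c=1\ \forall c\in C\}$ into subcubes of codimension $m/2-1$.

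Hence the $\mathbf{u}_{A,B}$ span $\mathrm{RM}(m/2-1,m)$, and one can extract a basis from them. Taking $g_x=\g{X}(\mathbf{u})$ and $g_z=\g{Z}(\mathbf{u})$ for the basis vectors $\mathbf{u}$ yields a generating set of the same stabilizer group, with every generator of weight $2\sqrt{n}$.

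The main subtlety is exactly the spanning step: low-degree generators such as $\mathbf{1}$ have large weight and must be expressed in the new basis. The padding-by-$\mathbf{1}=\mathbf{v}_d+(\mathbf{v}_d+\mathbf{1})$ trick handles this, and it requires $m/2-1\le m$ (trivially true). The edge case $m=2$ is degenerate: there $m/2-1=0$, the only generator is $\mathbf{1}$, and its weight is $4=2^{2}$, as claimed.
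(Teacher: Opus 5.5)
Your proof is correct. It uses the same building blocks as the paper. These are the indicator vectors of codimension-$(m/2-1)$ subcubes, $\mathbf{v}_A\wedge\bigwedge_{b\in B}(\mathbf{1}+\mathbf{v}_b)$. Their weight comes from \cref{prop:Hamming_weight}(1), and their membership in $\mathrm{RM}(m/2-1,m)$ comes from expanding the product.

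Where you differ is in how spanning is established. The paper fixes, for each $A$ with $|A|\le m/2-1$, a single padding set $B=\{i_1,\dots,i_c\}$ with $c=m/2-1-|A|$. It then defines $\mathbf{h}_A=\mathbf{v}_A\wedge\bigwedge_{b\in B}(\mathbf{1}+\mathbf{v}_b)=\mathbf{v}_A+(\text{terms }\mathbf{v}_{A'}\text{ with }|A'|>|A|)$. This transition from the $\mathbf{v}_A$ to the $\mathbf{h}_A$ is unitriangular with respect to $|A|$. So the $\mathbf{h}_A$ form a basis directly, in bijection with the original generators $\mathbf{v}_A$, and each $\mathbf{v}_A$ is recovered by downward induction on $|A|$.

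You instead take the whole overcomplete family of $\mathbf{u}_{A,B}$. You write each $\mathbf{v}_C$ explicitly as $\sum_{S\subseteq D}\mathbf{u}_{C\cup S,\,D\setminus S}$ using $\mathbf{1}=\mathbf{v}_d+(\mathbf{v}_d+\mathbf{1})$. This needs no induction and has a clean geometric reading as a partition of a subcube. The cost is that the extraction of an independent generating set is left non-explicit.

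The two arguments fit together. The paper's $\mathbf{h}_A$ are exactly your $\mathbf{u}_{A,B}$ with one chosen $B$ per $A$. So the triangularity argument tells you precisely which subfamily to keep. It also shows that this subfamily has the expected size $\sum_{i=0}^{m/2-1}\binom{m}{i}$, the number of stabilizer generators of each type.
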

\begin{proof}
    We will show that for any classical Reed--Muller code $\textrm{RM}(m/2-1,m)$, there exists a set of generators $\{\mathbf{h}_A\}$ such that the Hamming weight of each generator is $2^{m/2+1}=2\sqrt{n}$. Recall that $\textrm{RM}(m/2-1,m)$ is generated by all vectors $\mathbf{v}_A$ such that $0\leq|A|\leq m/2-1$. We define a new vector $\mathbf{h}_A$ over a set of vector indices $A$ as follows.
    \begin{enumerate}
        \item If $|A|=m/2-1$, we let $\mathbf{h}_A = \mathbf{v}_A$. By \cref{prop:Hamming_weight}, $\mathbf{h}_A$ has weight $2^{m/2+1}$.
        
        \item If $|A|=m/2-1-c$ where $1 \leq c \leq m/2-2$, we can pick any set of $c$ vector indices $\{i_1,\dots,i_c\} \subsetneq [m] \setminus A$, and define $\mathbf{h}_A=\mathbf{v}_{A}\left(\mathbf{1}+\mathbf{v}_{i_1}\right)\cdots\left(\mathbf{1}+\mathbf{v}_{i_c}\right)$. By \cref{prop:Hamming_weight}, we find that $\mathbf{h}_A$ has weight $2^{m-|A|-c} = 2^{m/2+1}$. We also find that $\mathbf{h}_A= \sum_{L\subseteq \{i_1,\dots,i_c\}} \mathbf{v}_{A \cup L}$,
        which is a linear combination of $\mathbf{v}_{A}$ and some vectors $\mathbf{v}_{A'}$ such that $|A'|>m/2-1-c$. Conversely, $\mathbf{v}_{A}$ is a linear combination of $\mathbf{h}_{A}$ and some vectors $\mathbf{h}_{A'}$ such that $|A'|>m/2-1-c$.
        
        \item If $|A|=0$, we let $\mathbf{h}_\emptyset= \left(\mathbf{1}+\mathbf{v}_{i_1}\right)\cdots\left(\mathbf{1}+\mathbf{v}_{i_{m/2-1}}\right)=\sum_{L\subseteq \{i_1,\dots,i_{m/2-1}\}} \mathbf{v}_{L}$ for some vector indices $i_1,\dots,i_{m/2-1} \in [m]$. Similarly, we can show that $\mathbf{h}_\emptyset$ has weight $2^{m/2+1}$. We find that $\mathbf{h}_\emptyset$ is a linear combination of $\mathbf{v}_\emptyset=\mathbf{1}$ and some vectors $\mathbf{v}_{A'}$ such that $|A'|>0$. Conversely, $\mathbf{v}_\emptyset$ is a linear combination of $\mathbf{h}_\emptyset$ and some vectors $\mathbf{h}_{A'}$ such that $|A'|>0$.
    \end{enumerate}
    With the above definition of $\mathbf{h}_A$, we have that $\left\langle \mathbf{h}_A: 0\leq|A|\leq m/2-1\right\rangle = \left\langle \mathbf{v}_A: 0\leq|A|\leq m/2-1\right\rangle = \textrm{RM}(m/2-1,m)$. Let $h_x(A) = \g{X}(\mathbf{h}_A)$ and $h_z(A) = \g{Z}(\mathbf{h}_A)$. We find that the stabilizer group of $\mathrm{QRM}(m)$ can be written as $\left\langle h_x(A), h_z(A)\right\rangle_{A \subsetneq [m], 0 \leq |A| \leq \frac{m}{2}-1}$.
    The weight of each stabilizer generator $h_x(A)$ or $h_z(A)$ is equal to the Hamming weight of $\mathbf{h}_A$, which is $2^{m/2+1}=2\sqrt{n}$.
\end{proof}

\begin{example} 
    The following is a choice of stabilizer generators of $\mathrm{QRM}(4)$ alternative to the one in \cref{ex:canonical_indices_m4}, which is constructed from \cref{thm:reduced_weight}. From the generators $\{\mathbf{v}_A: 0\leq|A|\leq 1\}$ of $\mathrm{RM}(1,4)$, we can let $\mathbf{h}_A=\mathbf{v}_A$ when $|A|=1$, and let $\mathbf{h}_\emptyset = \mathbf{1}+\mathbf{v}_4$. Stabilizer generators corresponding to the generators $\{\mathbf{h}_A: 0\leq|A|\leq 1\}$ of $\mathrm{RM}(1,4)$ are described in \cref{tab:all_g_from_hA}. For this choice, all stabilizer generators have weight $2\sqrt{n}=8$.

    \begin{table}[htbp]
    \centering        
    \begin{tabular}{| c | c | c  c  c  c  c  c  c  c  c  c  c  c  c  c  c  c |}
         \hline
         \parbox[t]{3.5mm}{\multirow{6}{*}{\rotatebox[origin=c]{90}{\textrm{$\g{X}$-type}}}} & $A$ & \multicolumn{16}{|c|}{$h_x(A)$} \\
         \cline{2-18}
         & $\emptyset$ & $\g{X}$ & $\g{X}$ & $\g{X}$ & $\g{X}$ & $\g{X}$ & $\g{X}$ & $\g{X}$ & $\g{X}$ & $\g{I}$ & $\g{I}$ & $\g{I}$ & $\g{I}$ & $\g{I}$ & $\g{I}$ & $\g{I}$ & $\g{I}$ \\
         \cline{2-18}
         & $\{1\}$ & $\g{I}$ & $\g{X}$ & $\g{I}$ & $\g{X}$ & $\g{I}$ & $\g{X}$ & $\g{I}$ & $\g{X}$ & $\g{I}$ & $\g{X}$ & $\g{I}$ & $\g{X}$ & $\g{I}$ & $\g{X}$ & $\g{I}$ & $\g{X}$ \\
         \cline{2-18}
         & $\{2\}$ & $\g{I}$ & $\g{I}$ & $\g{X}$ & $\g{X}$ & $\g{I}$ & $\g{I}$ & $\g{X}$ & $\g{X}$ & $\g{I}$ & $\g{I}$ & $\g{X}$ & $\g{X}$ & $\g{I}$ & $\g{I}$ & $\g{X}$ & $\g{X}$ \\
         \cline{2-18}
         & $\{3\}$ & $\g{I}$ & $\g{I}$ & $\g{I}$ & $\g{I}$ & $\g{X}$ & $\g{X}$ & $\g{X}$ & $\g{X}$ & $\g{I}$ & $\g{I}$ & $\g{I}$ & $\g{I}$ & $\g{X}$ & $\g{X}$ & $\g{X}$ & $\g{X}$ \\
         \cline{2-18}
         & $\{4\}$ & $\g{I}$ & $\g{I}$ & $\g{I}$ & $\g{I}$ & $\g{I}$ & $\g{I}$ & $\g{I}$ & $\g{I}$ & $\g{X}$ & $\g{X}$ & $\g{X}$ & $\g{X}$ & $\g{X}$ & $\g{X}$ & $\g{X}$ & $\g{X}$ \\
         \hline
         \parbox[t]{3.5mm}{\multirow{6}{*}{\rotatebox[origin=c]{90}{\textrm{$\g{Z}$-type}}}} & $A$ & \multicolumn{16}{|c|}{$h_z(A)$} \\
         \cline{2-18}
         & $\emptyset$ & $\g{Z}$ & $\g{Z}$ & $\g{Z}$ & $\g{Z}$ & $\g{Z}$ & $\g{Z}$ & $\g{Z}$ & $\g{Z}$ & $\g{I}$ & $\g{I}$ & $\g{I}$ & $\g{I}$ & $\g{I}$ & $\g{I}$ & $\g{I}$ & $\g{I}$ \\
         \cline{2-18}
         & $\{1\}$ & $\g{I}$ & $\g{Z}$ & $\g{I}$ & $\g{Z}$ & $\g{I}$ & $\g{Z}$ & $\g{I}$ & $\g{Z}$ & $\g{I}$ & $\g{Z}$ & $\g{I}$ & $\g{Z}$ & $\g{I}$ & $\g{Z}$ & $\g{I}$ & $\g{Z}$ \\
         \cline{2-18}
         & $\{2\}$ & $\g{I}$ & $\g{I}$ & $\g{Z}$ & $\g{Z}$ & $\g{I}$ & $\g{I}$ & $\g{Z}$ & $\g{Z}$ & $\g{I}$ & $\g{I}$ & $\g{Z}$ & $\g{Z}$ & $\g{I}$ & $\g{I}$ & $\g{Z}$ & $\g{Z}$ \\
         \cline{2-18}
         & $\{3\}$ & $\g{I}$ & $\g{I}$ & $\g{I}$ & $\g{I}$ & $\g{Z}$ & $\g{Z}$ & $\g{Z}$ & $\g{Z}$ & $\g{I}$ & $\g{I}$ & $\g{I}$ & $\g{I}$ & $\g{Z}$ & $\g{Z}$ & $\g{Z}$ & $\g{Z}$ \\
         \cline{2-18}
         & $\{4\}$ & $\g{I}$ & $\g{I}$ & $\g{I}$ & $\g{I}$ & $\g{I}$ & $\g{I}$ & $\g{I}$ & $\g{I}$ & $\g{Z}$ & $\g{Z}$ & $\g{Z}$ & $\g{Z}$ & $\g{Z}$ & $\g{Z}$ & $\g{Z}$ & $\g{Z}$ \\
         \hline
         \parbox[t]{3.5mm}{\multirow{4}{*}{\rotatebox[origin=c]{90}{\textrm{bit indices}}}} & $x_4$ & \multicolumn{1}{|c|}{$0$} & \multicolumn{1}{|c|}{$0$} & \multicolumn{1}{|c|}{$0$} & \multicolumn{1}{|c|}{$0$} & \multicolumn{1}{|c|}{$0$} & \multicolumn{1}{|c|}{$0$} & \multicolumn{1}{|c|}{$0$} & \multicolumn{1}{|c|}{$0$} & \multicolumn{1}{|c|}{$1$} & \multicolumn{1}{|c|}{$1$} & \multicolumn{1}{|c|}{$1$} & \multicolumn{1}{|c|}{$1$} & \multicolumn{1}{|c|}{$1$} & \multicolumn{1}{|c|}{$1$} & \multicolumn{1}{|c|}{$1$} & \multicolumn{1}{|c|}{$1$} \\
         & $x_3$ & \multicolumn{1}{|c|}{$0$} & \multicolumn{1}{|c|}{$0$} & \multicolumn{1}{|c|}{$0$} & \multicolumn{1}{|c|}{$0$} & \multicolumn{1}{|c|}{$1$} & \multicolumn{1}{|c|}{$1$} & \multicolumn{1}{|c|}{$1$} & \multicolumn{1}{|c|}{$1$} & \multicolumn{1}{|c|}{$0$} & \multicolumn{1}{|c|}{$0$} & \multicolumn{1}{|c|}{$0$} & \multicolumn{1}{|c|}{$0$} & \multicolumn{1}{|c|}{$1$} & \multicolumn{1}{|c|}{$1$} & \multicolumn{1}{|c|}{$1$} & \multicolumn{1}{|c|}{$1$} \\
         & $x_2$ & \multicolumn{1}{|c|}{$0$} & \multicolumn{1}{|c|}{$0$} & \multicolumn{1}{|c|}{$1$} & \multicolumn{1}{|c|}{$1$} & \multicolumn{1}{|c|}{$0$} & \multicolumn{1}{|c|}{$0$} & \multicolumn{1}{|c|}{$1$} & \multicolumn{1}{|c|}{$1$} & \multicolumn{1}{|c|}{$0$} & \multicolumn{1}{|c|}{$0$} & \multicolumn{1}{|c|}{$1$} & \multicolumn{1}{|c|}{$1$} & \multicolumn{1}{|c|}{$0$} & \multicolumn{1}{|c|}{$0$} & \multicolumn{1}{|c|}{$1$} & \multicolumn{1}{|c|}{$1$} \\
         & $x_1$ & \multicolumn{1}{|c|}{$0$} & \multicolumn{1}{|c|}{$1$} & \multicolumn{1}{|c|}{$0$} & \multicolumn{1}{|c|}{$1$} & \multicolumn{1}{|c|}{$0$} & \multicolumn{1}{|c|}{$1$} & \multicolumn{1}{|c|}{$0$} & \multicolumn{1}{|c|}{$1$} & \multicolumn{1}{|c|}{$0$} & \multicolumn{1}{|c|}{$1$} & \multicolumn{1}{|c|}{$0$} & \multicolumn{1}{|c|}{$1$} & \multicolumn{1}{|c|}{$0$} & \multicolumn{1}{|c|}{$1$} & \multicolumn{1}{|c|}{$0$} & \multicolumn{1}{|c|}{$1$} \\
         \hline
    \end{tabular}
    \caption{Weight-reduced stabilizer generators of the quantum Reed--Muller code $\mathrm{QRM}(4)$ constructed from \cref{thm:reduced_weight}.}
        \label{tab:all_g_from_hA}
    \end{table}

\end{example}

Quantum Reed--Muller codes are not quantum LDPC codes as the stabilizer weights increase with the block length. \cref{thm:reduced_weight} can help us construct a generating set with minimal weights for the quantum Reed--Muller codes, which can be very useful if the FTEC schemes to be applied are Shor \cite{Shor96,DA07,TPB23} or flag \cite{CR18,CR20} schemes as their circuits scale with the weight of the stabilizer generators to be measured. The stabilizer weights might not affect the scaling of the circuits for Steane \cite{Steane97,Steane04} or Knill \cite{Knill05} schemes directly since their circuits involve transversal gates and are independent of stabilizer weight. However, the stabilizer weights could affect the preparation of the ancilla blocks required for such schemes.
    
\section{Logical Clifford operators of quantum Reed--Muller codes} \label{sec:logical_for_QRM}

In this section, we first provide constructions of fold-transversal gates for the quantum Reed--Muller code $\mathrm{QRM}(m)$ from automorphisms of its corresponding classical Reed--Muller code $\mathrm{RM}(m/2-1,m)$. A sequence of such fold-transversal gates can then be used to construct addressable $\logg{S}$ gates on any logical qubit and addressable controlled-$\g{Z}$ gates on specific logical qubit pairs. With the help of a transversal gate $\g{H}^{\otimes n}$, we later show that addressable $\logg{H}$, addressable $\logg{SW}$, and addressable controlled-$\g{Z}$ gates on any logical qubit (or any pair of logical qubits) can also be constructed. These gates are sufficient to generate the full logical Clifford group for any $\mathrm{QRM}(m)$.

\subsection{Fold-transversal gates constructed from code automorphisms} \label{subsec:fold_from_aut}

Transversal gates \cite{Gottesman97} are known to provide an efficient way to perform fault-tolerant quantum computation on a code space since the implementation has circuit-depth one and no ancilla qubits are required. A transversal gate applied to a single code block involves only single-qubit gates, thus does not spread errors within the code block. For any stabilizer code, any logical Pauli operation can be implemented transversally since there exists a choice of symplectic basis such that the logical Pauli $\g{X}$ ($\g{Z}$) operator on each logical qubit can be written as a tensor product of physical $\g{X}$ ($\g{Z}$) and $\g{I}$ operators \cite{BDH06,Wilde09b}. For other logical operations such as logical Clifford or non-Clifford operations, whether their transversal implementation exists depends on the code symmetry. One well-known example of a stabilizer code that admits
a transversal implementation of several Clifford gates
is the \codepar{7,1,3} Steane code \cite{Steane96}, which is the smallest 2D color code \cite{BM06}. For this code, logical Pauli operators are $\logg{X}=\g{X}^{\otimes 7}$ and $\logg{Z}=\g{Z}^{\otimes 7}$, and $\logg{H}$ and $\logg{S}$ gates can be implemented transversally by $\g{H}^{\otimes 7}$ and $\left(\g{S}^\dagger\right)^{\otimes 7}$, respectively.\footnote{The full logical Clifford group of the Steane code defined on a single code block is $\overline{C}_1=\left\langle\logg{H},\logg{S}\right\rangle$ which can be achieved using only transversal gates, but this is not a high-rate code since $k=1$.}

Fold-transversal gates \cite{Moussa16,BB24} provide an alternative way to implement logical gates with a constant-depth circuit and without ancilla qubits; for a general CSS code, the $\g{X}$- and $\g{Z}$-type stabilizer generators may not overlap, in which case a transversal gate such as $\g{H}^{\otimes n}$ would not be a valid logical operator since it would not preserve the stabilizer group. However, on some CSS codes, it is possible to find a permutation of physical qubits that maps the support of each $\g{X}$-type generator to that of some $\g{Z}$-type generator (and vice versa). Such permutations are called \emph{$\g{ZX}$-dualities}, and a CSS code that possesses $\g{ZX}$-dualities is called \emph{self-$\g{ZX}$-dual} \cite{BB24}. The surface code is an example of a self-$\g{ZX}$-dual CSS code: for the unrotated variant, one can reflect the code along its diagonal line to swap the $\g{X}$- and $\g{Z}$-type generators. For self-$\g{ZX}$-dual CSS codes, some logical gates can be implemented by a constant-depth circuit that involves only single-qubit gates and two-qubit gates with non-overlapping supports; such a logical gate is called fold-transversal. As an example, the $\logg{H}$ gate of an unrotated surface code can be implemented by applying $\g{H}^{\otimes n}$ and swapping each physical qubit with its reflection along the diagonal line. Fold-transversal gates were originally developed for surface codes in \cite{Moussa16}, and were generalized to any self-$\g{ZX}$-dual CSS code in \cite{BB24}. Note that self-dual CSS codes are a special case of self-$\g{ZX}$-dual CSS codes in which the trivial permutation serves as a $\g{ZX}$-duality.

Although the definition of fold-transversal gates in Ref. \cite{BB24} covers transversal gates, we make a distinction between transversal and fold-transversal gates in this work. In particular, a \emph{transversal gate} refers to a logical gate whose implementation comprises only single-qubit gates, i.e., the gate is strictly transversal. In contrast, a \emph{fold-transversal gate} refers to a logical gate which is \emph{not} strictly transversal, whose implementation comprises two-qubit gates with non-overlapping supports and single-qubit gates. In this work, we are interested only in operations inside a single code block, and not those involving two or more blocks.

The transversal and fold-transversal gates developed in this work are related to the Hadamard gate $\g{H}$, the phase gate $\g{S}$, the swap gate $\g{SW}$, the controlled-NOT gate $\g{CX}$, and two types of controlled-$\g{Z}$ gates; namely, the zero-zero-controlled-$\g{Z}$ gate $\g{C_{00}Z}$ and the one-one-controlled-$\g{Z}$ gate $\g{C_{11}Z}$. The actions of these gates in the computational basis can be described by the following matrices:
\begin{equation}
    \g{H} = \frac{1}{\sqrt{2}}\begin{pmatrix*}[r]
    1 & 1 \\
    1 & -1
\end{pmatrix*},\quad
\g{S} = \begin{pmatrix}
    1 & 0 \\
    0 & \up{i}
\end{pmatrix},\quad
\g{SW} = \begin{pmatrix}
    1 & 0 & 0 & 0 \\
    0 & 0 & 1 & 0 \\
    0 & 1 & 0 & 0 \\
    0 & 0 & 0 & 1
\end{pmatrix}, \nonumber
\end{equation}
\begin{equation}
\g{CX} = \begin{pmatrix*}[r]
    1 & 0 & 0 & 0 \\
    0 & 1 & 0 & 0 \\
    0 & 0 & 0 & 1 \\
    0 & 0 & 1 & 0
\end{pmatrix*},\quad
    \g{C_{00}Z} = \begin{pmatrix*}[r]
    -1 & 0 & 0 & 0 \\
    0 & 1 & 0 & 0 \\
    0 & 0 & 1 & 0 \\
    0 & 0 & 0 & 1
\end{pmatrix*},\quad
\g{C_{11}Z} = \begin{pmatrix*}[r]
    1 & 0 & 0 & 0 \\
    0 & 1 & 0 & 0 \\
    0 & 0 & 1 & 0 \\
    0 & 0 & 0 & -1
\end{pmatrix*}, \nonumber
\end{equation}
where $\up{i} = \sqrt{-1}$. Note that $\g{C_{11}Z}$ is the conventional controlled-$\g{Z}$ gate in the literature, and $\g{C_{00}Z} = (\g{X}\otimes \g{X})\g{C_{11}Z}(\g{X}\otimes \g{X})= \g{C_{11}Z}(\g{Z}\otimes \g{Z})$ up to some global phase. Illustrations of $\g{H}$, $\g{S}$, $\g{SW}$, $\g{CX}$, $\g{C_{00}Z}$, and $\g{C_{11}Z}$ gates in a quantum circuit are shown in \cref{fig:basic_gates}.

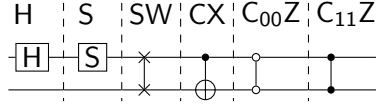
\begin{figure}[htbp]
	\centering
    \begin{tikzpicture}
\begin{yquant}
nobit n;
qubit {} q[2];

text {$\g{H}$} n;
box {$\g{H}$} q[0];
barrier (-);

text {$\g{S}$} n;
box {$\g{S}$} q[0];
barrier (-);

text {$\g{SW}$} n;
swap (q);
barrier (-);

text {$\g{CX}$} n;
cnot q[1] | q[0];
barrier (-);

text {$\g{C_{00}Z}$} n;
cnot ~q;
barrier (-);

text {$\g{C_{11}Z}$} n;
cnot |q;

\end{yquant}
\end{tikzpicture}
	\caption{Circuit diagrams for the Hadamard gate $\g{H}$, the phase gate $\g{S}$, the swap gate $\g{SW}$, the controlled-NOT gate $\g{CX}$, and two types of controlled-$\g{Z}$ gates $\g{C_{00}Z}$ and $\g{C_{11}Z}$.}
	\label{fig:basic_gates}%
\end{figure}

$\g{H}$, $\g{S}$, $\g{SW}$, $\g{CX}$, $\g{C_{00}Z}$ and $\g{C_{11}Z}$ transform the Pauli $\g{X}$ and $\g{Z}$ operators as follows:
\begin{equation}
    \begin{tabular}{r r c l c r c l}
        $\g{H}$: & $\g{X}$ & $\mapsto$ & $\g{Z}$, & $\quad$ & $\g{Z}$ & $\mapsto$ & $\g{X}$, \\
        $\g{S}$: & $\g{X}$ & $\mapsto$ & $\up{i}\g{XZ}$, & $\quad$ & $\g{Z}$ & $\mapsto$ & $\g{Z}$, \\
        $\g{SW}$: & $\g{X}\otimes\g{I}$ & $\mapsto$ & $\g{I}\otimes\g{X}$, & $\quad$ & $\g{Z}\otimes\g{I}$ & $\mapsto$ & $\g{I}\otimes\g{Z}$, \\
        $\g{CX}$: & $\g{X}\otimes\g{I}$ & $\mapsto$ & $\g{X}\otimes\g{X}$, & $\quad$ & $\g{Z}\otimes\g{I}$ & $\mapsto$ & $\g{Z}\otimes\g{I}$, \\
        & $\g{I}\otimes\g{X}$ & $\mapsto$ & $\g{I}\otimes\g{X}$, & $\quad$ & $\g{I}\otimes\g{Z}$ & $\mapsto$ & $\g{Z}\otimes\g{Z}$, \\
        $\g{C_{00}Z}$: & $\g{X}\otimes\g{I}$ & $\mapsto$ & $-\g{X}\otimes\g{Z}$, & $\quad$ & $\g{Z}\otimes\g{I}$ & $\mapsto$ & $\g{Z}\otimes\g{I}$, \\
        & $\g{I}\otimes\g{X}$ & $\mapsto$ & $-\g{Z}\otimes\g{X}$, & $\quad$ & $\g{I}\otimes\g{Z}$ & $\mapsto$ & $\g{I}\otimes\g{Z}$, \\
        $\g{C_{11}Z}$: & $\g{X}\otimes\g{I}$ & $\mapsto$ & $\g{X}\otimes\g{Z}$, & $\quad$ & $\g{Z}\otimes\g{I}$ & $\mapsto$ & $\g{Z}\otimes\g{I}$, \\
        & $\g{I}\otimes\g{X}$ & $\mapsto$ & $\g{Z}\otimes\g{X}$, & $\quad$ & $\g{I}\otimes\g{Z}$ & $\mapsto$ & $\g{I}\otimes\g{Z}$, \\
    \end{tabular} \nonumber
\end{equation}
where the first and second qubits that the $\g{CX}$ gate acts on are the control and target qubits, respectively.

In this work, we define two types of fold-transversal gates as follows.\footnote{Note that the types of fold-transversal gates considered in this work are slightly different from the ones defined in \cite{BB24}.}
\begin{definition} \label{def:SW_PH_def}
    Suppose that $\pi: \{0,\dots,2^m-1\} \rightarrow \{0,\dots,2^m-1\} $ is a permutation of $2^m$ bits that satisfies $\pi^2 = e$. The \emph{swap-type fold-transversal gate} $\g{U_S}(\pi)$ is constructed by 
    \begin{enumerate}
        \item applying physical $\g{SW}$ to each pair of physical qubits $\{p,\pi(p)\}$ satisfying $p < \pi(p)$, and
        \item applying nothing ($\g{I}$) to any qubit $q$ satisfying $q = \pi(q)$.
    \end{enumerate}
    The \emph{phase-type fold-transversal gate} $\g{U_P}(\pi)$ is constructed by
    \begin{enumerate}
        \item applying physical $\g{C_{11}Z}$ to each pair of physical qubits $\{p,\pi(p)\}$ satisfying $p < \pi(p)$, and
        \item applying physical $\g{S}$ to any qubit $q$ satisfying $q = \pi(q)$.
    \end{enumerate}
\end{definition}
\pagebreak
\begin{example}
    Consider $\mathrm{QRM}(4)$. Swap-type fold-transversal gates $\g{U_S}(e)$, $\g{U_S}(P(1,2))$, $\g{U_S}(P(3,4))$, $\g{U_S}(Q(1,2))$, $\g{U_S}(Q(3,4))$, and $\g{U_S}(Q(1,2)Q(3,4))$ can be implemented by the circuits in \cref{fig:ex_SW-type}, while phase-type fold-transversal gates $\g{U_P}(e)$, $\g{U_P}(P(1,2))$, $\g{U_P}(P(3,4))$, $\g{U_P}(Q(1,2))$, $\g{U_P}(Q(3,4))$, and $\g{U_P}(Q(1,2)Q(3,4))$ can be implemented by the circuits in \cref{fig:ex_phase-type}.
\end{example}

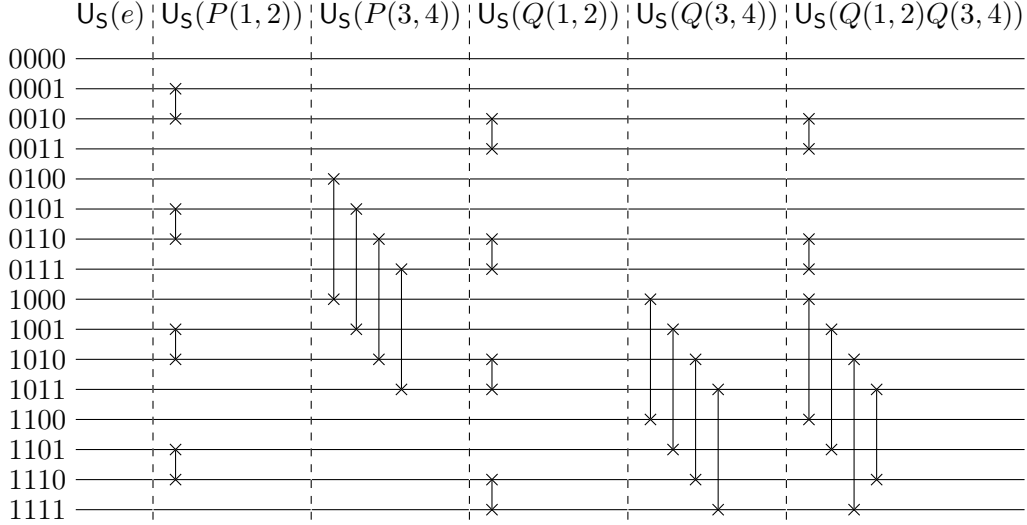
\begin{figure}[htbp]
	\centering
    \begin{tikzpicture}
\begin{yquant}
nobit n;
qubit {0000} q0;
qubit {0001} q1;
qubit {0010} q2;
qubit {0011} q3;
qubit {0100} q4;
qubit {0101} q5;
qubit {0110} q6;
qubit {0111} q7;
qubit {1000} q8;
qubit {1001} q9;
qubit {1010} q10;
qubit {1011} q11;
qubit {1100} q12;
qubit {1101} q13;
qubit {1110} q14;
qubit {1111} q15;

text {$\g{U_S}(e)$} n;
barrier (-);

text {$\g{U_S}(P(1, 2))$} n;
swap (q1, q2), (q5, q6), (q9, q10), (q13, q14);
barrier (-);

text {$\g{U_S}(P(3, 4))$} n;
swap (q8, q4);
swap (q9, q5);
swap (q10, q6);
swap (q11, q7);
barrier (-);

text {$\g{U_S}(Q(1, 2))$} n;
swap (q6, q7), (q2, q3), (q10, q11), (q14, q15);
barrier (-);

text {$\g{U_S}(Q(3, 4))$} n;
swap (q8, q12);
swap (q9, q13);
swap (q10, q14);
swap (q11, q15);
barrier (-);

text {$\g{U_S}(Q(1, 2)Q(3, 4))$} n;
swap (q2, q3), (q6, q7), (q8, q12);
swap (q9, q13);
swap (q10, q15);
swap (q11, q14);

\end{yquant}
\end{tikzpicture}
	\caption{Examples of swap-type fold-transversal gates for the quantum Reed--Muller code $\mathrm{QRM}(4)$.}
	\label{fig:ex_SW-type}%
\end{figure}

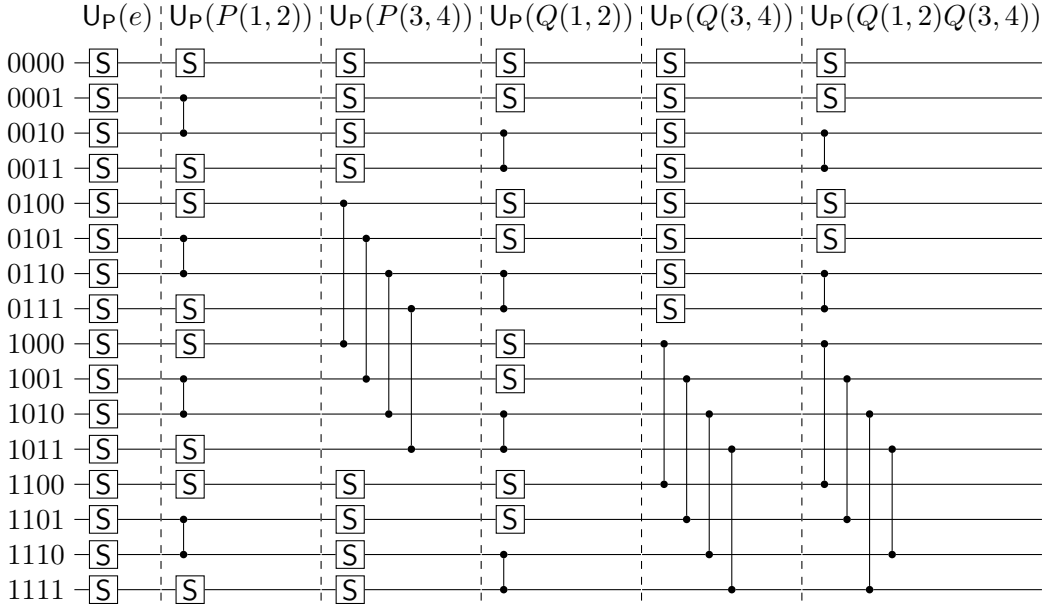
\begin{figure}[htbp]
	\centering
    \begin{tikzpicture}
\begin{yquant}
nobit n;
qubit {0000} q0;
qubit {0001} q1;
qubit {0010} q2;
qubit {0011} q3;
qubit {0100} q4;
qubit {0101} q5;
qubit {0110} q6;
qubit {0111} q7;
qubit {1000} q8;
qubit {1001} q9;
qubit {1010} q10;
qubit {1011} q11;
qubit {1100} q12;
qubit {1101} q13;
qubit {1110} q14;
qubit {1111} q15;

hspace {1mm} -;
text {$\g{U_P}(e)$} n;
box {$\g{S}$} q0-q15;
barrier (-);

text {$\g{U_P}(P(1, 2))$} n;
zz (q1, q2), (q5, q6), (q9, q10), (q13, q14);
box {$\g{S}$} q0, q3, q4, q7, q8, q11, q12, q15;
barrier (-);

text {$\g{U_P}(P(3, 4))$} n;
zz (q8, q4);
zz (q9, q5);
zz (q10, q6);
zz (q11, q7);
box {$\g{S}$} q0-q3, q12-q15;
barrier (-);

text {$\g{U_P}(Q(1, 2))$} n;
zz (q2, q3);
zz (q6, q7);
zz (q10, q11);
zz (q14, q15);
box {$\g{S}$} q0, q1, q4, q5, q8, q9, q12, q13;
barrier (-);

text {$\g{U_P}(Q(3, 4))$} n;
zz (q8, q12);
zz (q9, q13);
zz (q10, q14);
zz (q11, q15);
box {$\g{S}$} q0-q7;
barrier (-);

text {$\g{U_P}(Q(1, 2)Q(3, 4))$} n;
zz (q2, q3), (q6, q7), (q8, q12);
zz (q9, q13);
zz (q10, q15);
zz (q11, q14);
box {$\g{S}$} q0, q1, q4, q5;

\end{yquant}
\end{tikzpicture}
	\caption{Examples of phase-type fold-transversal gates for the quantum Reed--Muller code $\mathrm{QRM}(4)$.}
	\label{fig:ex_phase-type}%
\end{figure}

Swap-type fold-transversal gates $\g{U_S}(\pi)$ and phase-type fold transversal gates $\g{U_P}(\pi)$ transform any Pauli-$\g{X}$ or Pauli-$\g{Z}$ operator as follows:
\begin{proposition} \label{prop:SW_PH_transform}
    Let $\mathbf{w} \in \mathbb{Z}_2^{2^m}$ be a binary vector and suppose that the permutation $\pi$ satisfies $\pi^2=e$. Then, 
    \begin{enumerate}
        \item $\g{U_S}(\pi)$ transforms $\g{X}(\mathbf{w}) \mapsto \g{X}(M_{\pi}\mathbf{w})$ and $\g{Z}(\mathbf{w}) \mapsto \g{Z}(M_{\pi}\mathbf{w})$;
        \item $\g{U_P}(\pi)$ transforms $\g{X}(\mathbf{w}) \mapsto (\up{i})^c \g{X}(\mathbf{w})\g{Z}(M_{\pi}\mathbf{w})$ and $\g{Z}(\mathbf{w}) \mapsto \g{Z}(\mathbf{w})$, where $c = \mathrm{wt}(\mathbf{w} \wedge M_{\pi}\mathbf{w})$.
    \end{enumerate}
\end{proposition}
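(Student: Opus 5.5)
The proof proceeds qubit by qubit, using the fact that both $\g{U_S}(\pi)$ and $\g{U_P}(\pi)$ are tensor products of gates on disjoint supports. The supports are the orbits of $\pi$, which are pairs $\{p,\pi(p)\}$ with $p<\pi(p)$ and fixed points $q=\pi(q)$, because $\pi^2=e$. Conjugation of a tensor product of Paulis therefore factorizes over these orbits. It suffices to compute the action on each orbit and then reassemble the result using $(M_\pi\mathbf{w})_{\pi(p)}=w_p$, i.e.\ $(M_\pi\mathbf{w})_{p}=w_{\pi(p)}$ (the latter again because $\pi=\pi^{-1}$).

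For part 1, on a pair $\{p,\pi(p)\}$ the physical $\g{SW}$ maps $\g{X}^{w_p}\otimes\g{X}^{w_{\pi(p)}}$ to $\g{X}^{w_{\pi(p)}}\otimes\g{X}^{w_p}$, so the new entry at $p$ is $w_{\pi(p)}=(M_\pi\mathbf{w})_p$, and similarly at $\pi(p)$. On a fixed point, $\g{I}$ leaves $w_q=(M_\pi\mathbf{w})_q$ unchanged. Hence $\g{X}(\mathbf{w})\mapsto\g{X}(M_\pi\mathbf{w})$, and the same argument gives the $\g{Z}$ case. No phases arise.

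For part 2, the $\g{Z}$ statement is immediate because $\g{C_{11}Z}$ and $\g{S}$ are diagonal. For $\g{X}(\mathbf{w})$, I handle the two orbit types using the conjugation table.

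On a fixed point $q$ with $w_q=1$, $\g{S}$ maps $\g{X}\mapsto \up{i}\g{XZ}$. This contributes $\g{X}$ and $\g{Z}$ at $q$, where $(M_\pi\mathbf{w})_q=w_q=1$, together with one factor of $\up{i}$. Such $q$ lie in the support of $\mathbf{w}\wedge M_\pi\mathbf{w}$ and are counted once.

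On a pair $(p,p')$ with $p'=\pi(p)$, I use that $\g{C_{11}Z}$ maps $\g{X}\otimes\g{I}\mapsto \g{X}\otimes\g{Z}$ and $\g{I}\otimes\g{X}\mapsto\g{Z}\otimes\g{X}$. There are three subcases:
\begin{enumerate}
\item If $(w_p,w_{p'})=(1,0)$, the image is $\g{X}_p\g{Z}_{p'}$, which is $\g{X}(\mathbf{w})\g{Z}(M_\pi\mathbf{w})$ restricted to the pair, since $(M_\pi \mathbf{w})_{p'}=w_p=1$ and $(M_\pi \mathbf{w})_{p}=0$.
\item If $(w_p,w_{p'})=(0,1)$, the situation is symmetric.
\item If $(w_p,w_{p'})=(1,1)$, the image is $(\g{X}_p\g{Z}_{p'})(\g{Z}_p\g{X}_{p'})=\g{X}_p\g{Z}_p\otimes \g{Z}_{p'}\g{X}_{p'}$. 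I reorder this to $\g{X}_p\g{X}_{p'}\g{Z}_p\g{Z}_{p'}$, which needs $\g{Z}_{p'}\g{X}_{p'}=-\g{X}_{p'}\g{Z}_{p'}$ and so contributes $-1=\up{i}^2$. Both $p$ and $p'$ lie in $\mathbf{w}\wedge M_\pi\mathbf{w}$, giving exactly two counted positions and hence $\up{i}^2$.
\end{enumerate}
Multiplying over orbits, with the convention that $\g{X}(\mathbf{w})\g{Z}(M_\pi\mathbf{w})$ is written with all $\g{X}$'s to the left, gives total phase $\up{i}^{c}$ with $c=\mathrm{wt}(\mathbf{w}\wedge M_\pi\mathbf{w})$.

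The only delicate point is the phase bookkeeping in the $(1,1)$ pair case and on fixed points. I must consistently order $\g{X}$ before $\g{Z}$ on each qubit so that $\up{i}\g{XZ}$ on fixed points and the sign from reordering on doubly-occupied pairs both match $\up{i}$ per position of $\mathbf{w}\wedge M_\pi\mathbf{w}$. Everything else follows directly from the single- and two-qubit transformation rules listed before \cref{def:SW_PH_def}.
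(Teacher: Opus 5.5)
Your proof is correct and follows the paper's route: decompose $\g{U_S}(\pi)$ and $\g{U_P}(\pi)$ over the orbits of $\pi$ (pairs and fixed points), then apply the single- and two-qubit conjugation rules. The only difference is in the phase bookkeeping. The paper fixes the phase globally by noting that the image is a phase-free Pauli string and counting its $\g{Y}$ factors via $\g{Y}=\up{i}\g{XZ}$, whereas you track the sign orbit by orbit (one $\up{i}$ per occupied fixed point, $\up{i}^2$ from the reordering on each doubly occupied pair); both give the same count $\mathrm{wt}(\mathbf{w}\wedge M_\pi\mathbf{w})$.
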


\begin{proof}
    In what follows, a single-qubit physical gate $\g{V}$ applied to the $p$-th physical qubit is denoted by $\g{V}_p$, and a two-qubit physical gate $\g{W}$ applied to the $p$-th and $q$-th physical qubits is denoted by $\g{W}_{p,q}$.
    \begin{enumerate}[label=(\arabic*)]
        \item From the definition of $\g{U_S}(\pi)$, physical qubits are permuted by swap gates similar to how the physical bits of $\mathbf{w}$ are permuted by $\pi$. Therefore, $\g{X}(\mathbf{w})$ (or $\g{Z}(\mathbf{w})$) is mapped to $\g{X}(M_{\pi}\mathbf{w})$ (or $\g{Z}(M_{\pi}\mathbf{w})$).
        \item All physical gates in $\g{U_P}(\pi)$ act trivially on Pauli-$\g{Z}$ operators, so $\g{Z}(\mathbf{w})$ is mapped to itself. Next, consider the $p$-th bit of $\mathbf{w}$ in which the bit value is 1. The bit corresponds to the tensor factor $\g{X}_p$ of the operator $\g{X}(\mathbf{w})$. If $p$ satisfies $p=\pi(p)$, then $\g{U_P}(\pi)$ applies $\g{S}_p$, so $\g{X}_p$ is mapped to $\up{i}\g{X}_p\g{Z}_p =\up{i}\g{X}_p\g{Z}_{\pi(p)}$. If $p$ satisfies $p < \pi(p)$, then $\g{U_P}(\pi)$ applies $(\g{C_{11}Z})_{p,\pi(p)}$, so $\g{X}_p$ is mapped to $\g{X}_p\g{Z}_{\pi(p)}$, and $\g{X}_{\pi(p)}$ is mapped to $\g{X}_{\pi(p)}\g{Z}_{p}$. Therefore, $\g{U_P}(\pi)$ maps $\g{X}(\mathbf{w})$ to $(\up{i})^\alpha \g{X}(\mathbf{w})\g{Z}(M_{\pi}\mathbf{w})$ where $(\up{i})^\alpha$ is some phase factor and $\alpha$ is some integer. Recall that a physical $\g{S}$ maps $\g{X}$ to $\g{Y}$, and a physical $\g{CZ}$ gate maps $\g{X}\otimes \g{I}$ to $\g{X}\otimes \g{Z}$, $\g{I}\otimes \g{X}$ to $\g{Z}\otimes \g{X}$, and $\g{X}\otimes \g{X}$ to $\g{Y}\otimes \g{Y}$. That is, under the operation of $\g{U_P}\left(\pi\right)$, $\g{X}(\mathbf{w})$ is mapped to some Pauli operator $\g{P}$ which is a tensor product of $\g{I},\g{X},\g{Y},\g{Z}$ \emph{without} phase factor. Let $\g{P}_X$ and $\g{P}_Z$ be the $\g{X}$-part and the $\g{Z}$-part of $\g{P}$, respectively. Because $\g{Y}=\up{i}\g{XZ}$, we can write $\g{P} = (\up{i})^c \g{P}_X\g{P}_Z$ where $c$ is the number of $\g{Y}$'s in $\g{P}$. Since $\g{P}_X = \g{X}(\mathbf{w})$ and $\g{P}_Z=\g{Z}(M_{\pi}\mathbf{w})$, we have that $(\up{i})^\alpha = (\up{i})^c$ where $c$ is the number of bits where $\mathbf{w}$ and $M_{\pi}\mathbf{w}$ overlap, given by $\mathrm{wt}(\mathbf{w} \wedge M_{\pi}\mathbf{w})$. That is, $\g{U_P}(\pi)$ maps $\g{X}(\mathbf{w})$ to $(\up{i})^c \g{X}(\mathbf{w})\g{Z}(M_{\pi}\mathbf{w})$ where $c=\mathrm{wt}(\mathbf{w} \wedge M_{\pi}\mathbf{w})$.
        \qedhere
    \end{enumerate}
\end{proof}
It should be noted that whether $\g{U_S}(\pi)$ and $\g{U_P}(\pi)$ are valid logical operators for $\mathrm{QRM}(m)$ or not may depend on the permutation $\pi$. Also, swap-type and phase-type fold-transversal gates according to \cref{def:SW_PH_def} may not be valid logical operators for a general self-$\g{ZX}$-dual CSS code.

Next, we prove that when the permutation $\pi$ is either $P(i,j)$ or $Q(i,j)$ as defined in \cref{subsec:CRM_aut}, both $\g{U_S}(\pi)$ and $\g{U_P}(\pi)$ preserve the stabilizer group of $\mathrm{QRM}(m)$, thus are valid logical operators.
\begin{theorem}[Code-space preservation of fold-transversal gates from $P(i,j)$ and $Q(i,j)$] \label{thm:PnQ_preserve_stb}
    Let $m$ be a positive even number, $\mathrm{QRM}(m)$ be the quantum Reed--Muller code defined in \cref{def:QRM}, and $P(i,j)$ and $Q(i,j)$ be permutations defined in \cref{def:PQ_def} where $i,j \in [m]$, $i\neq j$. Then, the stabilizer group of $\mathrm{QRM}(m)$ is preserved under the operation of $\g{U_S}\left(P(i,j)\right)$, $\g{U_P}\left(P(i,j)\right)$, $\g{U_S}\left(Q(i,j)\right)$, or $\g{U_P}\left(Q(i,j)\right)$.
\end{theorem}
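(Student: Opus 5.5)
It suffices to check the images of the stabilizer generators $g_x(A)$ and $g_z(A)$ with $0\le |A|\le m/2-1$. The plan uses \cref{prop:SW_PH_transform} to compute each image and then shows it lies in the stabilizer group.

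\textbf{Step 1: $\pi$ is an automorphism of $\mathrm{RM}(m/2-1,m)$.} We first confirm that for $\pi\in\{P(i,j),Q(i,j)\}$, the map $M_\pi$ sends $\mathrm{RM}(r,m)$ to itself for every $r$, and in particular for $r=m/2-1$. By \cref{prop:permute_vA}, $M_\pi\mathbf{v}_A=\bigwedge_{a\in A}M_\pi\mathbf{v}_a$. For $P(i,j)$, each $M_\pi\mathbf{v}_a$ is a basis vector, so $M_\pi\mathbf{v}_A=\mathbf{v}_{A'}$ with $|A'|=|A|$. For $Q(i,j)$, the only factor that changes is $\mathbf{v}_i\mapsto\mathbf{v}_i+\mathbf{v}_j$. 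Expanding the wedge product by distributivity gives a sum of at most two terms $\mathbf{v}_{A''}$, each with $|A''|\le|A|$. In both cases $M_\pi\mathbf{v}_A\in\mathrm{RM}(|A|,m)\subseteq\mathrm{RM}(m/2-1,m)$.

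This requires checking that the bit-level swaps listed in \cref{subsec:CRM_aut} really induce $\mathbf{v}_i\mapsto\mathbf{v}_i+\mathbf{v}_j$ for $Q(i,j)$ and $\mathbf{v}_i\leftrightarrow\mathbf{v}_j$ for $P(i,j)$, and fix all other $\mathbf{v}_l$. This is a direct reading of the coordinate descriptions: $Q(i,j)$ flips $x_i$ exactly when $x_j=1$, and $P(i,j)$ exchanges $x_i$ and $x_j$.

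\textbf{Step 2: $\g{U_S}(\pi)$.} By \cref{prop:SW_PH_transform}(1), $g_x(A)\mapsto\g{X}(M_\pi\mathbf{v}_A)$ and $g_z(A)\mapsto\g{Z}(M_\pi\mathbf{v}_A)$. By Step 1, $M_\pi\mathbf{v}_A\in\mathrm{RM}(m/2-1,m)$, so both images are products of stabilizer generators. Since $\pi$ is a bijection, $M_\pi$ maps the code onto itself, and the stabilizer group is preserved.

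\textbf{Step 3: $\g{U_P}(\pi)$.} $\g{Z}$-type generators are fixed, by \cref{prop:SW_PH_transform}(2). For $g_x(A)$, the image is $(\up{i})^c\g{X}(\mathbf{v}_A)\g{Z}(M_\pi\mathbf{v}_A)$ with $c=\mathrm{wt}(\mathbf{v}_A\wedge M_\pi\mathbf{v}_A)$.
\begin{itemize}
\item The factor $\g{X}(\mathbf{v}_A)$ is a stabilizer.
\item The factor $\g{Z}(M_\pi\mathbf{v}_A)$ is a stabilizer by Step 1.
\item The phase must be $+1$, i.e.\ $c\equiv 0 \pmod 4$. This is the main obstacle, since the image must equal the stabilizer $\g{X}(\mathbf{v}_A)\g{Z}(M_\pi\mathbf{v}_A)$ exactly, not up to sign or $\pm\up{i}$.
\end{itemize}

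To handle the phase, write $M_\pi\mathbf{v}_A$ as a sum of one or two monomials $\mathbf{v}_{A''}$ as in Step 1. Then $\mathbf{v}_A\wedge M_\pi\mathbf{v}_A$ is a sum, over $\mathbb{Z}_2$, of monomials $\mathbf{v}_{A\cup A''}$ with $|A\cup A''|\le 2|A|+1\le m-1$.

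The cleanest route is a direct coordinate count. The positions where $\mathbf{v}_A$ and $M_\pi\mathbf{v}_A$ overlap form a set defined by fixing some coordinates $x_l$, possibly together with a linear condition such as $x_i+x_j=1$. Each such condition halves the count, and the number of conditions is at most about $|A|+2\le m/2+1$. This gives $c=2^{m-t}$ with $t\le m-2$, so $c$ is divisible by $4$.

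The boundary cases need explicit checking. One is when $i\in A$ and $j\in A$ for $Q(i,j)$, where the overlap is $\mathbf{v}_A$ itself minus a half, or all of it. Another is $|A|=m/2-1$ with both $i,j\notin A$. A third is $m=2$, where only $A=\emptyset$ occurs and $c=\mathrm{wt}(\mathbf{1})=4$. In all of these the exponent $t\le |A|+1\le m/2$ holds, which suffices for $m\ge 4$. Since phases from a fixed-point $\g{S}$ and from a pair inside the overlap are both accounted for by $c$ in \cref{prop:SW_PH_transform}, the whole argument reduces to this weight-mod-4 computation.
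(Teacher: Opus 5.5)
Your overall plan matches the paper's: push each generator through \cref{prop:SW_PH_transform}, use that $M_\pi$ maps $\mathrm{RM}(m/2-1,m)$ into itself for $\g{U_S}$ and for the $\g{Z}$-part of $\g{U_P}$, and reduce the $\g{U_P}$ case to showing $c\equiv 0 \pmod 4$. Steps 1--2 are fine.

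The gap is in the mod-4 count. You correctly flag it as the crux, but you do not establish it.

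\begin{itemize}
\item Your generic estimate of ``about $|A|+2$'' conditions gives only $t\le m/2+1$. That implies $t\le m-2$ only when $m\ge 6$. For $m=4$ and $|A|=1$ it does not exclude $c=2$, i.e.\ an image equal to $-1$ times a stabilizer.
\item Your fallback bound $t\le |A|+1$ is the right one, but you assert it only for a list of boundary cases that omits the tight case. The tight case is $|A|=m/2-1$ with exactly one of $i,j$ in $A$ for $P(i,j)$, or with $i\in A$, $j\notin A$ for $Q(i,j)$. There $c=2^{m/2}$, which is exactly $4$ when $m=4$. The case $i,j\notin A$ that you list has $c=2^{m/2+1}$ and is harmless.
\item You misdescribe the case $Q(i,j)$ with $i,j\in A$. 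The overlap is neither half of $\mathbf{v}_A$ nor all of it; it is empty. Indeed $M_{Q(i,j)}\mathbf{v}_A=\mathbf{v}_{A\setminus\{i\}}\wedge(\mathbf{1}+\mathbf{v}_i)$ is disjoint from $\mathbf{v}_A$, so $c=0$.
\end{itemize}

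To close the gap, replace the coordinate heuristic with the explicit case computation the paper does.

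\begin{itemize}
\item For $P(i,j)$: $M_\pi\mathbf{v}_A=\mathbf{v}_{A'}$, where $A'$ is obtained from $A$ by exchanging $i$ and $j$. So $\mathbf{v}_A\wedge M_\pi\mathbf{v}_A=\mathbf{v}_{A\cup A'}$. This gives $c=2^{m-|A|}$ if $A$ contains both or neither of $i,j$, and $c=2^{m-|A|-1}$ if it contains exactly one.
\item For $Q(i,j)$ with $i\notin A$: $c=2^{m-|A|}$.
\item For $Q(i,j)$ with $i,j\in A$: $c=0$.
\item For $Q(i,j)$ with $i\in A$, $j\notin A$: $\mathbf{v}_A\wedge M_\pi\mathbf{v}_A=\mathbf{v}_A\wedge(\mathbf{1}+\mathbf{v}_j)$, so $c=2^{m-|A|-1}$ by \cref{prop:Hamming_weight}.
\end{itemize}

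Since $|A|\le m/2-1$, the exponent $m-|A|$ is always at least $m/2+1\ge 2$. The exponent $m-|A|-1$ occurs only when $A\neq\emptyset$, which forces $m\ge 4$ and gives an exponent of at least $m/2\ge 2$. This settles $c\equiv 0\pmod 4$ uniformly, including $m=2$.
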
 
\begin{proof}
    We start by considering $\pi=P(i,j)$. Recall that $P(i,j)$ maps $\mathbf{v}_i$ to $\mathbf{v}_j$, maps $\mathbf{v}_j$ to $\mathbf{v}_i$, and maps $\mathbf{v}_l$ to itself if $l \neq i,j$. Any $\g{X}$-type ($\g{Z}$-type) stabilizer generator of $\mathrm{QRM}(m)$ is of the form $g_x(A) = \g{X}\left(\mathbf{v}_A\right)$ ($g_z(A) = \g{Z}\left(\mathbf{v}_A\right)$) where $A \subsetneq [m], 0\leq|A|\leq m/2-1$. By \cref{prop:SW_PH_transform,prop:permute_vA}, $\g{U_S}\left(P(i,j)\right)$ maps $g_x(A)$ as follows.
    \begin{enumerate}
        \item If either $\{i,j\} \subsetneq A$ or $\{i,j\} \cap A =\emptyset$, then $g_x(A) \mapsto g_x(A)$.
        \item If $i \in A$ but $j \notin A$, then $g_x(A) \mapsto g_x(A')$ where $A'=(A\cup\{j\})\setminus\{i\}$.
        \item If $j \in A$ but $i \notin A$, then $g_x(A) \mapsto g_x(A')$ where $A'=(A\cup\{i\})\setminus\{j\}$.
    \end{enumerate}
    Note that if $\g{U_S}\left(P(i,j)\right)$ maps $g_x(A)$ to $g_x(A')$, it also maps $g_x(A')$ to $g_x(A)$. Similar arguments are applicable to the transformation of $g_z(A)$ by $\g{U_S}\left(P(i,j)\right)$. Thus, the stabilizer group of $\mathrm{QRM}(m)$ which is $\left\langle g_x(A), g_z(A)\right\rangle_{A \subsetneq [m],0 \leq |A| \leq \frac{m}{2}-1}$ is preserved under the operation of $\g{U_S}\left(P(i,j)\right)$.

    For any $A \subsetneq [m], 0\leq|A|\leq m/2-1$, $\g{U_P}\left(P(i,j)\right)$ maps $g_z(A)$ to itself and maps $g_x(A)$ as follows.
    \begin{enumerate}
        \item If either $\{i,j\} \subsetneq A$ or $\{i,j\} \cap A =\emptyset$, then $g_x(A) \mapsto (\up{i})^c g_x(A)g_z(A)$ where $c = \mathrm{wt}(\mathbf{v}_A \wedge \mathbf{v}_A) = \mathrm{wt}(\mathbf{v}_A)$. By \cref{prop:Hamming_weight}, $c = 2^{m-|A|}$. Because $0\leq|A|\leq m/2-1$ and $m \geq 2$, $c$ is always divisible by 4. Thus, $g_x(A)$ is mapped to $g_x(A)g_z(A)$.
        \item If $i \in A$ but $j \notin A$ (which can occur only when $m \geq 4$), then $g_x(A) \mapsto (\up{i})^c g_x(A)g_z(A')$ where $A'=(A\cup\{j\})\setminus\{i\}$ and $c = \mathrm{wt}(\mathbf{v}_A \wedge \mathbf{v}_{A'}) = \mathrm{wt}(\mathbf{v}_{A\cup\{j\}})$. By \cref{prop:Hamming_weight}, $c = 2^{m-|A|-1}$. Because $0\leq|A|\leq m/2-1$ and $m \geq 4$, $c$ is always divisible by 4. Thus, $g_x(A)$ is mapped to $g_x(A)g_z(A')$.
        \item If $j \in A$ but $i \notin A$ (which can occur only when $m \geq 4$), then $g_x(A) \mapsto (\up{i})^c g_x(A)g_z(A')$ where $A'=(A\cup\{i\})\setminus\{j\}$ and $c = \mathrm{wt}(\mathbf{v}_A \wedge \mathbf{v}_{A'}) = \mathrm{wt}(\mathbf{v}_{A\cup\{i\}})$. Similarly to the previous case, $c = 2^{m-|A|-1}$ is always divisible by 4. Thus, $g_x(A)$ is mapped to $g_x(A)g_z(A')$.
    \end{enumerate}
    Therefore, the stabilizer group of $\mathrm{QRM}(m)$ is preserved under the operation of $\g{U_P}\left(P(i,j)\right)$.
    
    Next, we consider $\pi=Q(i,j)$. Recall that $Q(i,j)$ maps $\mathbf{v}_i$ to $\mathbf{v}_i+\mathbf{v}_j$, and maps $\mathbf{v}_l$ to itself if $l \neq i$. For any $A \subsetneq [m], 0\leq|A|\leq m/2-1$, $\g{U_S}\left(Q(i,j)\right)$ maps $g_x(A)$ as follows.
    \begin{enumerate}
        \item If $i \notin A$, then $g_x(A) \mapsto g_x(A)$.
        \item If $i \in A$, then $g_x(A) \mapsto g_x(A)g_x(A')$ where $A'=(A\cup\{j\})\setminus\{i\}$.
    \end{enumerate}
    Similar arguments are applicable to the transformation of $g_z(A)$ by $\g{U_S}\left(Q(i,j)\right)$. Thus, the stabilizer group of $\mathrm{QRM}(m)$ is preserved under the operation of $\g{U_S}\left(Q(i,j)\right)$.

    For any $A \subsetneq [m], 0\leq|A|\leq m/2-1$, $\g{U_P}\left(Q(i,j)\right)$ maps $g_z(A)$ to itself and maps $g_x(A)$ as follows.
    \begin{enumerate}
        \item If $i \notin A$, then $g_x(A) \mapsto (\up{i})^c g_x(A)g_z(A)$ where $c = \mathrm{wt}(\mathbf{v}_A \wedge \mathbf{v}_A)= \mathrm{wt}(\mathbf{v}_A) = 2^{m-|A|}$ (by \cref{prop:Hamming_weight}). Because $0\leq|A|\leq m/2-1$ and $m \geq 2$, $c$ is always divisible by 4. Thus, $g_x(A)$ is mapped to $g_x(A)g_z(A)$.
        \item If $i \in A$, then $g_x(A) \mapsto (\up{i})^c g_x(A)g_z(A)g_z(A')$ where $A'=(A\cup\{j\})\setminus\{i\}$ and $c = \mathrm{wt}(\mathbf{v}_A \wedge (\mathbf{v}_A+\mathbf{v}_{A'}))$. 
        \begin{enumerate}
            \item If $j \in A$, then $\mathbf{v}_A \wedge (\mathbf{v}_A+\mathbf{v}_{A'}) = \mathbf{v}_A+\mathbf{v}_A = \mathbf{0}$, thus $c=0$.
            \item If $j \notin A$ (which can occur only when $m \geq 4$), then $\mathbf{v}_A \wedge (\mathbf{v}_A+\mathbf{v}_{A'}) = \mathbf{v}_A+\mathbf{v}_{A\cup\{j\}} = \mathbf{v}_A\wedge(\mathbf{1}+\mathbf{v}_j)$. Thus, $c = 2^{m-|A|-1}$ by \cref{prop:Hamming_weight}. Since $0\leq|A|\leq m/2-1$ and $m \geq 4$, $c$ is always divisible by 4.
        \end{enumerate}
        In both cases, $g_x(A)$ is mapped to $g_x(A)g_z(A)g_z(A')$.
    \end{enumerate}
    Therefore, the stabilizer group of $\mathrm{QRM}(m)$ is preserved under the operation of $\g{U_P}\left(Q(i,j)\right)$.
\end{proof}

A product of automorphisms is an automorphism, so it is also possible to construct fold-transversal gates from a product of automorphisms. In this work, we are particularly interested in phase-type fold-transversal gates constructed from products of $Q(i,j)$, which will be later used as ingredients to construct addressable gates for $\mathrm{QRM}(m)$. We introduce the following notation to describe such products of automorphisms.
\begin{definition} \label{def:K}
    Let $m$ be a positive even number and let $K=\{(i_1,i_2),\dots,(i_{2c-1},i_{2c})\}$ be a set of $c$ ordered pairs such that $i_1,\dots,i_{2c}\in[m]$ are all distinct. We define $Q(K)$ to be the product $Q(i_1,i_2)\cdots Q(i_{2c-1},i_{2c})$, define $F_1(K) = \{i_1,i_3,\dots,i_{2c-1}\}$ to be the set of the first elements from each ordered pair in $K$, and define $F_2(K) = \{i_2,i_4,\dots,i_{2c}\}$ to be the set of the second arguments of $K$. We also define $Q(\emptyset)=e$ (the trivial permutation), with $F_1(\emptyset)=F_2(\emptyset)=\emptyset$.
\end{definition}
Because we require the vector indices $i_1,\dots,i_{2c}\in[m]$ that appear in $K$ to all be distinct, the order of the permutations in the product does not matter. For example, we have that $Q(K)=Q(i_1,i_2)\cdots Q(i_{2c-1},i_{2c})=Q(i_{2c-1},i_{2c})\cdots Q(i_1,i_2)$. The commutation relations are also applicable to their corresponding permutation matrices; $M_{Q(K)}=M_{Q(i_1,i_2)}\cdots M_{Q(i_{2c-1},i_{2c})}=M_{Q(i_{2c-1},i_{2c})}\cdots M_{Q(i_1,i_2)}$. The same requirement of $i_1,\dots,i_{2c}$ also limits the maximum size of $K$ to $m/2$.

For any $K$, it is possible to show that $\g{U_P}\left(Q(K)\right)$ is a valid logical operator.
\begin{theorem}[Code-space preservation of phase-type fold-transversal gates from $Q(K)$]  \label{thm:QK_preserve_stb}
    Let $m$ be a positive even number, $\mathrm{QRM}(m)$ be the quantum Reed--Muller code defined in \cref{def:QRM}, and $K$ be a set of ordered pairs defined in \cref{def:K} of size $0 \leq |K| \leq m/2$. Then, for any $K$, the stabilizer group of $\mathrm{QRM}(m)$ is preserved under the operation of $\g{U_P}\left(Q(K)\right)$.
\end{theorem}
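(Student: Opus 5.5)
The plan is to imitate the proof of \cref{thm:PnQ_preserve_stb} for $\g{U_P}(Q(i,j))$, now with a product of commuting involutions. First note that $Q(K)$ satisfies $Q(K)^2=e$: the factors $Q(i_1,i_2),\dots,Q(i_{2c-1},i_{2c})$ commute and each squares to $e$. Hence $\g{U_P}(Q(K))$ is well defined and \cref{prop:SW_PH_transform} applies. Each factor is an automorphism of $\mathrm{RM}(m/2-1,m)$, so $Q(K)$ is one too. By \cref{prop:permute_vA}, $M_{Q(K)}\mathbf{v}_A=\bigwedge_{a\in A}M_{Q(K)}\mathbf{v}_a$. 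Here $M_{Q(K)}\mathbf{v}_a=\mathbf{v}_a+\mathbf{v}_{j(a)}$ if $a=i_{2t-1}\in F_1(K)$ with partner $j(a)=i_{2t}$, and $M_{Q(K)}\mathbf{v}_a=\mathbf{v}_a$ otherwise.

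The $\g{Z}$-type generators are fixed by \cref{prop:SW_PH_transform}. For $g_x(A)$ with $0\le|A|\le m/2-1$, \cref{prop:SW_PH_transform} gives $g_x(A)\mapsto(\up{i})^c\g{X}(\mathbf{v}_A)\g{Z}(M_{Q(K)}\mathbf{v}_A)$ with $c=\mathrm{wt}(\mathbf{v}_A\wedge M_{Q(K)}\mathbf{v}_A)$. The vector $M_{Q(K)}\mathbf{v}_A$ lies in $\mathrm{RM}(m/2-1,m)$ because $Q(K)$ is an automorphism. Explicitly, expanding the wedge gives a sum of $\mathbf{v}_{A'}$ with $|A'|=|A|$, each of which is a $\g{Z}$-stabilizer. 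So the image is a stabilizer up to the phase $(\up{i})^c$, and it remains to show $c\equiv 0\pmod 4$. This is the only real step.

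For the phase, write $A_1=A\cap F_1(K)$. Then
\[
\mathbf{v}_A\wedge M_{Q(K)}\mathbf{v}_A=\mathbf{v}_{A\setminus A_1}\wedge\bigwedge_{a\in A_1}\mathbf{v}_a\wedge(\mathbf{v}_a+\mathbf{v}_{j(a)})
=\mathbf{v}_{A}\wedge\bigwedge_{a\in A_1}(\mathbf{1}+\mathbf{v}_{j(a)}),
\]
using $\mathbf{v}_a\wedge\mathbf{v}_a=\mathbf{v}_a$. If some partner $j(a)$ with $a\in A_1$ lies in $A$, this vector is $\mathbf{0}$ and $c=0$. Otherwise the partners $J=\{j(a):a\in A_1\}$ are distinct (the indices in $K$ are distinct) and disjoint from $A$. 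Then \cref{prop:Hamming_weight}(1) gives
\[
c=2^{m-|A|-|J|}\ge 2^{m-2|A|}\ge 2^{2}=4,
\]
since $|J|\le|A|\le m/2-1$. So $c$ is a positive power of $2$ that is at least $4$, hence divisible by $4$, and $g_x(A)\mapsto g_x(A)\,\g{Z}(M_{Q(K)}\mathbf{v}_A)$, which lies in the stabilizer group.

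Since $\g{U_P}(Q(K))$ is unitary and maps a generating set of the stabilizer group into the group, it maps the group into itself, and hence onto itself by a counting or inverse argument. The main point to handle carefully is the phase computation above: it needs the distinctness of the indices in $K$ and the bound $|J|\le|A|$. The case $K=\emptyset$ reduces to $c=2^{m-|A|}$, which is already covered by the same bound.
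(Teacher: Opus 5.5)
Your proposal is correct and takes essentially the paper's route: \cref{prop:SW_PH_transform}, then the identity $\mathbf{v}_A\wedge M_{Q(K)}\mathbf{v}_A=\mathbf{v}_A\wedge\bigwedge_{a\in A_1}(\mathbf{1}+\mathbf{v}_{j(a)})$, whose weight is $0$ or $2^{m-|A|-|A_1|}\ge 4$; this is exactly \cref{lem:phase_from_QK}(1). The only difference is that you place $\g{Z}(M_{Q(K)}\mathbf{v}_A)$ in the stabilizer group via the automorphism property (or a direct wedge expansion), where the paper uses $M_{Q(K)}=\sum_{L\subseteq K}R(L)$ with \cref{lem:RK_on_v}, and there is one harmless slip: the expansion gives terms $\mathbf{v}_{A'}$ with $|A'|\le|A|$ (strictly smaller when some partner $j(a)$ lies in $A$), not $|A'|=|A|$.
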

A proof of \cref{thm:QK_preserve_stb} is provided in \cref{sec:proof_of_thms}. 

\subsection{Addressable gates from transversal and fold-transversal gates} \label{subsec:addr_gates}
\subsubsection{Addressable $\logg{S}$, $\logg{S}^\dagger$, and $\logg{C_{00}Z}$ gates}

Recall that for our choice of logical Pauli operators, we can describe any logical qubit using its logical qubit index $B$, which is a set of basis vector indices of size $|B|=m/2$ (\cref{def:log_def}). Similarly, we can describe any logical operator on one or two logical qubits using the following notations.
\begin{definition}
    Let $\g{V}$ (or $\g{W}$) be any single-qubit (or two-qubit) unitary operator. The corresponding logical operator that acts on the logical qubit with logical qubit index $B$ (or two logical qubits with logical qubit indices $B$, $B'$) is denoted by $\logg{V}(B)$ (or $\logg{W}(B,B')$). When $B=B_i$ and $B'=B_j$ are canonical logical qubit indices according to \cref{def:canonical_indices}, we may use a shorthand notation $\logg{V}(i)=\logg{V}(B_i)$ (or $\logg{W}(i,j)=\logg{W}(B_i,B_j)$).
\end{definition}
Examples of such logical operators are the logical Hadamard gate $\logg{H}(B)$, the logical phase gate $\logg{S}(B)$, the logical swap gate $\logg{SW}(B,B')$, and two types of logical controlled-$\g{Z}$ gates $\logg{C_{00}Z}(B,B')$ and $\logg{C_{11}Z}(B,B')$.

Next, we consider the logical operation of the phase-type fold-transversal gate $\g{U_P}\left(Q(K)\right)$ corresponding to an automorphism $Q(K)$. On our choice of symplectic basis (\cref{def:log_def,def:canonical_indices}), $\g{U_P}\left(Q(K)\right)$ generally does not act as an addressable gate on the logical subspace; it acts as a product of $\logg{S}$, $\logg{C_{00}Z}$, and $\logg{C_{11}Z}$ on multiple logical qubits. The logical operation of any $\g{U_P}\left(Q(K)\right)$ can be found using the following theorem.

\begin{theorem}[Logical operation of a phase-type fold-transversal gate]\label{thm:logical_from_fold}
    Let $m$ be a positive even number, $\mathrm{QRM}(m)$ be the quantum Reed--Muller code defined in \cref{def:QRM}, and $K$ be a set of ordered pairs defined in \cref{def:K} of size $0 \leq |K| \leq m/2$. The following logical gates are implemented by the action of $\g{U_P}\left(Q(K)\right)$: 
    \begin{enumerate}
        \item $\logg{C_{11}Z}(B,B')$ is applied to any unordered pair of logical qubits $\{B,B'\}$ satisfying $F_1(L) \subseteq B$, $F_2(L)\cap B = \emptyset$, and
        \begin{equation}
            B' =  \left(B^\up{c}\cup F_1(L)\right)\setminus F_2(L),
        \end{equation} for all $L \subseteq K$ such that $|L| \leq m/2-2$.
        \item Additionally, if $|K| \geq m/2-1$, $\logg{C_{00}Z}(D,D')$ is applied to any unordered pair of logical qubits $\{D,D'\}$ satisfying $F_1(L) \subseteq D$, $F_2(L)\cap D = \emptyset$, and
        \begin{equation}
            D' =  \left(D^\up{c}\cup F_1(L)\right)\setminus F_2(L),
        \end{equation} for all $L \subseteq K$ such that $|L| = m/2-1$.
        \item Additionally, if $|K| = m/2$, $\logg{S}(F_1(K))$ is applied when $m/2$ is even (or $\logg{S}^{\dagger}(F_1(K))$ is applied when $m/2$ is odd).
    \end{enumerate}
\end{theorem}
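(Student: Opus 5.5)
The plan is to compute the conjugation action of $\g{U_P}(Q(K))$ on the logical Pauli operators of \cref{def:log_def} directly, and then read off the logical gate. By \cref{thm:QK_preserve_stb} the gate is a valid logical operator. By \cref{prop:SW_PH_transform} it fixes every $\logg{Z}(B)$ and maps $\logg{X}(B)=\g{X}(\mathbf{v}_B)\mapsto(\up{i})^{c}\,\g{X}(\mathbf{v}_B)\,\g{Z}(M_{Q(K)}\mathbf{v}_B)$ with $c=\mathrm{wt}(\mathbf{v}_B\wedge M_{Q(K)}\mathbf{v}_B)$. A logical Clifford that fixes all logical $\g{Z}$'s is diagonal. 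It is therefore determined, up to global phase, by the $\g{Z}$-part and the phase attached to each $\logg{X}(B)$. The proof splits into two computations: (i) decompose $\g{Z}(M_{Q(K)}\mathbf{v}_B)$ into logical $\g{Z}$'s times $\g{Z}$-stabilizers; (ii) compute $c \bmod 4$.

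For (i), use \cref{prop:permute_vA}: $M_{Q(K)}\mathbf{v}_B=\bigwedge_{b\in B}M_{Q(K)}\mathbf{v}_b$. Here $\mathbf{v}_b\mapsto\mathbf{v}_b+\mathbf{v}_{b'}$ when $(b,b')\in K$, and $\mathbf{v}_b$ is fixed otherwise. Expanding the product gives a sum over subsets $L\subseteq K$ with $F_1(L)\subseteq B$ of $\mathbf{v}_{(B\setminus F_1(L))\cup F_2(L)}$. If $F_2(L)\cap B\neq\emptyset$, the index set has size at most $m/2-1$, so the term is a $\g{Z}$-stabilizer by \cref{def:stb_def}. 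Otherwise it has size exactly $m/2$. By \cref{def:log_def} it equals $\logg{Z}(B')$ with
\[
B'=\bigl((B\setminus F_1(L))\cup F_2(L)\bigr)^{\up{c}}=\bigl(B^{\up{c}}\cup F_1(L)\bigr)\setminus F_2(L).
\]
Hence $\logg{X}(B)\mapsto \pm(\up{i})^{\epsilon}\,\logg{X}(B)\prod_{L}\logg{Z}(B'_L)$, with $L$ ranging over exactly the sets in the statement. I will check that the relation $B\leftrightarrow B'$ is symmetric for a fixed $L$: from $B'$, the same $L$ satisfies $F_1(L)\subseteq B'$ and $F_2(L)\cap B'=\emptyset$, and it returns $B$. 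So the $\g{Z}$-parts are those of a product of two-qubit controlled-$\g{Z}$'s on the pairs $\{B,B'_L\}$. The one exception is $|L|=m/2$, i.e.\ $L=K$ and $B=F_1(K)$, where $B'=B$ and the term is a single-qubit $\logg{S}$-type factor.

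For (ii), note that $\mathbf{v}_b\wedge(\mathbf{v}_b+\mathbf{v}_{b'})=\mathbf{v}_b(\mathbf{1}+\mathbf{v}_{b'})$. This gives $\mathbf{v}_B\wedge M_{Q(K)}\mathbf{v}_B=\mathbf{v}_B\wedge\prod(\mathbf{1}+\mathbf{v}_{b'})$, the product running over the pairs of $K$ with first element in $B$. The result vanishes if some such $b'$ lies in $B$. Otherwise, by \cref{prop:Hamming_weight}, its weight is $2^{m/2-t}$, where $t$ is the number of eligible pairs (first element in $B$, second not in $B$). Thus $(\up{i})^c=1$ for $t\le m/2-2$, $(\up{i})^c=-1$ for $t=m/2-1$, and $(\up{i})^c=\up{i}$ for $t=m/2$. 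A pair with both elements in $B$ forces $t\le m/2-2$, so that case is harmless.

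The main obstacle is matching these per-$B$ phases to the claimed gate assignment. $\logg{C_{11}Z}$ contributes no sign, while $\logg{C_{00}Z}$ contributes $-1$ to the images of both of its $\logg{X}$'s. Then, for the statement to hold, the total sign on $\logg{X}(B)$ must be $(-1)^{N}$ times the $\logg{S}^{(\dagger)}$ phase, where $N$ is the number of $L$ with $|L|=m/2-1$ among the $t$ eligible pairs, i.e.\ $N=\binom{t}{m/2-1}$. I will verify this case by case. For $t<m/2-1$ we have $N=0$ and the sign is $+1$. For $t=m/2-1$ we have $N=1$ and the sign is $-1$. For $t=m/2$ we have $N=m/2$, so $B=F_1(K)$ and the required residual phase is $\up{i}\cdot(-1)^{m/2}$. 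This yields $\logg{S}$ when $m/2$ is even and $\logg{S}^\dagger$ when $m/2$ is odd. It remains to fix the global phase convention ($\logg{C_{00}Z}$ versus $\logg{C_{11}Z}\logg{Z}\otimes\logg{Z}$) and to confirm that the stabilizer factors carry no extra phase; the $\g{Z}$ operators all commute, so no reordering signs arise.
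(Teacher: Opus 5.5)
Your proposal is correct and follows essentially the same route as the paper. Your wedge-product expansion of $M_{Q(K)}\mathbf{v}_B$ over subsets $L\subseteq K$ with $F_1(L)\subseteq B$ is the paper's decomposition $M_{Q(K)}=\sum_{L\subseteq K}R(L)$ combined with its formula for $R(L)\mathbf{v}_B$, and your phase computation is its lemma on $\mathrm{wt}(\mathbf{v}_B\wedge Q(K)\mathbf{v}_B)$; the only cosmetic difference is that you match phases uniformly through the count $\binom{t}{m/2-1}$ of incident $\logg{C_{00}Z}$ gates, whereas the paper does a case split on $|K|$.
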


A proof of \cref{thm:logical_from_fold} is provided in \cref{sec:proof_of_thms}.

\begin{example} \label{ex:logical_from_fold}
    Consider $\mathrm{QRM}(4)$ with the canonical logical qubit indices, i.e., $B_1=\{1,2\}$, $B_2=\{1,3\}$, $B_3=\{1,4\}$, $B_4=\{3,4\}$, $B_5=\{2,4\}$, and $B_6=\{2,3\}$. The gates $\g{U_P}(e)$, $\g{U_P}(Q(1,2))$, $\g{U_P}(Q(3,4))$, and $\g{U_P}(Q(1,2)Q(3,4))$ implement the following logical operations.
    \begin{equation}
        \begin{tabular}{| c | c | c |}
            \hline
             $K$ & $\g{U_P}(Q(K))$ & \textrm{Logical operation} \\
             \hline
             $\emptyset$ & $\g{U_P}(e)$ & $\logg{C_{11}Z}(1,4)\logg{C_{11}Z}(2,5)\logg{C_{11}Z}(3,6)$\\
             \hline
             $\{(1,2)\}$ & $\g{U_P}(Q(1,2))$ & $\logg{C_{11}Z}(1,4)\logg{C_{11}Z}(2,5)\logg{C_{11}Z}(3,6)\logg{C_{00}Z}(2,3)$\\
             \hline
             $\{(3,4)\}$ & $\g{U_P}(Q(3,4))$ & $\logg{C_{11}Z}(1,4)\logg{C_{11}Z}(2,5)\logg{C_{11}Z}(3,6)\logg{C_{00}Z}(2,6)$\\
             \hline
             $\{(1,2),(3,4)\}$ & $\g{U_P}(Q(1,2)Q(3,4))$ & $\logg{C_{11}Z}(1,4)\logg{C_{11}Z}(2,5)\logg{C_{11}Z}(3,6)\logg{C_{00}Z}(2,3)\logg{C_{00}Z}(2,6)\logg{S}(2)$ \\
             \hline
        \end{tabular} \nonumber
    \end{equation}
\end{example}

None of the fold-transversal gates constructed from the automorphisms shown in \cref{ex:logical_from_fold}
directly give an addressable gate. Nevertheless, it is possible to construct an addressable gate from a product of certain phase-type fold-transversal gates; for example, $\g{U_P}(e)\g{U_P}(Q(1,2))=\logg{C_{00}Z}(2,3)$. This motivates our development of the following theorem.

\begin{theorem}[Logical operation of a product of phase-type fold-transversal gates] \label{thm:fold_product}
    Let $m$ be a positive even number, $\mathrm{QRM}(m)$ be the quantum Reed--Muller code defined in \cref{def:QRM}, and $K$ be a set of ordered pairs defined in \cref{def:K} of size $0 \leq |K| \leq m/2$. The following logical gates are implemented by the action of $\prod_{L \subseteq K}\g{U_P}\left(Q(L)\right)$:
    \begin{enumerate}
        \item \label[statement]{statement:fold_product_cz}
        If $|K|\leq m/2-2$, then $\logg{C_{11}Z}(B,B')$ is applied to any unordered pair of logical qubits $\{B,B'\}$ satisfying $F_1(K) \subseteq B$, $F_2(K)\cap B = \emptyset$, and
        \begin{equation}
            B' = \left(B^\up{c} \cup F_1(K)\right)\setminus F_2(K).
        \end{equation}
        \item \label[statement]{statement:fold_product_czxx}
        If $|K|= m/2-1$, then $\logg{C_{00}Z}(B,B')$ is applied to the logical qubits $B,B'$ that satisfy $F_1(K) \subseteq B$, $F_2(K)\cap B = \emptyset$, and
        \begin{equation}
            B' = \left(B^\up{c} \cup F_1(K)\right)\setminus F_2(K).
        \end{equation}
        \item \label[statement]{statement:fold_product_s}
        If $|K|= m/2$, then $\logg{S}(F_1(K))$ is implemented if $m/2$ is even (or $\logg{S}^{\dagger}(F_1(K))$ is implemented if $m/2$ is odd).
    \end{enumerate}
\end{theorem}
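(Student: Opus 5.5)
The plan is to derive \cref{thm:fold_product} directly from \cref{thm:logical_from_fold} by an inclusion–exclusion (Möbius-type) cancellation over subsets. All gates $\g{U_P}(Q(L))$ for $L\subseteq K$ act on the code space as products of diagonal logical gates: $\logg{C_{11}Z}$, $\logg{C_{00}Z}$, $\logg{S}$ and $\logg{S}^\dagger$. These commute with one another, and each of $\logg{C_{11}Z}$ and $\logg{C_{00}Z}$ squares to the identity. Consequently the logical action of $\prod_{L\subseteq K}\g{U_P}(Q(L))$ is determined by counting multiplicities. For each candidate gate we count how many $L\subseteq K$ produce it, reduce the count modulo $2$ for the controlled-$\g{Z}$ gates, and track the count exactly for the $\logg{S}$ factors.

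First, fix $L\subseteq K$ and apply \cref{thm:logical_from_fold} to $\g{U_P}(Q(L))$ with $K$ replaced by $L$. The gates it contributes are indexed by subsets $L'\subseteq L$.
\begin{enumerate}
\item For $|L'|\le m/2-2$, each qualifying pair $\{B,B'\}$ with $B'=(B^\up{c}\cup F_1(L'))\setminus F_2(L')$ receives a $\logg{C_{11}Z}$.
\item For $|L'|=m/2-1$, the corresponding pairs receive a $\logg{C_{00}Z}$.
\item If $|L|=m/2$, the qubit $F_1(L)$ receives an $\logg{S}$ or $\logg{S}^\dagger$.
\end{enumerate}
The crucial observation is that the set of gates indexed by a given $L'$ depends only on $L'$, not on the ambient $L$.

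Hence the total multiplicity of the gate family labelled by $L'$ in the product equals the number of $L$ with $L'\subseteq L\subseteq K$, namely $2^{|K|-|L'|}$. This is even unless $L'=K$. So every $\logg{C_{11}Z}$ and $\logg{C_{00}Z}$ contribution with $L'\subsetneq K$ cancels, and only the $L'=K$ family survives.
\begin{itemize}
\item When $|K|\le m/2-2$, this family is the $\logg{C_{11}Z}$ gates of \cref{statement:fold_product_cz}.
\item When $|K|=m/2-1$, it is the $\logg{C_{00}Z}$ gate of \cref{statement:fold_product_czxx}.
\item When $|K|=m/2$, the $L'=K$ term contributes no controlled-$\g{Z}$ gate, since \cref{thm:logical_from_fold} only produces such gates for $|L'|\le m/2-1$.
\end{itemize}
The $\logg{S}$ term needs no parity argument. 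It arises only from $L$ with $|L|=m/2$, and since $|K|\le m/2$ that forces $L=K$. It therefore appears exactly once, giving $\logg{S}(F_1(K))$ or $\logg{S}^\dagger(F_1(K))$ according to the parity of $m/2$, which is \cref{statement:fold_product_s}.

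The step I expect to need the most care is the claim that distinct labels $L'$ do not produce coinciding gates, or at least that any coincidences do not spoil the parity count. Two different subsets $L'_1,L'_2$ could give the same unordered pair $\{B,B'\}$; for example, swapping the roles of $B$ and $B'$ might correspond to a different $L'$. To handle this I would proceed as follows.

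1. Since all contributions commute and are involutions, it suffices to show that the total count for each fixed gate is odd exactly for the gates claimed. That count is $\sum_{L'}2^{|K|-|L'|}$, taken over all $L'$ that produce this gate.

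2. Every term with $L'\ne K$ is even. So the parity equals the number of ways $L'=K$ produces the gate, which is $0$ or $1$.

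3. The remaining check is that the unordered pair determined by $K$ is counted once and not twice, i.e., that $B$ and $B'$ do not both satisfy the conditions with respect to the same $K$. When $K\neq\emptyset$ this holds because $F_1(K)\subseteq B$ forces $F_1(K)\cap B'=\emptyset$ unless $F_1(K)\subseteq B^\up{c}$, which is impossible.

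4. The case $K=\emptyset$ needs its own check. There $B'=B^\up{c}$, and the pair $\{B,B^\up{c}\}$ is unordered, so it is counted once as written in \cref{thm:logical_from_fold}. I would verify this convention against \cref{ex:logical_from_fold}.
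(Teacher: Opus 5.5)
Your main argument is correct and is essentially the paper's proof. The paper likewise attaches each gate produced by $\g{U_P}(Q(L))$ to a label $M\subseteq L$, observes that the gate depends only on $M$, counts the $2^{|K\setminus M|}$ supersets $L\subseteq K$ of $M$, and keeps only $M=K$, with the $\logg{S}$ factor arising solely from $L=K$.

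One correction to your final paragraph: step~3 is false.
\begin{itemize}
\item Because $F_1(K)\cap F_2(K)=\emptyset$, the formula $B'=(B^\up{c}\cup F_1(K))\setminus F_2(K)$ gives $F_1(K)\subseteq B'$ and $F_2(K)\cap B'=\emptyset$.
\item One also checks $((B')^\up{c}\cup F_1(K))\setminus F_2(K)=B$. So $B'$ always satisfies the same conditions and pairs back to $B$.
\item For example, with $m=4$ and $K=\{(1,2)\}$, both $\{1,3\}$ and $\{1,4\}$ qualify and pair with each other. The paper uses this symmetry in the proof of \cref{cor:addr_S_CZ}.
\end{itemize}
This symmetry holds for every $K$, not just $K=\emptyset$, and it is exactly why the gates are indexed by unordered pairs. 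The two maps $\logg{X}(B)\mapsto\logg{X}(B)\logg{Z}(B')$ and $\logg{X}(B')\mapsto\logg{X}(B')\logg{Z}(B)$ together form a single $\logg{C_{11}Z}(B,B')$, not two gates that cancel. Read literally, your step~3 framing ("counted twice'' if both qualify) would wrongly predict cancellation.

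Steps~1--4 are unnecessary anyway. Apply your parity count to the whole operator $G_{L'}$ labelled by $L'$, rather than gate by gate. For $L'\neq K$ (hence $|L'|\le m/2-1$), $G_{L'}$ is a product of commuting diagonal involutions, so $G_{L'}^{2^{|K|-|L'|}}$ is the identity. The full product is therefore $G_K$, exactly as described by \cref{thm:logical_from_fold}, and there is no need to track coincidences between labels or counting conventions.
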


A proof of \cref{thm:fold_product} is provided in \cref{sec:proof_of_thms}.

\begin{example} \label{ex:fold_product}
    Consider $\mathrm{QRM}(4)$ with the canonical logical qubit indices, i.e., $B_1=\{1,2\}$, $B_2=\{1,3\}$, $B_3=\{1,4\}$, $B_4=\{3,4\}$, $B_5=\{2,4\}$, and $B_6=\{2,3\}$. The products $\prod_{L\subseteq K}\g{U_P}(Q(L))$ of $K=\emptyset$, $K=\{(1,2)\}$, $K=\{(3,4)\}$, and $K=\{(1,2),(3,4)\}$ implement the following logical operations.
    \begin{equation}
        \begin{tabular}{| c | c | c |}
            \hline
             $K$ & $\prod_{L\subseteq K}\g{U_P}(Q(L))$ & \textrm{Logical operation} \\
             \hline
             $\emptyset$ & $\g{U_P}(e)$ & $\logg{C_{11}Z}(1,4)\logg{C_{11}Z}(2,5)\logg{C_{11}Z}(3,6)$\\
             \hline
             $\{(1,2)\}$ & $\g{U_P}(e)\g{U_P}(Q(1,2))$ & $\logg{C_{00}Z}(2,3)$\\
             \hline
             $\{(3,4)\}$ & $\g{U_P}(e)\g{U_P}(Q(3,4))$ & $\logg{C_{00}Z}(2,6)$\\
             \hline
             $\{(1,2),(3,4)\}$ & $\g{U_P}(e)\g{U_P}(Q(1,2))\g{U_P}(Q(3,4))\g{U_P}(Q(1,2)Q(3,4))$ & $\logg{S}(2)$ \\
             \hline
        \end{tabular} \nonumber
    \end{equation}
\end{example}

With \cref{thm:fold_product}, it is possible to construct an addressable $\logg{S}$ gate on any logical qubit and an addressable $\logg{C_{00}Z}$ gate on certain pairs of logical qubits by choosing $K$ that satisfies certain conditions on $F_1(K)$ and $F_2(K)$. The construction of such gates can be summarized in the following corollary.

\begin{corollary} [addressable $\logg{S}$ and addressable $\logg{C_{00}Z}$ gates] \label{cor:addr_S_CZ}
    Let $m$ be a positive even number, $\mathrm{QRM}(m)$ be the quantum Reed--Muller code defined in \cref{def:QRM}, and $K$ be a set of ordered pairs defined in \cref{def:K}.
    \begin{enumerate}
        \item \label[statement]{statement:cor_s}
        For any logical qubit index $B$, the addressable gate $\logg{S}(B)$ when $m/2$ is even (or $\logg{S}^{\dagger}(B)$ when $m/2$ is odd) can be implemented by choosing any $K$ such that $F_1(K)=B$ then applying \cref{statement:fold_product_s} of \cref{thm:fold_product}. The gate $\logg{S}^\dagger(B)$ when $m/2$ is even (or $\logg{S}(B)$ when $m/2$ is odd) can be implemented by the inverse operation of $\logg{S}(B)$ (or $\logg{S}^\dagger(B)$). The implementation of $\logg{S}(B)$ or $\logg{S}^{\dagger}(B)$ has depth $2^{m/2} = \sqrt{n}$.
        \item \label[statement]{statement:cor_czxx}
        For any two logical qubit indices $B,B'$ that differ by exactly one basis vector (i.e., $|B \cap B'|=m/2-1$), the addressable gate $\logg{C_{00}Z}(B,B')$ can be implemented by choosing $K$ such that $F_1(K)=B\cap B'$ and $F_2(K)=[m]-(B\cup B')$ then applying \cref{statement:fold_product_czxx} of \cref{thm:fold_product}. The implementation of $\logg{C_{00}Z}(B,B')$ has depth $2^{m/2-1} = \sqrt{n}/2$.
    \end{enumerate}
\end{corollary}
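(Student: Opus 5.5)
The corollary is an application of \cref{thm:fold_product}. For each statement I will exhibit a valid $K$ (a set of ordered pairs with all entries distinct, as in \cref{def:K}), check the hypotheses of the relevant clause, and then count circuit depth.

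For \cref{statement:cor_s}, take $B$ with $|B|=m/2$ and write $B=\{b_1,\dots,b_{m/2}\}$ and $B^\up{c}=\{b'_1,\dots,b'_{m/2}\}$. Set $K=\{(b_1,b'_1),\dots,(b_{m/2},b'_{m/2})\}$. All $m$ entries are distinct, $|K|=m/2$, and $F_1(K)=B$. \cref{statement:fold_product_s} of \cref{thm:fold_product} then gives $\logg{S}(B)$ or $\logg{S}^\dagger(B)$, according to the parity of $m/2$. The other gate is obtained as the inverse. Each $\g{U_P}(Q(L))$ is a depth-one circuit of physical $\g{C_{11}Z}$ and $\g{S}$ gates. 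Its inverse replaces $\g{S}$ by $\g{S}^\dagger$ (and $\g{C_{11}Z}$ is self-inverse), so it is again depth one and applied factor by factor in reverse order. Since all the $\g{U_P}$ factors are diagonal, they commute, and the order does not matter.

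For the depth, the product $\prod_{L\subseteq K}\g{U_P}(Q(L))$ has $2^{|K|}=2^{m/2}=\sqrt{n}$ factors, each of depth one, for total depth $\sqrt{n}$. Any reduction from merging identical diagonal layers is unnecessary, since the claim is stated as the depth of this implementation.

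For \cref{statement:cor_czxx}, suppose $|B\cap B'|=m/2-1$. Then $|B\cup B'|=m/2+1$, so $[m]\setminus(B\cup B')$ has size $m/2-1$. Pair the elements of $B\cap B'$ with those of $[m]\setminus(B\cup B')$ in any bijection to form $K$. These two sets are disjoint, so the entries are distinct, and $|K|=m/2-1$.

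The hypotheses of \cref{statement:fold_product_czxx} are then checked as follows.
\begin{itemize}
\item $F_1(K)=B\cap B'\subseteq B$.
\item $F_2(K)\cap B=\emptyset$.
\item The target qubit is $(B^\up{c}\cup F_1(K))\setminus F_2(K)$. Since $B^\up{c}$ consists of the single element of $B'\setminus B$ together with $F_2(K)$, removing $F_2(K)$ and adjoining $B\cap B'$ leaves exactly $B'$.
\end{itemize}
Hence the product implements $\logg{C_{00}Z}(B,B')$. It has $2^{m/2-1}=\sqrt{n}/2$ depth-one factors.

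The only delicate point is the set identity for $B'$ in \cref{statement:cor_czxx}, together with confirming that the chosen $K$ genuinely satisfies the distinctness requirement of \cref{def:K}. Both follow from elementary counting, so I expect no real obstacle beyond careful bookkeeping.
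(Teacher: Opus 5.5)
Your route is the paper's: choose $K$ with the prescribed $F_1(K)$ and $F_2(K)$, invoke the relevant clause of \cref{thm:fold_product}, and count $2^{|K|}$ depth-one factors. Part (1) is fine. Your remark that all $\g{U_P}$ factors are diagonal, so the inverse circuit again has depth $\sqrt{n}$, is a useful addition.

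Part (2) has a gap, and it sits exactly where the word \emph{addressable} matters. You check only that $(B,B')$ \emph{satisfies} the hypotheses of \cref{statement:fold_product_czxx}. The product applies $\logg{C_{00}Z}$ to every unordered pair $\{D,D'\}$ with $F_1(K)\subseteq D$, $F_2(K)\cap D=\emptyset$ and $D'=(D^\up{c}\cup F_1(K))\setminus F_2(K)$. So you must also show that $\{B,B'\}$ is the \emph{only} such pair; otherwise you have shown only that $\logg{C_{00}Z}(B,B')$ is one factor of the implemented operation. The paper's proof spends its effort on precisely this point, which you list as not delicate.

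The fix is short. Write $B\setminus B'=\{i_1\}$ and $B'\setminus B=\{i_2\}$. Then $F_1(K)\cup F_2(K)$ has $m-2$ elements, and its complement in $[m]$ is $\{i_1,i_2\}$. Any $D$ with $|D|=m/2$, $F_1(K)\subseteq D$ and $F_2(K)\cap D=\emptyset$ must therefore be $F_1(K)\cup\{j\}$ with $j\in\{i_1,i_2\}$, so $D\in\{B,B'\}$. In particular, the candidates $F_1(K)\cup\{j\}$ with $j\in F_2(K)$ are ruled out. Adding this step completes your argument.
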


\begin{proof}
    (1) A set of ordered pairs $K$ such that $F_1(K)=B$ always exists. By \cref{statement:fold_product_s} of \cref{thm:fold_product}, $\prod_{L \subseteq K}\g{U_P}\left(Q(L)\right)$ implements $\logg{S}(F_1(K))$ if $m/2$ is even (or $\logg{S}^{\dagger}(F_1(K))$ if $m/2$ is odd). The depth of $\prod_{L \subseteq K}\g{U_P}\left(Q(L)\right)$ is $2^{|K|} = 2^{m/2} = \sqrt{n}$.
    
    (2) Suppose that $B$ and $B'$ are logical qubit indices that differ by exactly one basis vector, i.e., $|B \cap B'| = m/2-1$. We can write $B = (B\cap B')\cup\{i_1\}$ and $B' = (B\cap B')\cup\{i_2\}$ for some basis vector indices $i_1,i_2$. Let $K$ be a set of ordered pairs such that $F_1(K)=B\cap B'$ and $F_2(K)=[m]-(B\cup B')$ (so that $F_1(K)\cup F_2(K)$ does not contain $i_1,i_2$). Such $K$ always exists since $|F_2(K)|=m-|B\cup B'|=m-(|B|+|B'|-|B\cap B'|)=|B\cap B'| = |F_1(K)|$. Note that $|K|=m/2-1$. By \cref{statement:fold_product_czxx} of \cref{thm:fold_product}, $\prod_{L \subseteq K}\g{U_P}\left(Q(L)\right)$ implements $\logg{C_{00}Z}(D,D')$ to any unordered pair of qubits $\{D,D'\}$ satisfying $F_1(K) \subseteq D$, $F_2(K)\cap D = \emptyset$, and $D' = \left(D^\up{c} \cup F_1(K)\right)\setminus F_2(K)$. Choices of $D$ satisfying $F_1(K) \subseteq D$ include (i) $D = F_1(K)\cup\{i_1\}=B$, (ii) $D = F_1(K)\cup\{i_2\}=B'$, and (iii) $D=F_1(K)\cup\{j\}$ for some $j \in F_2(K)$. However, (iii) does not satisfy $F_2(K)\cap D = \emptyset$. Also the choices of $D$ in (i) and (ii), which are $B$ and $B'$ respectively, are related by $B' = \left(B^\up{c} \cup F_1(K)\right)\setminus F_2(K)$ (as well as, $B = \left((B')^\up{c} \cup F_1(K)\right)\setminus F_2(K)$). Therefore, by choosing $K$ as previously described, the product of all $\g{U_P}\left(Q(L)\right)$ such that $L \subseteq K$ implements the addressable gate $\logg{C_{00}Z}(B,B')$. The depth of $\prod_{L \subseteq K}\g{U_P}\left(Q(L)\right)$ is $2^{|K|} = 2^{m/2-1} = \sqrt{n}/2$.
\end{proof}
Note that for each of the statements in \cref{cor:addr_S_CZ}, both $F_1(K)$ and $F_2(K)$ are unique. However, the choice of $K$ that gives such $F_1(K)$ and $F_2(K)$ might not be unique.

\begin{example}
    Consider $\mathrm{QRM}(4)$ with the canonical logical qubit indices, i.e., $B_1=\{1,2\}$, $B_2=\{1,3\}$, $B_3=\{1,4\}$, $B_4=\{3,4\}$, $B_5=\{2,4\}$, and $B_6=\{2,3\}$. The logical operation $\logg{S}(2)$ can be implemented by applying \cref{statement:cor_s} of \cref{cor:addr_S_CZ} to either $K=\{(1,2),(3,4)\}$ or $K=\{(1,4),(3,2)\}$. Meanwhile, the logical operation $\logg{C_{00}Z}(2,3)$ can be implemented by applying \cref{statement:cor_czxx} of \cref{cor:addr_S_CZ} to $K=\{(1,2)\}$; in fact, this is the only $K$ that can lead to $\logg{C_{00}Z}(2,3)$. 
\end{example}

We point out that while all addressable $\logg{S}$ gates can be obtained through \cref{cor:addr_S_CZ}, not all addressable $\logg{C_{00}Z}$ gates can be obtained from this construction. Fortunately, the missing addressable $\logg{C_{00}Z}$ gates, as well as other Clifford gates, can be achieved if addressable $\logg{H}$ and addressable $\logg{SW}$ gates can be constructed. The construction of such gates will be discussed in the next section.

\subsubsection{Addressable $\logg{H}$, $\logg{SW}$, and $\logg{C_{00}Z}$ gates} 

So far, we have constructed addressable $\logg{S}$ gates on any logical qubit and addressable $\logg{C_{00}Z}$ gates on certain pairs of logical qubits from sequences of fold-transversal gates. In this section, we will combine these gates with a transversal Hadamard gate $\g{H}^{\otimes n}$ to construct addressable $\logg{H}$ and $\logg{SW}$ gates, as well as addressable $\logg{C_{00}Z}$ gates that cannot be attained by \cref{cor:addr_S_CZ}.

We start by considering the logical operation of $\g{H}^{\otimes n}$ and construct an addressable $\logg{H}$ gate from $\g{H}^{\otimes n}$ and addressable $\logg{S}$ gates.
\begin{theorem}[addressable $\logg{H}$ gates]\label{thm:addr_H}
    Let $m$ be a positive even number, and $\mathrm{QRM}(m)$ be the \codepar{n,k,d} quantum Reed--Muller code defined in \cref{def:QRM}. Then,
    \begin{enumerate}
        \item $\g{H}^{\otimes n}$ applies $\logg{H}^{\otimes k}$ followed by $\prod_{i=1}^{k/2}\logg{SW}(i,i+k/2)$;
        \item for any logical qubit $B$, the addressable gate $\logg{H}(B)$ can be constructed from $\g{H}^{\otimes n}$, $\logg{S}(B)$, and $\logg{S}(B^\up{c})$. The implementation of $\logg{H}(B)$ has depth $O(\sqrt{n})$.
    \end{enumerate}
\end{theorem}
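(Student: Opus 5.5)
For part (1), I will compute the conjugation action of $\g{H}^{\otimes n}$ on the chosen symplectic basis. First, $\g{H}^{\otimes n}$ maps $\g{X}(\mathbf{w})\mapsto\g{Z}(\mathbf{w})$ and back. Since $\mathrm{QRM}(m)$ is self-dual, it maps each $g_x(A)$ to $g_z(A)$ and vice versa, so the stabilizer group is preserved. On logical operators we get $\logg{X}(B)=\g{X}(\mathbf{v}_B)\mapsto \g{Z}(\mathbf{v}_B)=\logg{Z}(B^\up{c})$ and $\logg{Z}(B)=\g{Z}(\mathbf{v}_{B^\up{c}})\mapsto\g{X}(\mathbf{v}_{B^\up{c}})=\logg{X}(B^\up{c})$. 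Because $|B^\up{c}|=m/2$, $B^\up{c}$ is again a logical qubit index. By \cref{def:canonical_indices}, the complement pairs are exactly $B_i\leftrightarrow B_{i+k/2}$. The map $\logg{X}(i)\mapsto\logg{Z}(i+k/2)$, $\logg{Z}(i)\mapsto\logg{X}(i+k/2)$ (and symmetrically) is precisely the Clifford action of $\logg{H}^{\otimes k}$ followed by $\prod_i\logg{SW}(i,i+k/2)$. Both operators induce the same Pauli map with no signs, so they agree up to global phase. This completes (1).

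For part (2), I will use the standard identity $\g{H}=e^{\mathrm{i}\phi}\g{S}\g{H}\g{S}\g{H}\g{S}$, or equivalently $\g{H}\g{S}\g{H}=e^{\mathrm{i}\phi}\g{S}^\dagger\g{H}\g{S}^\dagger$. The key observation is how $\g{H}^{\otimes n}$ conjugates addressable phase gates. From (1), $\g{H}^{\otimes n}$ acts on the pair $(B,B^\up{c})$ as $\logg{SW}\cdot(\logg{H}\otimes\logg{H})$, so
\[
\g{H}^{\otimes n}\,\logg{S}(B^\up{c})\,\g{H}^{\otimes n}=(\g{H}\g{S}\g{H})\text{ acting on logical qubit }B.
\]
Here I use $\g{H}^{\otimes n}$ being an involution and the swap moving the $B^\up{c}$ factor onto $B$. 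This is an addressable $\logg{HSH}$ on $B$ alone, since all other pairs see $\g{H}^2=\g{I}$ and $\g{SW}^2=\g{I}$.

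I then set
\[
\logg{H}(B)\propto \logg{S}(B)\,\big(\g{H}^{\otimes n}\logg{S}(B^\up{c})\g{H}^{\otimes n}\big)\,\logg{S}(B).
\]
This works because $\g{S}\g{H}\g{S}\g{H}\g{S}\propto\g{H}$ (check: $\g{HSH}$ maps $\g{X}\mapsto\g{X}$ and $\g{Z}\mapsto-\g{Y}$, and sandwiching with $\g{S}$ yields $\g{X}\leftrightarrow\g{Z}$). If the convention of \cref{cor:addr_S_CZ} yields $\logg{S}^\dagger$ when $m/2$ is odd, I will use $\g{S}^\dagger\g{H}\g{S}^\dagger\g{H}\g{S}^\dagger\propto\g{H}$ instead, or take inverses as allowed by \cref{cor:addr_S_CZ}.

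For depth, the construction uses three addressable phase gates, each of depth $\sqrt{n}$ by \cref{cor:addr_S_CZ}, plus two depth-one transversal layers, giving $O(\sqrt{n})$. The main point needing care is sign bookkeeping. Everything is a Clifford map on logical Paulis, and I expect all signs to come out $+$ in (1), since $\g{H}$ maps $\g{X}$ to $\g{Z}$ without sign. Any residual sign in (2) would be a logical Pauli correction, which is transversal and does not change the depth.
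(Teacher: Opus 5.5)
Your proposal is correct and follows essentially the same route as the paper. Part (1) is the same direct computation $\g{X}(\mathbf{v}_B)\mapsto\g{Z}(\mathbf{v}_B)=\logg{Z}(B^\up{c})$ combined with the complement pairing $B_i\leftrightarrow B_{i+k/2}$. Part (2) rests on the same identity $(\g{HS})^3\propto\g{I}$, realized as $\logg{S}(B)\,\g{H}^{\otimes n}\logg{S}(B^\up{c})\g{H}^{\otimes n}\,\logg{S}(B)$, with depth dominated by the three addressable phase gates; your explicit check that conjugating $\logg{S}(B^\up{c})$ by $\g{H}^{\otimes n}$ gives an addressable $\logg{HSH}$ on $B$ alone (all other pairs cancelling) spells out what the paper leaves to its circuit figure.
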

\begin{proof}
    (1) From our choice of logical Pauli operators defined in \cref{def:log_def,def:canonical_indices}, for any logical qubit $B$, $\g{H}^{\otimes n}$ maps $\logg{X}(B)=\g{X}(\mathbf{v}_B)$ to $\logg{Z}(B^\up{c})=\g{Z}(\mathbf{v}_B)$. This is equivalent to applying $\logg{H}^{\otimes k}$ followed by $\prod_{i=1}^{k/2}\logg{SW}(i,i+k/2)$.

    (2) Using the fact that $\g{HSHSHS}=\g{I}$ up to some global phase, for any logical qubit $B$, $\logg{H}(B)$ can be constructed from the following logical operations. 
    \begin{equation}
        \includegraphics[scale=1.15]{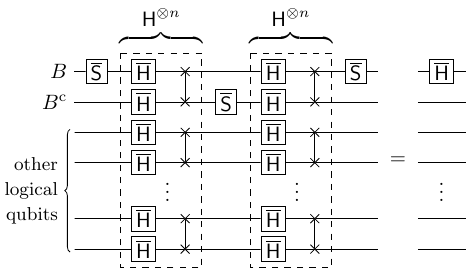} \nonumber
    \end{equation}
    Note that the product of $\logg{H}^{\otimes k}$ and $\prod_{i=1}^{k/2}\logg{SW}(i,i+k/2)$ can be constructed by (1), and $\logg{S}(B)$ and $\logg{S}(B^\up{c})$ can be constructed by \cref{cor:addr_S_CZ}. Since each addressable $\logg{S}$ gate has depth $\sqrt{n}$, the overall circuit for $\logg{H}(B)$ has depth $O(\sqrt{n})$.
\end{proof}

Next, we show how an addressable $\logg{SW}$ gate on any pair of logical qubits can be constructed from addressable $\logg{C_{00}Z}$ and addressable $\logg{H}$ gates which are obtainable from \cref{cor:addr_S_CZ,thm:addr_H}. Such addressable $\logg{SW}$ gates can be combined to make an addressable $\logg{SW}$ gate on any pair of logical qubits. We can also construct addressable $\logg{C_{00}Z}$ gates which are not obtainable from \cref{cor:addr_S_CZ} using similar ideas.

\begin{theorem}[addressable $\logg{SW}$ and $\logg{C_{00}Z}$ gates]\label{thm:addr_SW_CZ}
    Let $m$ be a positive even number, and $\mathrm{QRM}(m)$ be the \codepar{n,k,d} quantum Reed--Muller code defined in \cref{def:QRM}. Then,
    \begin{enumerate}
        \item for any two logical qubit indices $B,B'$ that differ by exactly one basis vector (i.e., $|B\cap B'|=m/2-1$), the addressable gate $\logg{SW}(B,B')$ can be constructed from $\logg{C_{00}Z}(B,B')$, $\logg{H}(B)$, and $\logg{H}(B')$. The implementation of such a gate has depth $O(\sqrt{n})$.
        \item for any two logical qubits $B,B'$, the addressable gate $\logg{SW}(B,B')$ can be constructed from a product of addressable $\logg{SW}$ gates from (1). The implementation of any $\logg{SW}(B,B')$ has depth $O(\sqrt{n}\log{n})$.
        \item for any two logical qubits $B,B'$, the addressable gate $\logg{C_{00}Z}(B,B')$ can be constructed from an addressable $\logg{C_{00}Z}$ gate from \cref{cor:addr_S_CZ} and a product of addressable $\logg{SW}$ gates from (1). The implementation of any $\logg{C_{00}Z}(B,B')$ has depth $O(\sqrt{n}\log{n})$.
    \end{enumerate}
\end{theorem}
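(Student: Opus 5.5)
For (1), I will use the circuit identity $\g{SW} = \g{CX}_{1\to 2}\,\g{CX}_{2\to 1}\,\g{CX}_{1\to 2}$ together with $\g{CX}_{1\to 2} = (\g{I}\otimes\g{H})\,\g{C_{11}Z}\,(\g{I}\otimes\g{H})$. The available gate is $\g{C_{00}Z}$, but $\g{C_{00}Z}=\g{C_{11}Z}(\g{Z}\otimes\g{Z})$ up to global phase. The logical Pauli $\logg{Z}(B)\logg{Z}(B')$ is transversal, of depth one, by \cref{def:log_def}. So $\logg{C_{11}Z}(B,B')$ is obtained from $\logg{C_{00}Z}(B,B')$ of \cref{cor:addr_S_CZ} at no extra asymptotic cost.

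The swap on the pair $B,B'$ differing in one basis vector then reads
\begin{equation}
\logg{SW}(B,B') = \logg{H}(B')\,\logg{C_{11}Z}\,\logg{H}(B')\;\logg{H}(B)\,\logg{C_{11}Z}\,\logg{H}(B)\;\logg{H}(B')\,\logg{C_{11}Z}\,\logg{H}(B'),
\end{equation}
where $\logg{C_{11}Z}$ abbreviates $\logg{C_{11}Z}(B,B')$. Adjacent Hadamards can be merged, and signs are absorbed as global phases or Paulis. Each factor has depth $O(\sqrt n)$: $\logg{H}$ by \cref{thm:addr_H} and $\logg{C_{00}Z}$ by \cref{cor:addr_S_CZ}. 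The whole circuit is a constant number of such factors, hence depth $O(\sqrt n)$. I would verify the identity on the Pauli generators $\logg{X}(B),\logg{Z}(B),\logg{X}(B'),\logg{Z}(B')$.

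For (2), I will view logical qubits as vertices of the Johnson graph $J(m,m/2)$, with edges joining sets that differ by exactly one element. This graph is connected with diameter $m/2$. Given arbitrary $B,B'$ with $|B\setminus B'|=t\le m/2$, I pick a path $B=D_0,D_1,\dots,D_t=B'$ that exchanges one element of $B\setminus B'$ for one element of $B'\setminus B$ at each step. The transposition $(D_0\,D_t)$ is then realized by the standard conjugation
\begin{equation}
\logg{SW}(D_0,D_t) = \Bigl(\prod_{s=t-1}^{1}\logg{SW}(D_{s-1},D_s)\Bigr)\,\logg{SW}(D_{t-1},D_t)\,\Bigl(\prod_{s=1}^{t-1}\logg{SW}(D_{s-1},D_s)\Bigr),
\end{equation}
which uses $2t-1$ edge swaps. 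Each edge swap is a swap of type (1), so the total depth is $O(t\sqrt n)=O(m\sqrt n)=O(\sqrt n\log n)$ since $m=\log_2 n$. The step needing care is checking that the conjugation really implements the transposition and leaves the intermediate qubits $D_1,\dots,D_{t-1}$ fixed. This is a permutation-group check, and it holds because the intermediate vertices are distinct.

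For (3), given arbitrary $B,B'$, I choose any $D$ adjacent to $B$ in the Johnson graph, i.e.\ $|B\cap D|=m/2-1$; such a $D$ exists because $m\ge 2$. If $D=B'$, \cref{cor:addr_S_CZ} applies directly. Otherwise I use $\logg{C_{00}Z}(B,B')=\logg{SW}(D,B')\,\logg{C_{00}Z}(B,D)\,\logg{SW}(D,B')$. This works because conjugating a two-qubit diagonal gate by a swap relabels its support, and $\logg{SW}(D,B')$ does not act on $B$. Here $\logg{SW}(D,B')$ comes from (2) with depth $O(\sqrt n\log n)$ and $\logg{C_{00}Z}(B,D)$ from \cref{cor:addr_S_CZ} with depth $\sqrt n/2$, giving $O(\sqrt n\log n)$ overall. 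I expect no real obstacle beyond bookkeeping. The most delicate point is the $\g{C_{00}Z}$ versus $\g{C_{11}Z}$ sign bookkeeping in (1), which only adds logical Pauli corrections of depth one.
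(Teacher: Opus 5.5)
Your proposal is correct and follows the paper's proof essentially step for step: three $\g{CZ}$-type gates interleaved with logical Hadamards for (1), a path of pairwise-adjacent logical qubits with $2t-1$ edge swaps for (2), and swap-conjugation of an adjacent $\logg{C_{00}Z}$ from \cref{cor:addr_S_CZ} for (3). Two variants are harmless. In (3) the paper transports $B$ along the path to a neighbour of $B'$, whereas you swap $B'$ into a neighbour $D$ of $B$; this costs more edge swaps but gives the same $O(\sqrt{n}\log n)$ depth. In (1) the paper's sequence ($\logg{C_{00}Z}$ followed by $\logg{H}(B)\logg{H}(B')$, repeated three times) already equals $\logg{SW}(B,B')$ up to a global phase, so your $\logg{Z}(B)\logg{Z}(B')$ correction is unnecessary.

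There is one genuine slip in (2). Read as an operator product with the rightmost factor acting first, your displayed formula conjugates in the wrong direction and yields $\logg{SW}(D_{t-2},D_t)$ for $t\ge 3$. Exchange the two outer products, or read the formula as a circuit in time order, to obtain $\logg{SW}(D_0,D_t)$.
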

\begin{proof} 
    (1) For any two logical qubits $B,B'$ such that $|B\cap B'|=m/2-1$, $\logg{C_{00}Z}(B,B')$ can be constructed by \cref{cor:addr_S_CZ}, and $\logg{H}(B)$, and $\logg{H}(B')$ can be constructed from \cref{thm:addr_H}. The gate $\logg{SW}(B,B')$ can be constructed from $\logg{C_{00}Z}(B,B')$, $\logg{H}(B)$, and $\logg{H}(B')$ as follows.
    \begin{equation}
        \begin{tikzpicture}
\begin{yquantgroup}

\registers{
qubit {} b;
qubit {} bprime;
}

\circuit{
init {$B$} b;
init {$B'$} bprime;

hspace {1mm} b;
cnot ~ b, bprime;
box {$\logg{H}$} b, bprime;
cnot ~ b, bprime;
box {$\logg{H}$} b, bprime;
cnot ~ b, bprime;
box {$\logg{H}$} b, bprime;
hspace {1mm} b;

}

\equals

\circuit{
swap (b, bprime);
}

\end{yquantgroup}
\end{tikzpicture} \nonumber
    \end{equation}
    Since each addressable $\logg{H}$ gate has depth $O(\sqrt{n})$ and each addressable $\logg{C_{00}Z}$ gate has depth $\sqrt{n}/2$, the overall circuit for $\logg{C_{00}Z}(B,B')$ such that $|B\cap B'|=m/2-1$ has depth $O(\sqrt{n})$.
    
    (2) Suppose that $B$ and $B'$ differ by $c$ vectors (i.e., $|B \cap B'|=m/2-c$). There exist logical qubits $B^{(1)},B^{(2)},\dots,B^{(c-2)},B^{(c-1)}$ such that $|B \cap B^{(1)}| = |B^{(1)} \cap B^{(2)}|=\dots=|B^{(c-2)} \cap B^{(c-1)}|=|B^{(c-1)} \cap B'|=m/2-1$. Addressable swap gates $\logg{SW}(B,B^{(1)})$, $\logg{SW}(B^{(1)},B^{(2)})$, \dots, $\logg{SW}(B^{(c-2)},B^{(c-1)})$, and $\logg{SW}(B^{(c-1)},B')$ can be constructed from (1).
    Using these gates, $\logg{SW}(B,B')$ can be constructed as follows.
    \begin{equation}
        \begin{tikzpicture}
\begin{yquantgroup}

\registers{
qubit {} b;
qubit {} b1;
qubit {} b2;
nobit ellipsis;
qubit {} bc2;
qubit {} bc1;
qubit {} bprime;
}

\circuit{
init {$B$} b;
init {$B_1$} b1;
init {$B_2$} b2;
init {$B_{c-2}$} bc2;
init {$B_{c-1}$} bc1;
init {$B'$} bprime;

swap (b, b1);
swap (b1, b2);
align -;
dots ellipsis;
align -;
swap (bc2, bc1);
swap (bc1, bprime);
swap (bc2, bc1);
align -;
dots ellipsis;
align -;
swap (b1, b2);
swap (b, b1);
}

\equals

\circuit{
dots ellipsis;
swap (b, bprime);
}

\end{yquantgroup}
\end{tikzpicture} \nonumber
    \end{equation}
    The overall circuit for $\logg{SW}(B,B')$ such that $|B \cap B'|=m/2-c$ consists of $2c-1$ addressable swap gates from (1). In the worst case, $c= m/2 = (\log_2{n})/2$. Therefore, the circuit for any $\logg{SW}(B,B')$ has depth $O(\sqrt{n}\log{n})$.
    
    (3) Suppose that $B$ and $B'$ differ by $c$ vectors ($|B \cap B'|=m/2-c$). We can find  $B^{(1)}$, $B^{(2)}$, \dots, $B^{(c-2)}$, $B^{(c-1)}$ similar to the previous case and construct addressable swap gates $\logg{SW}(B,B^{(1)})$, $\logg{SW}(B^{(1)},B^{(2)})$, \dots, $\logg{SW}(B^{(c-2)},B^{(c-1)})$ from (1). We can also construct $\logg{C_{00}Z}(B^{(c-1)},B')$ from \cref{cor:addr_S_CZ}. Using these gates, $\logg{C_{00}Z}(B,B')$ can be constructed as follows.
    \begin{equation}
        \begin{tikzpicture}
\begin{yquantgroup}

\registers{
qubit {} b;
qubit {} b1;
qubit {} b2;
nobit ellipsis;
qubit {} bc2;
qubit {} bc1;
qubit {} bprime;
}

\circuit{
init {$B$} b;
init {$B_1$} b1;
init {$B_2$} b2;
init {$B_{c-2}$} bc2;
init {$B_{c-1}$} bc1;
init {$B'$} bprime;

swap (b, b1);
swap (b1, b2);
align -;
dots ellipsis;
align -;
swap (bc2, bc1);
cnot ~bc1, bprime;
swap (bc2, bc1);
align -;
dots ellipsis;
align -;
swap (b1, b2);
swap (b, b1);
}

\equals

\circuit{
dots ellipsis;
cnot ~b, bprime;
}

\end{yquantgroup}
\end{tikzpicture} \nonumber
    \end{equation}
    Similar to (2), the circuit for any $\logg{C_{00}Z}(B,B')$ has depth $O(\sqrt{n}\log{n})$.
\end{proof}

\subsubsection{The full logical Clifford group} \label{subsubsec:full_Clifford}

It is possible to show that the full logical Clifford group of any $\mathrm{QRM}(m)$ can be generated by addressable $\logg{S}$, $\logg{H}$, and $\logg{C_{00}Z}$ gates, whose constructions from transversal and fold-transversal gates were described in the previous sections. This implies that any logical Clifford gate in the full logical Clifford group can be implemented by a sequence of transversal and fold-transversal gates. The result can be summarized in the following theorem.

\begin{theorem}[The full logical Clifford group of quantum Reed--Muller codes] \label{thm:full_Clifford}
    Let $m$ be a positive even number, $\mathrm{QRM}(m)$ be the \codepar{n,k,d} quantum Reed--Muller code defined in \cref{def:QRM}, and $K$ be a set of ordered pairs defined in \cref{def:K}. Then for any $m$, the full logical Clifford group $\overline{C}_k$ of $\mathrm{QRM}(m)$ can be generated by $\g{H}^{\otimes n}$ and $\g{U_P}\left(Q(K)\right)$ for all $K$ of size $0 \leq |K| \leq m/2$; i.e.,
    \begin{equation}
        \overline{C}_k = \left\langle \g{H}^{\otimes n},\g{U_P}\left(Q(K)\right)\right\rangle_{0 \leq |K| \leq m/2}.
    \end{equation}
\end{theorem}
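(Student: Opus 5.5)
The proof has two inclusions. The easy one is that the right-hand side sits inside $\overline{C}_k$. By \cref{thm:QK_preserve_stb}, every $\g{U_P}(Q(K))$ preserves the stabilizer group of $\mathrm{QRM}(m)$. The transversal gate $\g{H}^{\otimes n}$ also preserves it, because the code is self-dual and $\g{H}^{\otimes n}$ exchanges $g_x(A)$ and $g_z(A)$. All of these are physical Clifford unitaries that normalize the stabilizer group, so each induces a logical Clifford operation. Hence $\langle \g{H}^{\otimes n},\g{U_P}(Q(K))\rangle \subseteq \overline{C}_k$, where logical operators are understood modulo stabilizers and global phases.

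For the reverse inclusion I will use the standard fact that the $k$-qubit Clifford group is generated, up to phases, by single-qubit $\logg{H}$ and $\logg{S}$ on each qubit together with a two-qubit entangling Clifford gate on every pair of qubits. Here I take the entangling gate to be $\logg{C_{00}Z}$. This is a valid choice because $\logg{C_{00}Z}(B,B')=\logg{C_{11}Z}(B,B')\,\logg{Z}(B)\logg{Z}(B')$ up to phase, and $\logg{Z}=\logg{S}^2$. So it suffices to show that $\logg{S}(B)$, $\logg{H}(B)$ and $\logg{C_{00}Z}(B,B')$ lie in the generated group for all logical qubit indices $B\neq B'$.

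\begin{enumerate}
\item \textbf{Phase gates.} By \cref{cor:addr_S_CZ}, the product $\prod_{L\subseteq K}\g{U_P}(Q(L))$ with $F_1(K)=B$ implements $\logg{S}(B)$ or $\logg{S}^\dagger(B)$. This product uses only generators from the list. If it gives $\logg{S}^\dagger(B)$, cubing it gives $\logg{S}(B)$, since the group is closed under products and the gates have finite order.
\item \textbf{Hadamard gates.} \cref{thm:addr_H} builds $\logg{H}(B)$ from $\g{H}^{\otimes n}$, $\logg{S}(B)$ and $\logg{S}(B^\up{c})$.
\item \textbf{Controlled-$\g{Z}$ on adjacent pairs.} \cref{cor:addr_S_CZ} gives $\logg{C_{00}Z}(B,B')$ when $|B\cap B'|=m/2-1$, again as a product of generators.
\item \textbf{Swaps and remaining pairs.} \cref{thm:addr_SW_CZ} gives $\logg{SW}(B,B')$ for all pairs, and then $\logg{C_{00}Z}(B,B')$ for all pairs, using only the gates from the previous steps.
\end{enumerate}

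The step needing the most care is the connectivity claim used in \cref{thm:addr_SW_CZ}: any two $m/2$-subsets $B,B'$ of $[m]$ must be joined by a chain of $m/2$-subsets in which consecutive subsets differ by one element. This is connectivity of the Johnson graph $J(m,m/2)$. To build the chain, I replace one element of $B\setminus B'$ by one element of $B'\setminus B$ at each step.

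The second point to check is the precise claim that these gates generate the full Clifford group. I would either cite the standard result that $\{H_i,S_i,CZ_{ij}\}$ generates $C_k$ modulo phases, or argue directly that they generate $\mathrm{Sp}(2k,\mathbb{F}_2)$ on the logical Pauli frame. The logical Paulis themselves are products of $\logg{S}^2=\logg{Z}$ and its $\logg{H}$-conjugates, so they are also included.

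Finally, the result is stated for the symplectic basis of \cref{def:log_def,def:canonical_indices}. Since the full Clifford group is basis-independent, the equality of groups holds for any symplectic basis.
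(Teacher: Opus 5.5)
Your proposal is correct and follows essentially the same route as the paper: build addressable $\logg{S}$, $\logg{H}$, and $\logg{C_{00}Z}$ gates from \cref{cor:addr_S_CZ,thm:addr_H,thm:addr_SW_CZ}, convert $\logg{C_{00}Z}$ to $\logg{C_{11}Z}$ using $\logg{Z}=\logg{S}^2$, and invoke the standard generating set $\{\logg{H},\logg{S},\logg{C_{11}Z}\}$. The extra details you supply are correct but left implicit in the paper: the inclusion of the generated group in $\overline{C}_k$ (via \cref{thm:QK_preserve_stb} and self-duality), and the Johnson-graph connectivity behind the swap chains.
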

\begin{proof}
    For any $\mathrm{QRM}(m)$, addressable $\logg{S}$, $\logg{H}$, and $\logg{C_{00}Z}$ gates on any logical qubit (or any pair of logical qubits) can be constructed by \cref{cor:addr_S_CZ,thm:addr_H,thm:addr_SW_CZ}. Observe that (i) $\logg{C_{00}Z} = (\logg{C_{11}Z})(\logg{Z}\otimes \logg{Z})$ up to some global phase, (ii) $\logg{Z} = \logg{S}^2$, and (iii) the full logical Clifford group can be generated by $\logg{H}$, $\logg{S}$, and $\logg{C_{11}Z}$ gates on any logical qubit (or any pair of logical qubits). Thus, for any $\mathrm{QRM}(m)$, the addressable gates from \cref{cor:addr_S_CZ,thm:addr_H,thm:addr_SW_CZ} are sufficient to generate the full logical Clifford group $\overline{C}_k$. Since any addressable gate from \cref{cor:addr_S_CZ}, \cref{thm:addr_H}, or \cref{thm:addr_SW_CZ} can be implemented by a sequence of $\g{H}^{\otimes n}$ and $\g{U_P}\left(Q(K)\right)$ where $0 \leq |K| \leq m/2$, we have that $\overline{C}_k = \left\langle \g{H}^{\otimes n},\g{U_P}\left(Q(K)\right)\right\rangle_{0 \leq |K| \leq m/2}$.
\end{proof}

It should be noted that the logical operations of the gate sequences proposed in \cref{cor:addr_S_CZ,thm:addr_H,thm:addr_SW_CZ} are specific to the symplectic basis determined by \cref{def:log_def,def:canonical_indices}. Nevertheless, the result in \cref{thm:full_Clifford} is \emph{independent} of the choice of symplectic basis; any symplectic basis can be converted to another by the conjugation of some logical Clifford operation (see \cite{TTF25} for example), and our construction can generate \emph{any} logical Clifford gate in the full logical Clifford group. Therefore, even if the gate sequences proposed in \cref{cor:addr_S_CZ,thm:addr_H,thm:addr_SW_CZ} provide different logical operations in a different symplectic basis, $\{ \g{H}^{\otimes n},\g{U_P}\left(Q(K)\right)\}_{0 \leq |K| \leq m/2}$ still generates the full logical Clifford group.

\subsection{Other logical gates} \label{subsec:other_gates}

Besides $\g{U_P}(Q(K))$, other fold-transversal gates such as $\g{U_S}(P(i,j))$, $\g{U_P}(P(i,j))$, and $\g{U_S}(Q(i,j))$ for any $i,j \in [m]$ are also valid logical operators since they preserve the stabilizer group (as shown by \cref{thm:PnQ_preserve_stb}). In this section, we consider some examples of $\g{U_S}(P(i,j))$, $\g{U_P}(P(i,j))$, and $\g{U_S}(Q(i,j))$ for the code $\mathrm{QRM}(4)$ and observe their logical operations.

We start by considering the permutation $P(1,2)$ which swaps $\mathbf{v}_1$ and $\mathbf{v}_2$ and maps other $\mathbf{v}_i$ to itself. Under $P(1,2)$, the vectors of the form $\mathbf{v}_i\mathbf{v}_j$ where $i,j \in \{1,2,3,4\}$ and $i \neq j$ are mapped as follows.
\begin{equation}
    \begin{tabular}{r c l c r c l}
        $\mathbf{v}_1\mathbf{v}_2$ &  $\mapsto$ & $\mathbf{v}_1\mathbf{v}_2$, & $\quad$ & 
        $\mathbf{v}_3\mathbf{v}_4$ & $\mapsto$ & $\mathbf{v}_3\mathbf{v}_4$, \\
        $\mathbf{v}_1\mathbf{v}_3$ &  $\mapsto$ & $\mathbf{v}_2\mathbf{v}_3$, & $\quad$ & 
        $\mathbf{v}_2\mathbf{v}_4$ & $\mapsto$ & $\mathbf{v}_1\mathbf{v}_4$, \\
        $\mathbf{v}_1\mathbf{v}_4$ &  $\mapsto$ & $\mathbf{v}_2\mathbf{v}_4$, & $\quad$ & 
        $\mathbf{v}_2\mathbf{v}_3$ & $\mapsto$ & $\mathbf{v}_1\mathbf{v}_3$,
    \end{tabular} \nonumber
\end{equation}
Recall that for the code $\mathrm{QRM}(4)$, the canonical logical qubit indices are $B_1=\{1,2\}$, $B_2=\{1,3\}$, $B_3=\{1,4\}$, $B_4=\{3,4\}$, $B_5=\{2,4\}$, and $B_6=\{2,3\}$, and the logical Pauli operators are $\logg{X}(i)=\logg{X}(B_i)=\g{X}(\mathbf{v}_{B_i})$ and $\logg{Z}(i)=\logg{Z}(B_i)=\g{Z}(\mathbf{v}_{B_i^\up{c}})$. By \cref{prop:SW_PH_transform}, the fold-transversal gates $\g{U_S}(P(1,2))$ and $\g{U_P}(P(1,2))$ map the logical Pauli operators as follows.
\begin{equation}
    \begin{tabular}{r r c l r c l c r c l r c l}
        $\g{U_S}(P(1,2))$: & $\logg{X}(1)$ & $\mapsto$ & $\logg{X}(1)$,  & $\logg{Z}(1)$ & $\mapsto$ & $\logg{Z}(1)$, & \quad & $\logg{X}(4)$ & $\mapsto$ & $\logg{X}(4)$,  & $\logg{Z}(4)$ & $\mapsto$ & $\logg{Z}(4)$,\\
        & $\logg{X}(2)$ & $\mapsto$ & $\logg{X}(6)$,  & $\logg{Z}(2)$ & $\mapsto$ & $\logg{Z}(6)$, & \quad & $\logg{X}(5)$ & $\mapsto$ & $\logg{X}(3)$,  & $\logg{Z}(5)$ & $\mapsto$ & $\logg{Z}(3)$,\\
        & $\logg{X}(3)$ & $\mapsto$ & $\logg{X}(5)$,  & $\logg{Z}(3)$ & $\mapsto$ & $\logg{Z}(5)$, & \quad & $\logg{X}(6)$ & $\mapsto$ & $\logg{X}(2)$,  & $\logg{Z}(6)$ & $\mapsto$ & $\logg{Z}(2)$,
    \end{tabular} \nonumber
\end{equation}
\begin{equation}
    \begin{tabular}{r r c r r c l c r c r r c l}
        $\g{U_P}(P(1,2))$: & $\logg{X}(1)$ & $\mapsto$ & $\logg{X}(1)\logg{Z}(4)$,  & $\logg{Z}(1)$ & $\mapsto$ & $\logg{Z}(1)$, & \quad & $\logg{X}(4)$ & $\mapsto$ & $\logg{X}(4)\logg{Z}(1)$,  & $\logg{Z}(4)$ & $\mapsto$ & $\logg{Z}(4)$,\\
        & $\logg{X}(2)$ & $\mapsto$ & $-\logg{X}(2)\logg{Z}(3)$,  & $\logg{Z}(2)$ & $\mapsto$ & $\logg{Z}(2)$, & \quad & $\logg{X}(5)$ & $\mapsto$ & $-\logg{X}(5)\logg{Z}(6)$,  & $\logg{Z}(5)$ & $\mapsto$ & $\logg{Z}(5)$,\\
        & $\logg{X}(3)$ & $\mapsto$ & $-\logg{X}(3)\logg{Z}(2)$,  & $\logg{Z}(3)$ & $\mapsto$ & $\logg{Z}(3)$, & \quad & $\logg{X}(6)$ & $\mapsto$ & $-\logg{X}(6)\logg{Z}(5)$,  & $\logg{Z}(6)$ & $\mapsto$ & $\logg{Z}(6)$.
    \end{tabular} \nonumber
\end{equation}
Therefore, $\g{U_S}(P(1,2)) = \logg{SW}(2,6)\logg{SW}(3,5)$ and $\g{U_P}(P(1,2))=\logg{C_{11}Z}(1,4)\logg{C_{00}Z}(2,3)\logg{C_{00}Z}(5,6)$. Similarly, for any $i,j \in \{1,2,3,4\}$ where $i \neq j$, $\g{U_S}(P(i,j))$ acts as a product of logical swap gates on some pairs of logical qubits, and $\g{U_P}(P(i,j))$ acts as a product of logical controlled-$\g{Z}$ gates on some pairs of logical qubits.

Next, we consider the permutation $Q(1,2)$ which maps $\mathbf{v}_1$ to $\mathbf{v}_1+\mathbf{v}_2$ and maps other $\mathbf{v}_i$ to itself. Under $Q(1,2)$, the vectors of the form $\mathbf{v}_i\mathbf{v}_j$ where $i,j \in \{1,2,3,4\}$ and $i \neq j$ are mapped as follows.
\begin{equation}
    \begin{tabular}{r c l c r c l}
        $\mathbf{v}_1\mathbf{v}_2$ &  $\mapsto$ & $\mathbf{v}_1\mathbf{v}_2+\mathbf{v}_2$, & $\quad$ & 
        $\mathbf{v}_3\mathbf{v}_4$ & $\mapsto$ & $\mathbf{v}_3\mathbf{v}_4$, \\
        $\mathbf{v}_1\mathbf{v}_3$ &  $\mapsto$ & $\mathbf{v}_1\mathbf{v}_3+\mathbf{v}_2\mathbf{v}_3$, & $\quad$ & 
        $\mathbf{v}_2\mathbf{v}_4$ & $\mapsto$ & $\mathbf{v}_2\mathbf{v}_4$, \\
        $\mathbf{v}_1\mathbf{v}_4$ &  $\mapsto$ & $\mathbf{v}_1\mathbf{v}_4+\mathbf{v}_2\mathbf{v}_4$, & $\quad$ & 
        $\mathbf{v}_2\mathbf{v}_3$ & $\mapsto$ & $\mathbf{v}_2\mathbf{v}_3$.
    \end{tabular} \nonumber
\end{equation}
By \cref{prop:SW_PH_transform}, the fold-transversal gate $\g{U_S}(P(1,2))$ maps the logical Pauli operators as follows (up to a multiplication of some stabilizer).
\begin{equation}
    \begin{tabular}{r r c l r c l c r c l r c l}
        $\g{U_S}(Q(1,2))$: & $\logg{X}(1)$ & $\mapsto$ & $\logg{X}(1)$,  & $\logg{Z}(1)$ & $\mapsto$ & $\logg{Z}(1)$, & \quad & $\logg{X}(4)$ & $\mapsto$ & $\logg{X}(4)$,  & $\logg{Z}(4)$ & $\mapsto$ & $\logg{Z}(4)$,\\
        & $\logg{X}(2)$ & $\mapsto$ & $\logg{X}(2)\logg{X}(6)$,  & $\logg{Z}(2)$ & $\mapsto$ & $\logg{Z}(2)$, & \quad & $\logg{X}(5)$ & $\mapsto$ & $\logg{X}(5)$,  & $\logg{Z}(5)$ & $\mapsto$ & $\logg{Z}(5)\logg{Z}(3)$,\\
        & $\logg{X}(3)$ & $\mapsto$ & $\logg{X}(3)\logg{X}(5)$,  & $\logg{Z}(3)$ & $\mapsto$ & $\logg{Z}(3)$, & \quad & $\logg{X}(6)$ & $\mapsto$ & $\logg{X}(6)$,  & $\logg{Z}(6)$ & $\mapsto$ & $\logg{Z}(6)\logg{Z}(2)$,
    \end{tabular} \nonumber
\end{equation}
Therefore, $\g{U_S}(Q(1,2)) = \logg{CX}(2,6)\logg{CX}(3,5)$. Similarly, for any $i,j \in \{1,2,3,4\}$ where $i \neq j$, $\g{U_S}(Q(i,j))$ acts as a product of logical controlled-NOT gates on some pairs of logical qubits. For comparison, we also show below the transformation of logical Pauli operators by the fold-transversal gate $\g{U_P}(Q(1,2))=\logg{C_{11}Z}(1,4)\logg{C_{11}Z}(2,5)\logg{C_{11}Z}(3,6)\logg{C_{00}Z}(2,3)$, whose logical operation can be found by \cref{thm:logical_from_fold}.
\begin{equation}
    \begin{tabular}{r r c l r c l c r c l r c l}
        $\g{U_P}(Q(1,2))$: & $\logg{X}(1)$ & $\mapsto$ & $\phantom{+}\logg{X}(1)\logg{Z}(4)$,  & $\logg{Z}(1)$ & $\mapsto$ & $\logg{Z}(1)$, & \quad & $\logg{X}(4)$ & $\mapsto$ & $\logg{X}(4)\logg{Z}(1)$,  & $\logg{Z}(4)$ & $\mapsto$ & $\logg{Z}(4)$,\\
        & $\logg{X}(2)$ & $\mapsto$ & $-\logg{X}(2)\logg{Z}(3)\logg{Z}(5)$,  & $\logg{Z}(2)$ & $\mapsto$ & $\logg{Z}(2)$, & \quad & $\logg{X}(5)$ & $\mapsto$ & $\logg{X}(5)\logg{Z}(2)$,  & $\logg{Z}(5)$ & $\mapsto$ & $\logg{Z}(5)$,\\
        & $\logg{X}(3)$ & $\mapsto$ & $-\logg{X}(3)\logg{Z}(2)\logg{Z}(6)$,  & $\logg{Z}(3)$ & $\mapsto$ & $\logg{Z}(3)$, & \quad & $\logg{X}(6)$ & $\mapsto$ & $\logg{X}(6)\logg{Z}(3)$,  & $\logg{Z}(6)$ & $\mapsto$ & $\logg{Z}(6)$.
    \end{tabular} \nonumber
\end{equation}

In this work, we did not prove the closed forms for the logical operations of $\g{U_S}(P(i,j))$, $\g{U_P}(P(i,j))$, and $\g{U_S}(Q(i,j))$ for any $\mathrm{QRM}(m)$ since these gates are not necessary for the generation of the full logical Clifford group. Nevertheless, we believe that such gates could be useful for optimizing the depth of the circuit implementing an arbitrary logical Clifford gate. 

\section{Fundamental limitations on circuit depth for realizing an arbitrary logical Clifford gate} \label{sec:limitations}

\cref{thm:full_Clifford} ensures that any logical Clifford gate in $\overline{C}_k$ can be generated by the logical gates, each of which can be implemented by a constant-depth application of physical Clifford gates without requiring ancilla qubits.
Nevertheless, this does not mean that all logical Clifford gates can be implemented in constant depth, as one generally needs to sequentially apply generators of the full Clifford group. 
In fact, our construction implements the standard addressable Clifford gates $\logg{H}$, $\logg{S}$, and $\logg{C_{00}Z}$ on the specific logical qubits in depth 
\bal
 \logg{H}: O(\sqrt{n}),\quad \logg{S}:O(\sqrt{n}),\quad \logg{C_{00}Z}:O(\sqrt{n}\log n) 
 \label{eq:standard Cliffod gates}
\eal
which are not constant in $n$.

One might wonder whether it is possible to come up with a better construction that only requires constant depth to implement an arbitrary logical Clifford gate. The following results suggest that it is not the case---the fact that our construction cannot achieve a constant-depth implementation of some logical Clifford gate is not an artifact of the specific choice of our method, but a consequence of the fundamental limitation.
\begin{theorem}[The circuit depth of an arbitrary logical Clifford gate] \label{thm:depth fully addressable Clifford}
   For a given family of $\codepar{n,k,d}$ codes, let $\overline{C}_k$ be the full logical Clifford group of the code. Also, let $C_l$ be the set of physical Clifford gates that have support on at most $l$ qubits. Suppose that $\overline{C}_k$ can be generated by physical gates in $C_l$ where $l=O(1)$. Then, there exists a logical Clifford gate $\logg{G} \in \overline{C}_k$ such that any implementation of $\logg{G}$ by physical gates in $C_l$ has depth $\Omega(\frac{k^2}{n\log n})$.
\end{theorem}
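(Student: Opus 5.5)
We will use a counting argument: compare the size of the logical Clifford group with the number of distinct physical circuits of a given depth built from gates in $C_l$.

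First, lower-bound the size of the target set. Modulo Paulis and phases, the $k$-qubit Clifford group is isomorphic to the symplectic group $\mathrm{Sp}(2k,\mathbb{F}_2)$. Its order is
\begin{equation}
|\mathrm{Sp}(2k,\mathbb{F}_2)| = 2^{k^2}\prod_{j=1}^{k}(4^j-1) = 2^{2k^2+k-O(1)}.
\end{equation}
Hence $\log_2|\overline{C}_k| = \Omega(k^2)$, and this survives quotienting by Paulis and phases, which only removes $O(k)$ bits. Distinct logical Clifford elements require distinct physical unitaries, since their restrictions to the code space differ. So it suffices to count physical circuits up to that equivalence.

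Second, upper-bound the number of depth-$D$ circuits over $C_l$. One layer is a collection of gates from $C_l$ with pairwise disjoint supports. A layer is specified in two parts:
\begin{enumerate}
\item a partition of a subset of the $n$ qubits into blocks of size at most $l$, for which there are at most $n^{O(n)}=2^{O(n\log n)}$ choices (for instance, assign each qubit a block label in $[n]$);
\item a choice of an $l$-qubit Clifford for each block, for which there are at most $|C_l'|^{n}=2^{O(n)}$ choices because $l=O(1)$ makes the local Clifford group finite up to phase.
\end{enumerate}
So there are at most $2^{O(n\log n)}$ layers, and at most $2^{O(Dn\log n)}$ circuits of depth $D$. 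If every element of $\overline{C}_k$ had an implementation of depth at most $D$, then $2^{O(Dn\log n)}\ge 2^{\Omega(k^2)}$, which gives $D=\Omega(k^2/(n\log n))$. Equivalently, some $\logg{G}$ needs depth at least this.

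Two points need care. The first is to count implementations only up to global phase, which is harmless because the Clifford group modulo phase is finite. The second is to make sure the theorem's hypothesis enters properly: the hypothesis that $\overline{C}_k$ is generated by $C_l$ guarantees that every $\logg{G}$ has some implementation, and we then take the minimal depth over all implementations. The dominant term is the $n\log n$ bits per layer coming from the choice of pairing or partition; this is exactly the source of the $\log n$ in the bound. I expect this layer-counting to be the only delicate step, and it is routine.

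For the Reed--Muller family, $k\approx n/\sqrt{\pi\log_2(n)/2}$, so $k^2/(n\log n)=\Theta(n/\log^2 n)$, which is unbounded.
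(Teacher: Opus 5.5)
Your proposal is correct and follows the paper's argument. Both bound the number of single layers by the number of ways to group the $n$ qubits into disjoint blocks of size at most $l$ (which is $2^{O(n\log n)}$) times $|C_l|^{O(n)}$, raise this to the power $D$, and compare with $\log|\overline{C}_k| = \Theta(k^2)$ to conclude $D = \Omega(k^2/(n\log n))$. Your treatment of the group order via $|\mathrm{Sp}(2k,\mathbb{F}_2)|$ is, if anything, cleaner than the paper's: the paper writes $|C_k| = \Theta(2^{k^2})$, but only $\log|C_k| = \Theta(k^2)$ is true, and that is all the argument needs.
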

\begin{proof}
Let $N_{l,n}$ be the number of unitary operators that can be realized by applying gates from $C_l$ on $n$ qubits in one layer without overlap of the support. 
$N_{l,n}$ can be upper-bounded by the number of ways of dividing $\left\lceil\frac{n}{l}\right\rceil l$ qubits into $\left\lceil\frac{n}{l}\right\rceil$ sets of $l$ qubits, multiplied by $|C_l|^{\left\lceil\frac{n}{l}\right\rceil}$, i.e., the number of ways of applying Clifford gates that have support on at most $l$ qubits in each set. Namely, it holds that
\bal
 N_{l,n}\leq \frac{\left(\left\lceil\frac{n}{l}\right\rceil l\right)!}{(l!)^{\left\lceil\frac{n}{l}\right\rceil
 }\left\lceil\frac{n}{l}\right\rceil!}|C_l|^{^{\left\lceil\frac{n}{l}\right\rceil
 }}.
\eal
Therefore, the number of different gates realizable by a depth-$D$ application of gates in $C_l$ is upper bounded by 
\bal
 \left(N_{l,n}\right)^D\leq \left[\frac{\left(\left\lceil\frac{n}{l}\right\rceil l\right)!}{(l!)^{\left\lceil\frac{n}{l}\right\rceil
 }\left\lceil\frac{n}{l}\right\rceil!}|C_l|^{^{\left\lceil\frac{n}{l}\right\rceil
 }}\right]^D.
\eal
To realize all Clifford gates in $\overline{C}_k$, it is necessary that the number of realizable gates after depth $D$ is greater than the number of $k$-qubit Clifford gates. 
This requires 
\bal
|C_k|\leq \left[\frac{\left(\left\lceil\frac{n}{l}\right\rceil l\right)!}{(l!)^{\left\lceil\frac{n}{l}\right\rceil
 }\left\lceil\frac{n}{l}\right\rceil!}|C_l|^{^{\left\lceil\frac{n}{l}\right\rceil
 }}\right]^D,
\eal
which provides a lower bound for the depth as 
\bal
 D\geq \frac{\log |C_k|}{\log \left(\frac{\left(\left\lceil\frac{n}{l}\right\rceil l\right)!}{(l!)^{\left\lceil\frac{n}{l}\right\rceil
 }\left\lceil\frac{n}{l}\right\rceil!}|C_l|^{^{\left\lceil\frac{n}{l}\right\rceil
 }}\right)}.
 \label{eq:depth lower bound explicity}
\eal

Let us now focus on the large $n$ and $k$ regime and see how the lower bound scales with these parameters.
We first note that, using the assumption that $l=O(1)$, we get 
\bal
 \log \left(\frac{\left(\left\lceil\frac{n}{l}\right\rceil l\right)!}{(l!)^{\left\lceil\frac{n}{l}\right\rceil
 }\left\lceil\frac{n}{l}\right\rceil!}\right) = \Theta(n\log n).
\eal
Also, the number of $n$-qubit Clifford gates is known to be~\cite{Koenig_2014_howto}
\bal
 |C_n| = 2^{n^2+2n}\sum_{j=1}^n(4^j-1) = \Theta(2^{n^2}).
\eal
Together with \eqref{eq:depth lower bound explicity}, we get 
\bal
 D\geq \Omega\left(\frac{k^2}{n\log n}\right)
\eal
where we used that $|C_l|=O(1)$.
\end{proof}

\cref{thm:depth fully addressable Clifford} particularly means that the constant-depth implementation of an arbitrary logical Clifford gate in $\overline{C}_k$ is impossible for codes with high-encoding rates satisfying $k=\omega(\sqrt{n\log n})$.
This is the case for the quantum Reed--Muller codes focused in this work, which gives the following fundamental restriction. 

\begin{corollary}\label{cor:depth fully addressable Reed--Muller_V2}
 Let $m$ be a positive even number, $\mathrm{QRM}(m)$ be the $\codepar{n=2^m,k=\binom{m}{m/2},d=2^{m/2}}$ quantum Reed--Muller code defined in \cref{def:QRM}, and $\overline{C}_k$ be the full logical Clifford group. Then, there exists a logical Clifford gate $\logg{G} \in \overline{C}_k$ such that for any implementation of $\logg{G}$ that uses only transversal and fold-transversal gates comprising single- and two-qubit physical Clifford gates, the implementation has depth $\Omega\left(\frac{n}{(\log n)^2}\right)$.
\end{corollary}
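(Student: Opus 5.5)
The corollary follows by specializing \cref{thm:depth fully addressable Clifford} to the family $\mathrm{QRM}(m)$; all that remains is to check the hypotheses and plug in the parameters.

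First, the hypotheses. Any transversal or fold-transversal gate built from single- and two-qubit physical Clifford gates is one layer of non-overlapping gates from $C_l$ with $l=2$. A depth-$D$ implementation of $\logg{G}$ by such gates is therefore a depth-$D$ circuit over $C_2$. By \cref{thm:full_Clifford}, $\overline{C}_k$ is generated by $\g{H}^{\otimes n}$ and the gates $\g{U_P}(Q(K))$. These use only $\g{H}$, $\g{S}$ and $\g{C_{11}Z}$, so $\overline{C}_k$ is indeed generated by physical gates in $C_2$, with $l=2=O(1)$. Thus \cref{thm:depth fully addressable Clifford} applies. It gives a logical Clifford $\logg{G}$ such that every implementation by $C_2$ gates, in particular every sequence of transversal and fold-transversal layers, has depth $\Omega(k^2/(n\log n))$. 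The counting argument really bounds the number of distinct logical gates, but that is already covered by the theorem.

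Second, the parameters. Here $n=2^m$ and $k=\binom{m}{m/2}$. The standard central binomial bound $\binom{m}{m/2}\geq 2^m/\sqrt{2m}$ holds for all even $m\geq 2$ and is consistent with the Stirling estimate quoted earlier. It gives
\begin{equation}
k^2 \geq \frac{4^m}{2m} = \frac{n^2}{2\log_2 n}.
\end{equation}
Dividing by $n\log n$ yields $k^2/(n\log n)=\Omega\bigl(n/(\log n)^2\bigr)$, which is the claimed bound.

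The only care needed is in the first step: confirming that the transversal and fold-transversal gates here fall within the $C_l$ framework with $l=2$, and that the theorem's lower bound holds for \emph{any} implementation and not just the paper's construction. Both follow directly from how \cref{thm:depth fully addressable Clifford} is phrased.
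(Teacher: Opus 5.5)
Your proposal is correct and takes the same approach as the paper: specialize \cref{thm:depth fully addressable Clifford} to $\mathrm{QRM}(m)$ and use $k=\binom{m}{m/2}=\Theta(n/\sqrt{\log n})$. You also spell out two things the paper leaves implicit: the hypothesis check ($l=2$, with generation given by \cref{thm:full_Clifford}), and the explicit bound $\binom{m}{m/2}\geq 2^m/\sqrt{2m}$ that yields $k^2/(n\log n)=\Omega\bigl(n/(\log n)^2\bigr)$.
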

\begin{proof}
   This follows by applying \cref{thm:depth fully addressable Clifford} to the quantum Reed--Muller codes and noting that $k=\binom{m}{m/2} = \Theta(n/\sqrt{\log n})$.
\end{proof}
\cref{cor:depth fully addressable Reed--Muller_V2} not only shows that some logical Clifford gates cannot be implemented in constant depth on the quantum Reed--Muller codes, 
but also implies that the implementation of the standard logical Clifford gates in \eqref{eq:standard Cliffod gates} has a ``better than the worst'' depth scaling.

To implement a Clifford circuit on the logical level of a quantum Reed--Muller code, one can turn each physical Clifford gate in the circuit into its corresponding addressable Clifford gate from \cref{subsec:addr_gates}. Whether this compilation of an arbitrary Clifford circuit provides an implementation that matches the lower bound in \cref{cor:depth fully addressable Reed--Muller_V2} is still an open question. However, we conjecture that this might not be the case for two reasons. First, \cref{cor:depth fully addressable Reed--Muller_V2} is based on a rather simple counting argument, so the worst-case logical Clifford gate might require a circuit that scales worse than $\Omega\left(\frac{n}{(\log n)^2}\right)$. Second, the compilation discussed above might be suboptimal since some combination of logical gates can be easily implemented using a single layer of fold-transversal gates, while an implementation of each of the addressable gates in the combination requires a much deeper circuit. For example, for any $\mathrm{QRM}(m)$, the logical operation $\logg{C_{11}Z}(1,k/2+1)\logg{C_{11}Z}(2,k/2+2)\cdots\logg{C_{11}Z}(k/2,k)$ can be implemented by the depth-1 circuit $\g{U_P}(e)$ (similarly to the gate $\g{U_P}(e)$ for $\mathrm{QRM}(4)$ in \cref{ex:logical_from_fold,ex:fold_product}), while a compilation that implements each $\logg{C_{11}Z}(i,k/2+i)$ separately requires a circuit of depth $\frac{k}{2} O(\sqrt{n}{\log n}) = O(n^{3/2}\sqrt{\log n})$ in total. We believe that our lower bound could be improved and a more efficient circuit compilation tailored to our gate construction could be developed. We leave these studies to future work.

\section{Discussion and conclusions} \label{sec:discussion}

QECCs that encode multiple logical qubits per code block seem to be good candidates for large-scale quantum computers as the information can be encoded more efficiently than well-known codes such as surface codes or color codes. However, one big question for such high-rate codes is how logical operations on a specific set of logical qubits in the code block can be performed with low overhead. In this work, we are interested particularly in logical operations through transversal and fold-transversal gates as they require constant time overhead with no ancilla qubits. We study the family of \codepar{n=2^m,k={m \choose m/2},d=2^{m/2}} quantum Reed--Muller codes and show that it is possible to construct \emph{any} addressable Clifford gate in the full logical Clifford group using only transversal and fold-transversal gates. To our knowledge, this is the first known construction of the full logical Clifford group using only transversal and fold-transversal gates for a family of codes in which $k$ grows near-linearly in $n$ up to a $1/\sqrt{\log n}$ factor.

Explicit construction of any addressable $\logg{H}$, $\logg{S}$, $\logg{C_{00}Z}$, or $\logg{SW}$ gate as a sequence of transversal and fold-transversal gates for any code in the studied quantum Reed--Muller code family can be found using our Python package \cite{Chan2026}.
The package can be used to verify all theorems in \cref{sec:logical_for_QRM}, and we have numerically verified \crefrange{thm:PnQ_preserve_stb}{thm:fold_product} up to $m =10$ and \cref{cor:addr_S_CZ,thm:addr_H,thm:addr_SW_CZ} up to $m =6$. Our package can construct and compute the logical action of other fold-transversal gates such as $\g{U_S}(P(i,j))$, $\g{U_P}(P(i,j))$, and $\g{U_S}(Q(i,j))$, assisting further exploration of logical gates for quantum Reed--Muller codes.

We note the differences between our construction and that by Gong and Renes in \cite{GR24}, in which the full logical Clifford group on the same family of quantum Reed--Muller codes can also be achieved using only transversal and fold-transversal gates. In their work, any logical controlled-NOT-type gate (including intra-block addressable $\logg{CX}$ gates) is constructed using the technique proposed by Grassl and Roetteler in \cite{GR13}, which requires transversal $\g{CX}$ gates and an additional ancilla block encoded in the same code as the data block. Any logical phase-type gate (including addressable $\logg{S}$ gates) is constructed from $\logg{CX}$ gates and a fold-transversal gate involving physical $\g{S}$ and $\g{C_{11}Z}$ gates. Logical controlled-NOT-type and phase-type gates together with a Hadamard-type gate $\g{H}^{\otimes n}$ generate the full logical Clifford group for the code on the data block. In our work, any addressable $\logg{S}$ and some intra-block $\logg{C_{00}Z}$ gates are constructed from a sequence of fold-transversal gates involving physical $\g{S}$ and $\g{C_{11}Z}$ gates. These gates and $\g{H}^{\otimes n}$ are then used to construct any addressable $\logg{H}$, intra-block $\logg{SW}$, or intra-block $\logg{C_{00}Z}$ gate, which in turn generate the full logical Clifford group. The main difference is that an additional ancilla block of code is not required in our construction. Note that although our construction can save the space overhead by 50\%, the required time overhead could be larger than the construction in \cite{GR24}.

We observe that some logical operations on the studied quantum Reed--Muller code family can be virtually realized by relabeling the physical qubits. Examples are the gates $\g{U_S}(P(1,2)) = \logg{SW}(2,6)\logg{SW}(3,5)$ and $\g{U_S}(Q(1,2)) = \logg{CX}(2,6)\logg{CX}(3,5)$ on the code $\mathrm{QRM}(4)$ (see \cref{subsec:other_gates}). This is similar to the logical operations on the phantom codes introduced by Koh et al.~\cite{KGDTG+26}; by the definition of phantom codes, an addressable $\logg{CX}$ gate on any pair of logical qubits can be implemented by physical qubit permutation. Unfortunately, on the quantum Reed--Muller code family that we study, an addressable $\logg{SW}$ or $\logg{CX}$ gate on a single pair of logical qubits cannot be obtained using the swap-type fold-transversal gates alone. It was shown in \cite{KGDTG+26} that some phantom codes can be derived from general quantum Reed--Muller codes by turning some logical operators into stabilizers; in other words, the addressability of $\logg{CX}$ gates can be obtained by sacrificing some logical qubits. We believe that our construction of addressable $\logg{S}$ and $\logg{C_{00}Z}$ gates should also be applicable for the phantom codes derived from quantum Reed--Muller codes.
However, our construction for the other logical Clifford gates may not be applicable;
this is because the gate $\g{H}^{\otimes n}$
(required to generate the full logical Clifford group)
does not preserve the stabilizer group of phantom quantum Reed--Muller codes.
Therefore, additional techniques may be required.

It is interesting to ask what properties of the quantum Reed--Muller codes allow for the construction of the full Clifford group from transversal and fold-transversal gates, and whether there are other code families with similar properties. Suppose that there exists an \codepar{n,k,d} code (that is not a quantum Reed--Muller code) in which the logical Clifford group $\overline{C}_{\tilde{k}}$ ($\tilde{k} \leq k$) can be achieved using only transversal and fold-transversal gates. We believe that there are some fundamental limitations on the encoding rate $k/n$, the number of logical qubits $\tilde{k}$ that allow for any logical Clifford gate, and the circuit depth. In particular, we conjecture that if $k/n$ is greater than a certain number, there exists a logical Clifford gate that cannot be implemented in constant time, and if $k/n$ is even greater, addressable Clifford gates on some logical qubits might not be achievable ($\tilde{k}$ is strictly less than $k$). Our results in \cref{thm:depth fully addressable Clifford,cor:depth fully addressable Reed--Muller_V2} partially answer the first conjecture, but we believe that our bound can be further improved. These limitations could be related to other no-go theorems for addressable gates from transversal gates such as the ones in \cite{GJ25,KZ25}. We leave the investigation of such limitations to future work.

We note that the family of SHYPS codes developed in \cite{MGFCS+25} is another family of high-rate codes in which the full logical Clifford group can be generated using only transversal and fold-transversal gates. However, the number of logical qubits $k$ in that case scales poly-logarithmically with the block length $n$. One key difference between the SHYPS codes and the quantum Reed--Muller codes of our interest is that the SHYPS codes are qLDPC while the quantum Reed--Muller codes are not. For each code family of qLDPC codes, the weight of each stabilizer generator is upper-bounded by some constant independent of the block length, making them desirable for the implementation of FTEC schemes whose circuits depend on the weight of stabilizer generators, such as Shor \cite{Shor96,DA07,TPB23} and flag \cite{CR18,CR20} FTEC schemes. For a family of non-qLDPC codes, in contrast, the weight of each stabilizer generator may scale with the block length, but this does not affect the implementation of FTEC schemes such as Steane \cite{Steane97,Steane04} or Knill \cite{Knill05} schemes, whose circuits are independent of stabilizer weight (assuming that the required ancilla code blocks can be prepared efficiently). Moreover, non-qLDPC codes such as quantum Reed--Muller codes, quantum Hamming codes \cite{Steane96_qHamming}, and quantum Bose–Chaudhuri–Hocquenghem (BCH) codes \cite{GB99} generally have higher encoding rates compared to qLDPC codes of the same distance. Whether qLDPC or non-qLDPC codes are more practical could depend on the platform of the quantum computer, architecture, or the available resources.

We point out that our construction only shows the possibility to construct any logical Clifford gate through transversal and fold-transversal gates, but it might not be the most efficient way to do so. Besides the gates used for our construction of the full Clifford group, there are other logical gates that can be constructed by transversal and fold-transversal gates, such as $\g{U_S}(P(i,j))$, $\g{U_P}(P(i,j))$, and $\g{U_S}(Q(i,j))$. Although these gates are not necessary for the generation of the full Clifford group, they might be useful for circuit depth optimization. Also, some non-addressable Clifford gates such as $\prod_{i=1}^{k/2} \logg{C_{11}Z}(i,k/2+i)$ can be implemented using only a few layers of fold-transversal gates, suggesting that decomposing an arbitrary Clifford circuit into addressable gates and applying our construction on each of them might not be the best solution. To minimize the overall circuit depth, a better way to compile arbitrary Clifford circuits may be required.

Our work could provide an alternative way to think about how a general circuit on a high-rate QECC can be constructed. Conventionally, one might aim to find an addressable gate that can be implemented in constant time for a certain code (or find a code in which such an implementation is possible) then construct a general circuit from such addressable gates. Our approach is different: logical gates that can be implemented on the quantum Reed--Muller codes in constant time are not addressable, but addressable gates can be made by combining such constant-depth logical gates. It has been demonstrated in several works that transversal gates generally implement non-addressable gates on a high-rate code (see \cite{TTF25} for example). For this reason, we believe that our approach to construct addressable gates and a general circuit might be more natural than the conventional approach, but efficient circuit compilation would also be necessary to optimize the circuit depth.

It is worth discussing the fault-tolerant properties of transversal and fold-transversal gates. Transversal gates do not spread errors inside each code block, and each faulty gate can lead to no more than one error per block, so the ability to correct errors after propagation is guaranteed if the total number of faults and input errors is no more than $(d-1)/2$ (where $d$ is the code distance). However, whether fold-transversal gates are fault tolerant in a strict sense \cite{AGP06} may depend on the noise model, since some two-qubit gates are applied between physical qubits inside a code block. In particular, a physical controlled-$\g{Z}$ gate can propagate $\g{YI}$ to $\g{YZ}$ (which is treated as a product of $\g{XI}$ and $\g{ZZ}$ if the code can correct $\g{X}$-type and $\g{Z}$-type errors separately), and also a two-qubit Pauli error may arise from a single gate fault. Therefore, in the standard circuit-level depolarizing noise model where each single-qubit gate is followed by a single-qubit Pauli error $P \in \{\g{X},\g{Y},\g{Z}\}$ with error probability $p/3$ each, and each two-qubit gate is followed by a two-qubit Pauli error $P_1\otimes P_2 \in \{\g{I},\g{X},\g{Y},\g{Z}\}^{\otimes 2}\setminus\{\g{I}\otimes \g{I}\}$ with error probability $p/15$ each, fold-transversal gates do not satisfy the conditions for fault-tolerant gates proposed in \cite{AGP06}. Nevertheless, numerical simulations have shown that fold-transversal gates are fault tolerant in various noise models \cite{GV24,CCLP24,CBZGM+25,SMB25}, so they can still be practically useful for fault-tolerant computation.

Suppose that a general Clifford circuit can be compiled as a sequence of transversal and fold-transversal gates (as in the case of quantum Reed--Muller codes) and assume that fold-transversal gates are fault tolerant under a particular noise model. To make the whole implementation fault tolerant, one needs to interleave FTEC gadgets into the sequence to constantly remove noise from the system. Note that such FTEC gadgets also require some space and time overhead. It is interesting to ask how much overhead is required for this procedure, and how it compares to other procedures to implement logical gates on a high-rate code such as the techniques in \cite{GR24,MGFCS+25} that require additional ancilla blocks, or teleportation-based FTQC schemes \cite{ZLC00,BZHJL15}. Careful simulations are required for this comparison, so we leave it for future work.  

Last, we note that the fault-tolerant simulation of Clifford gates alone are not sufficient to demonstrate the full performance of quantum computers, as any Clifford circuit can be efficiently simulated by a classical computer \cite{Gottesman98}. To achieve universal quantum computation, we still need to implement logical non-Clifford gates fault-tolerantly. One possible future direction could be developing magic state preparation protocols (such as the ones in \cite{KLZ97,BK05,ITHF25,GSJ24}) for the quantum Reed--Muller codes of our interest. It is also known that some quantum Reed--Muller codes which are not self-dual allow the application of logical non-Clifford gates through transversal gates \cite{BCHK26}. Another direction could be developing code-switching techniques (such as the ones in \cite{PR13,ADP14,GCZ25,THLGH25}) to fault-tolerantly switch between a high-rate quantum Reed--Muller code that admits all logical Clifford gates and another high-rate code that admits logical non-Clifford gates.

\section{Author contributions}

Theerapat Tansuwannont led the project, developed the construction of addressable gates for the quantum Reed--Muller codes, and contributed to the mathematical statements in \cref{sec:CRM_QRM,sec:logical_for_QRM,sec:proof_of_thms}. Tim Chan developed the Python package \cite{Chan2026} for explicit construction of addressable gates for the quantum Reed--Muller codes and made all figures. Ryuji Takagi developed the fundamental limitations of the circuit depth in realizing an arbitrary logical Clifford gate and contributed to the mathematical statements in \cref{sec:limitations}. All authors took part in writing \cref{sec:intro,sec:discussion} and revising the manuscript.

\section{Acknowledgements}

We thank Victor Albert, Keisuke Fujii, Anqi Gong, Shubham Jain, and Yugo Takada for helpful discussions on our preliminary results. We also thank Xiaozhen Fu, Jin Ming Koh, Shayan Majidy, Tom Scruby, Shiro Tamiya, and Satoshi Yoshida for interesting discussions.
Theerapat Tansuwannont was supported by JST Moonshot R\&D Grant No. JPMJMS2061. Tim Chan acknowledges support from
an EPSRC DTP studentship,
JST ASPIRE Japan Grant Number JPMJAP2319,
and three EPSRC projects:
QCS Hub (EP/T001062/1),
RoaRQ (EP/W032635/1),
and SEEQA (EP/Y004655/1).
Ryuji Takagi was supported by JST CREST Grant Number JPMJCR23I3, JSPS KAKENHI Grant Number JP24K16975, JP25K00924, and MEXT KAKENHI Grant-in-Aid for Transformative
Research Areas A ``Extreme Universe” Grant Number JP24H00943.

\appendix

\section{Proofs of theorems} \label{sec:proof_of_thms}

In this section, we prove \cref{thm:QK_preserve_stb,thm:logical_from_fold,thm:fold_product} which verify that the phase-type fold-transversal gate $\g{U_P}(Q(K))$ from any permutation $Q(K)$ is a valid logical gate, provide the logical operation of each $\g{U_P}(Q(K))$, and provide constructions of addressable $\logg{S}$ and addressable $\logg{C_{00}Z}$ gates from a product of $\g{U_P}(Q(K))$.

To ease our explanation, we introduce some notations and lemmas that will be useful in the main proofs. We first define a replacement operator from an automorphism $Q(i,j)$ as follows.
\begin{definition} \label{def:RK}
    Consider a classical Reed--Muller code $\textrm{RM}(r,m)$ where $m$ is a positive even number and $r$ is an integer satisfying $0 \leq r \leq m$. Let $Q(i,j)$ be an automorphism of $\textrm{RM}(r,m)$ as defined in \cref{def:PQ_def} with the corresponding permutation matrix $M_{Q(i,j)}$, where $i,j \in [m]$, $i \neq j$. Also, let $K=\{(i_1,i_2),\dots,(i_{2c-1},i_{2c})\}$ be a set of ordered pairs of size $0 \leq |K| \leq m/2$ as defined in \cref{def:K}. The \emph{replacement operator} $R(i,j)$ for a pair of vector indices $(i,j)$ is defined as,
    \begin{equation}
        R(i,j) = M_{Q(i,j)}+I_n,
    \end{equation}
    where $I_n$ is the $n \times n$ identity matrix and the addition is modulo 2.
    For any set $K\neq \emptyset$, the \emph{replacement operator} $R(K)$ is defined as,
    \begin{equation}
        R(K) = R(i_1,i_2)\cdots R(i_{2c-1},i_{2c}). \label{eq:RK}
    \end{equation}
    The replacement operator for the empty set is defined as $R(\emptyset) = I_n$.
\end{definition}

Note that because we require the vector indices $i_1,\dots,i_{2c}$ that appear in $K$ to all be distinct, the order of operations in \cref{eq:RK} does not matter; i.e., $R(i_1,i_2)\cdots R(i_{2c-1},i_{2c})=R(i_{2c-1},i_{c})\cdots R(i_{1},i_{2})$.

$R(i,j)$ and $R(K)$ have the following properties.
\begin{lemma} \label{lem:RK_on_v}
Let $A$ be any set of vector indices of size $0 \leq |A| \leq m/2$. For any vector indices $i,j\in[m]$,
    \begin{equation}
        R(i,j)\mathbf{v}_{A} = \begin{cases}
            \mathbf{v}_{\left(A\cup\{j\}\right)\setminus \{i\}} & \text{if } i \in A, \\
            \mathbf{0} & \text{if } i \notin A.
        \end{cases}
    \end{equation}
    For any set of ordered pairs $K$ (as defined in \cref{def:K}) such that $K \neq \emptyset$,
    \begin{equation}
        R(K)\mathbf{v}_{A} = \begin{cases}
            \mathbf{v}_{\left(A\cup F_2(K)\right)\setminus F_1(K)} & \text{if } F_1(K) \subseteq A, \\
            \mathbf{0} & \text{if } F_1(K) \nsubseteq A.
        \end{cases}
    \end{equation}
\end{lemma}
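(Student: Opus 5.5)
The plan is to prove the single-pair formula first and then get the formula for $R(K)$ by induction on $|K|$, using that the pairs in $K$ involve distinct indices.

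\emph{Single pair.} By \cref{prop:permute_vA}, $M_{Q(i,j)}\mathbf{v}_A=\bigwedge_{a\in A}M_{Q(i,j)}\mathbf{v}_a$. Since $M_{Q(i,j)}$ fixes $\mathbf{v}_a$ for $a\neq i$ and sends $\mathbf{v}_i\mapsto\mathbf{v}_i+\mathbf{v}_j$, we split into two cases.
\begin{enumerate}
\item If $i\notin A$, then $M_{Q(i,j)}\mathbf{v}_A=\mathbf{v}_A$, so $R(i,j)\mathbf{v}_A=\mathbf{v}_A+\mathbf{v}_A=\mathbf{0}$.
\item If $i\in A$, write $A=A_0\cup\{i\}$. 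Distributing the wedge product over addition gives
\[
M_{Q(i,j)}\mathbf{v}_A=\mathbf{v}_{A_0}\wedge(\mathbf{v}_i+\mathbf{v}_j)=\mathbf{v}_A+\mathbf{v}_{A_0\cup\{j\}}.
\]
Adding $I_n\mathbf{v}_A=\mathbf{v}_A$ leaves $\mathbf{v}_{(A\cup\{j\})\setminus\{i\}}$.
\end{enumerate}
The second case needs only that the wedge product is bitwise AND, so it distributes over mod-2 addition, and that $\mathbf{v}_j\wedge\mathbf{v}_j=\mathbf{v}_j$. The latter covers the subcase $j\in A$, where $A_0\cup\{j\}=A_0$ and the result is the correct set $A\setminus\{i\}$. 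The size restriction on $A$ plays no role here.

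\emph{General $K$.} We induct on $c=|K|$, writing $K=K'\cup\{(i_{2c-1},i_{2c})\}$ and $R(K)=R(K')R(i_{2c-1},i_{2c})$. Apply the single-pair result first.
\begin{enumerate}
\item If $i_{2c-1}\notin A$, the result is $\mathbf{0}$, consistent with $F_1(K)\nsubseteq A$.
\item Otherwise we obtain $\mathbf{v}_{A'}$ with $A'=(A\cup\{i_{2c}\})\setminus\{i_{2c-1}\}$, and apply the inductive hypothesis to $R(K')\mathbf{v}_{A'}$.
\end{enumerate}

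The main bookkeeping obstacle is showing that $F_1(K')\subseteq A'$ if and only if $F_1(K')\subseteq A$, and that the final set equals $(A\cup F_2(K))\setminus F_1(K)$. Both follow from distinctness. Since $i_{2c}$ and $i_{2c-1}$ are not in $F_1(K')$, the elements of $F_1(K')$ lie in $A'$ exactly when they lie in $A$. Composing the set operations, $((A\cup\{i_{2c}\})\setminus\{i_{2c-1}\}\cup F_2(K'))\setminus F_1(K')$, we may reorder the unions and removals freely. The only elements that could clash are those removed or added at different stages, and these are all distinct: $i_{2c-1}\notin F_2(K')$ and $i_{2c}\notin F_1(K')$. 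So the expression equals $(A\cup F_2(K))\setminus F_1(K)$.

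One subtlety is that the inductive hypothesis nominally requires $|A'|\le m/2$. However, $|A'|\le|A|$, since the step adds at most one element and removes one. Moreover, the single-pair computation never used the size bound, so this is harmless, and we would remark that the lemma in fact holds for any $A$.
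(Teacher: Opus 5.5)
Your proposal is correct and follows essentially the same route as the paper: compute $R(i,j)\mathbf{v}_A$ directly via \cref{prop:permute_vA}, then apply the pairs of $K$ one at a time. Your induction simply makes explicit what the paper's chain of maps leaves implicit, namely the distinctness bookkeeping that keeps membership of the remaining $F_1$ indices unchanged and the subcase $j\in A$.
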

\begin{proof}
    First, observe that $R(i,j)\mathbf{v}_{A} = M_{Q(i,j)}\mathbf{v}_{A}+\mathbf{v}_{A}$. If $i \in A$, we have that $M_{Q(i,j)}\mathbf{v}_{A}=\mathbf{v}_{A}+\mathbf{v}_{\left(A\cup\{j\}\right)\setminus \{i\}}$, thus $R(i,j)\mathbf{v}_{A}=\mathbf{v}_{\left(A\cup\{j\}\right)\setminus \{i\}}$. If $i \notin A$, we have that $M_{Q(i,j)}\mathbf{v}_{A}=\mathbf{v}_{A}+\mathbf{v}_{A}=\mathbf{0}$.

    Next, let $K=\{(i_1,i_2),\dots,(i_{2c-1},i_{2c})\}$. We have that $F_1(K)=\{i_1,\dots,i_{2c-1}\}$. If $F_1(K) \subseteq A$, then $R(K)=R(i_1,i_2)\cdots R(i_{2c-1},i_{2c})$ transforms $\mathbf{v}_{A}$ as follows:
    \begin{equation}
        \mathbf{v}_{A} \xmapsto{R(i_1,i_2)} \mathbf{v}_{\left(A\cup\{i_2\}\right)\setminus \{i_1\}} \xmapsto{R(i_3,i_4)} \mathbf{v}_{\left(A\cup\{i_2,i_4\}\right)\setminus \{i_1,i_3\}} \xmapsto{R(i_5,i_6)} \cdots \xmapsto{R(i_{2c-1},i_{2c})} \mathbf{v}_{\left(A\cup F_2(K)\right)\setminus F_1(K)}. \nonumber
    \end{equation}
    If $F_1(K) \nsubseteq A$, there exists $i_\alpha \in F_1(K)$ which is not an element of $A$. Thus, there exists a term $R(i_\alpha,i_\beta)$ in the product $R(i_1,i_2)\cdots R(i_{2c-1},i_{2c})$ such that $R(i_\alpha,i_\beta)\mathbf{v}_{A}=\mathbf{v}_{A}+\mathbf{v}_{A}=\mathbf{0}$. This implies that $R(K)\mathbf{v}_{A}=\mathbf{0}$.
\end{proof}

Additionally, for each $K$, we define $\mathfrak{Q}(K)$ to be the sum of permutation matrices of all subsets of $K$ and define $\mathfrak{R}(K)$ to be the sum of replacement operators of all subsets of $K$. The formal definitions are as follows.

\begin{definition} \label{def:nestedQK}
For any set of ordered pair of vector indices $K$ as defined in \cref{def:K}, we define $\mathfrak{Q}(K)$ and $\mathfrak{R}(K)$ as
    \begin{equation}
        \mathfrak{Q}(K) = \sum_{L \subseteq K} M_{Q(L)},
    \end{equation}
    and,
    \begin{equation}
        \mathfrak{R}(K) = \sum_{L \subseteq K} R(L),
    \end{equation}
    where the sum is modulo 2. 
\end{definition}
By \cref{def:nestedQK}, we have $\mathfrak{Q}(\emptyset)=I_n$ and $\mathfrak{R}(\emptyset)=I_n$. Using \cref{def:RK,def:nestedQK}, we can prove the following lemma.
\begin{lemma} \label{lem:RK_QK_equiv}
    Let $m$ be a positive even number, $K=\{(i_1,i_2),\dots,(i_{2c-1},i_{2c})\}$ be a set of ordered pair of vector indices as defined in \cref{def:K}, and let $i_\alpha,i_\beta \in [m]$. 
    \begin{enumerate}
        \item For any $i_\alpha,i_\beta$,
        \begin{enumerate}
            \item $R(i_\alpha,i_\beta) = \mathfrak{Q}(i_\alpha,i_\beta)$;
        \item $M_{Q(i_\alpha,i_\beta)} = \mathfrak{R}(i_\alpha,i_\beta)$.
        \end{enumerate}
        \item For any $K$,
        \begin{enumerate}
            \item $R(K) = R(i_1,i_2)\cdots R(i_{2c-1},i_{2c}) = \mathfrak{Q}(i_1,i_2)\cdots\mathfrak{Q}(i_{2c-1},i_{2c}) = \mathfrak{Q}(K)$;
            \item $M_{Q(K)} = M_{Q(i_1,i_2)}\cdots M_{Q(i_{2c-1},i_{2c})} = \mathfrak{R}(i_1,i_2)\cdots\mathfrak{R}(i_{2c-1},i_{2c}) = \mathfrak{R}(K)$.
        \end{enumerate}
    \end{enumerate}
\end{lemma}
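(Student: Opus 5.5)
The plan is to prove the two parts of \cref{lem:RK_QK_equiv} in order. Part 1 is direct from \cref{def:RK,def:nestedQK}, and part 2 then follows by multiplying out products.

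For part 1, the only subsets of the single-pair set $\{(i_\alpha,i_\beta)\}$ are $\emptyset$ and the set itself. By \cref{def:nestedQK},
\begin{equation}
\mathfrak{Q}(i_\alpha,i_\beta)=M_{Q(\emptyset)}+M_{Q(i_\alpha,i_\beta)}=I_n+M_{Q(i_\alpha,i_\beta)},
\end{equation}
which is $R(i_\alpha,i_\beta)$ by \cref{def:RK}. This gives (a). Similarly,
\begin{equation}
\mathfrak{R}(i_\alpha,i_\beta)=R(\emptyset)+R(i_\alpha,i_\beta)=I_n+M_{Q(i_\alpha,i_\beta)}+I_n=M_{Q(i_\alpha,i_\beta)}
\end{equation}
modulo 2, which gives (b).

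For part 2(a), the first equality is \cref{eq:RK}, and the second comes from applying 1(a) to each factor. For the third, I expand the product $\prod_{t=1}^{c}\bigl(I_n+M_{Q(i_{2t-1},i_{2t})}\bigr)$ by distributivity over $\mathbb{Z}_2$. Each term in the expansion corresponds to a choice of a subset $L\subseteq K$, namely the pairs from which we take the $M_Q$ factor rather than $I_n$. That term equals $\prod_{(a,b)\in L}M_{Q(a,b)}=M_{Q(L)}$, where the product is taken in any order, using the definition $Q(L)=\prod_{(a,b)\in L}Q(a,b)$ and $M_{Q(\emptyset)}=I_n$. Summing over all $L$ gives $\mathfrak{Q}(K)$.

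Part 2(b) is the mirror argument. The first equality uses that permutation matrices are multiplicative, $M_{\pi\sigma}=M_\pi M_\sigma$, applied to the product $Q(K)$. The second uses 1(b) on each factor. Expanding $\prod_t\bigl(I_n+R(i_{2t-1},i_{2t})\bigr)$ then yields $\sum_{L\subseteq K}\prod_{(a,b)\in L}R(a,b)=\sum_{L\subseteq K}R(L)=\mathfrak{R}(K)$, with $R(\emptyset)=I_n$.

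The only point needing care is the order of the factors. Matrix multiplication is not commutative in general, so before reordering the terms of the expansion I must confirm that the factors commute. Because the indices in $K$ are all distinct, the $Q(i_{2t-1},i_{2t})$ commute as permutations, as already noted after \cref{def:K}. Hence their permutation matrices commute, and therefore so do the matrices $R(\cdot,\cdot)=M_Q+I_n$. With that established, the expansion and regrouping over $\mathbb{Z}_2$ are routine, and I expect no real obstacle.
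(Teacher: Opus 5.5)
Your proposal is correct and follows the paper's proof: part 1 comes directly from $R(\emptyset)=M_{Q(\emptyset)}=I_n$ and $R(i,j)=M_{Q(i,j)}+I_n$. Part 2 comes from expanding $\prod_t\bigl(I_n+M_{Q(i_{2t-1},i_{2t})}\bigr)$, or $\prod_t\bigl(I_n+R(i_{2t-1},i_{2t})\bigr)$, as a sum over subsets $L\subseteq K$. Your explicit check that the factors commute because the indices in $K$ are distinct is something the paper states only in its remarks after the definitions, so putting it inside the proof adds a little rigor.
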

\begin{proof}
    (1.a) and (1.b) can be easily shown using the facts that $R(\emptyset)=I_n$, $M_{Q(\emptyset)}=I_n$, and $R(i,j)=M_{Q(i,j)}+I_n$.

    (2.a) Using the result from (1.a), we can write,
    \begin{equation}
        R(K)=R(i_1,i_2)\cdots R(i_{2c-1},i_{2c}) = \mathfrak{Q}(i_1,i_2)\cdots\mathfrak{Q}(i_{2c-1},i_{2c}). \nonumber
    \end{equation}
    Additionally, using the fact that $R(i,j)=M_{Q(i,j)}+I_n$, we can write, 
    \begin{equation}
        R(K)=R(i_1,i_2)\cdots R(i_{2c-1},i_{2c}) = \left(I_n+M_{Q(i_1,i_2)}\right)\cdots\left(I_n+M_{Q(i_{2c-1},i_{2c})}\right) = \sum_{L \subseteq K} M_{Q(L)} = \mathfrak{Q}(K). \nonumber
    \end{equation}

    (2.b) Using the result from (1.b), we can write,
    \begin{equation}
        M_{Q(K)}=M_{Q(i_1,i_2)}\cdots M_{Q(i_{2c-1},i_{2c})} = \mathfrak{R}(i_1,i_2)\cdots\mathfrak{R}(i_{2c-1},i_{2c}).\nonumber
    \end{equation}
    Additionally, using the fact that $M_{Q(i,j)}=R(i,j)+I_n$, we can write, 
    \begin{equation}
        M_{Q(K)}=M_{Q(i_1,i_2)}\cdots M_{Q(i_{2c-1},i_{2c})} = \left(I_n+R(i_1,i_2)\right)\cdots\left(I_n+R(i_{2c-1},i_{2c})\right) = \sum_{L \subseteq K} R(L) = \mathfrak{R}(K). \nonumber
    \end{equation}
\end{proof}

In addition, we prove the Hamming weight of a vector of the form $\mathbf{v}_A \wedge Q(K)\mathbf{v}_A$ where $0 \leq |A| \leq \frac{m}{2}-1$ or $\mathbf{v}_B \wedge Q(K)\mathbf{v}_B$ where $|B|=\frac{m}{2}$ in the following lemma.
\begin{lemma} \label{lem:phase_from_QK}
    Let $m$ be a positive even number, $\mathrm{QRM}(m)$ be the quantum Reed--Muller code defined in \cref{def:QRM}, and $K$ be a set of ordered pairs defined in \cref{def:K} of size $0 \leq |K| \leq m/2$ with the corresponding automorphism $Q(K)$. Also, let $A \subsetneq [m]$ be a set of basis vector indices of size $0 \leq |A| \leq \frac{m}{2}-1$, and $B \subsetneq [m]$ be a set of basis vector indices of size $|B|=\frac{m}{2}$.
    \begin{enumerate}
        \item For any $K$ and any $A$, $\mathrm{wt}(\mathbf{v}_A \wedge Q(K)\mathbf{v}_A)$ is divisible by 4.
        \item For any $K$ and any $B$,
        \begin{enumerate}
            \item if there exists $(i,j) \in K$ such that $i,j \in B$, then $\mathrm{wt}(\mathbf{v}_B \wedge Q(K)\mathbf{v}_B)=0$.
            \item if there is no $(i,j) \in K$ such that $i,j \in B$,
            \begin{enumerate}
                \item if $|B \cap F_1(K)| = m/2$ (possible when $|K|=m/2$), then $\mathrm{wt}(\mathbf{v}_B \wedge Q(K)\mathbf{v}_B)=1$;
                \item if $|B \cap F_1(K)| = m/2-1$ (possible when $|K|=m/2$ or $m/2-1$), then $\mathrm{wt}(\mathbf{v}_B \wedge Q(K)\mathbf{v}_B)=2$;
                \item if $|B \cap F_1(K)| \leq m/2-2$, then $\mathrm{wt}(\mathbf{v}_B \wedge Q(K)\mathbf{v}_B)$ is divisible by 4.
            \end{enumerate}
        \end{enumerate}
    \end{enumerate}
\end{lemma}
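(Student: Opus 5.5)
The plan is to expand $M_{Q(K)}\mathbf{v}_A$ via \cref{lem:RK_QK_equiv} and \cref{lem:RK_on_v}, and then count Hamming weights with \cref{prop:Hamming_weight}. By (2.b) of \cref{lem:RK_QK_equiv}, $M_{Q(K)}=\mathfrak{R}(K)=\sum_{L\subseteq K}R(L)$. By \cref{lem:RK_on_v}, this gives
\begin{equation}
M_{Q(K)}\mathbf{v}_A=\sum_{L\subseteq K,\,F_1(L)\subseteq A}\mathbf{v}_{(A\cup F_2(L))\setminus F_1(L)}. \nonumber
\end{equation}
It is cleaner to work pair by pair. Since the indices in $K$ are distinct, $M_{Q(K)}$ acts on $\mathbf{v}_A=\bigwedge_{a\in A}\mathbf{v}_a$ factorwise (\cref{prop:permute_vA}): each $\mathbf{v}_a$ with $a=i$ for some $(i,j)\in K$ is replaced by $\mathbf{v}_i+\mathbf{v}_j$, and the other factors are fixed.

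So $\mathbf{v}_A\wedge M_{Q(K)}\mathbf{v}_A=\bigwedge_{a\in A}\big(\mathbf{v}_a\wedge\phi(\mathbf{v}_a)\big)$, where $\phi$ denotes the factorwise image. For $a\notin F_1(K)$ the factor is $\mathbf{v}_a$. For $a=i\in F_1(K)$ the factor is $\mathbf{v}_i(\mathbf{v}_i+\mathbf{v}_j)=\mathbf{v}_i(\mathbf{1}+\mathbf{v}_j)$. If moreover $j\in A$, the total product contains $\mathbf{v}_j(\mathbf{1}+\mathbf{v}_j)=\mathbf{0}$, so the weight is $0$.

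Otherwise, let $c=|A\cap F_1(K)|$, and note that the partners $j$ of these $c$ indices are distinct and lie outside $A$. The product then has the form $\bigwedge_{a\in A}\mathbf{v}_a\wedge\bigwedge_{b\in J}(\mathbf{v}_b+\mathbf{1})$ with $A\cap J=\emptyset$ and $|J|=c$. By \cref{prop:Hamming_weight}(1), its weight is $2^{m-|A|-c}$.

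With this closed form, the two parts of the lemma are direct checks.
\begin{enumerate}
\item For $|A|\le m/2-1$ we have $c\le |A|$, so $m-|A|-c\ge m-2|A|\ge 2$. Hence the weight is $0$ or a multiple of $4$.
\item For $|B|=m/2$, case (a) is the vanishing case above. In case (b) the weight is $2^{m/2-c}$ with $c=|B\cap F_1(K)|$. This equals $1$ when $c=m/2$, equals $2$ when $c=m/2-1$, and is divisible by $4$ when $c\le m/2-2$. The stated conditions on $|K|$ hold trivially, since $c\le|K|$.
\end{enumerate}

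The main obstacle I anticipate is justifying the factorwise action rigorously. I need $M_{Q(K)}\mathbf{v}_a=\mathbf{v}_a+\mathbf{v}_{j}$ exactly when $a=i$ for a pair $(i,j)\in K$, and $M_{Q(K)}\mathbf{v}_a=\mathbf{v}_a$ otherwise. Here I would use \cref{def:PQ_def} together with the commutation of the factors $Q(i,j)$ in $Q(K)$: each $Q(i_s,i_{s+1})$ fixes every $\mathbf{v}_l$ with $l\neq i_s$, and also fixes $\mathbf{v}_{i_t}+\mathbf{v}_{i_{t+1}}$ for $t\ne s$, because all indices are distinct.

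A secondary subtlety is reading "$\mathbf{v}_A\wedge Q(K)\mathbf{v}_A$" as $\mathbf{v}_A\wedge M_{Q(K)}\mathbf{v}_A$. I would also need $\mathbf{v}_i\wedge(\mathbf{v}_i+\mathbf{v}_j)=\mathbf{v}_i\wedge(\mathbf{1}+\mathbf{v}_j)$, which is immediate bitwise.
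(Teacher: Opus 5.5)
Your proposal is correct and follows essentially the same route as the paper. Both arguments factor $\mathbf{v}_A\wedge M_{Q(K)}\mathbf{v}_A$ into $\mathbf{v}_a$ for $a\notin F_1(K)$ and $\mathbf{v}_i\wedge(\mathbf{1}+\mathbf{v}_j)$ for $(i,j)\in K$ with $i\in A$, note that the product vanishes when some partner $j$ lies in $A$, and otherwise read off the weight $2^{m-|A|-|A\cap F_1(K)|}$ from \cref{prop:Hamming_weight}. Your opening expansion via $\mathfrak{R}(K)$ and \cref{lem:RK_on_v} is never used afterwards and can be dropped.
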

\begin{proof}
    (1) First, observe that for any vector indices $i,j$, 
    \begin{equation}
        \mathbf{v}_i\wedge Q(i,j)\mathbf{v}_i = \mathbf{v}_i\wedge (\mathbf{v}_i+\mathbf{v}_j) = \mathbf{v}_i\wedge(\mathbf{1}+\mathbf{v}_j).
    \end{equation}
    Next, observe that $\mathbf{v}_A=\mathbf{v}_{A\setminus F_1(K)}\wedge \mathbf{v}_{A\cap F_1(K)}$, and $Q(K)\mathbf{v}_A = \mathbf{v}_{A\setminus F_1(K)}\wedge \left(Q(K)\mathbf{v}_{A\cap F_1(K)}\right)$. Thus,
    \begin{align}
        \mathbf{v}_A \wedge Q(K)\mathbf{v}_A &= \mathbf{v}_{A\setminus F_1(K)} \wedge \mathbf{v}_{A\cap F_1(K)} \wedge \left(Q(K)\mathbf{v}_{A\cap F_1(K)}\right), \nonumber \\
        &= \mathbf{v}_{A\setminus F_1(K)} \wedge \bigwedge_{\substack{i \in A \cap F_1(K)\\ (i,j) \in K}} \left(\mathbf{v}_i\wedge(\mathbf{1}+\mathbf{v}_j)\right). \label{eq:lem_phase1}
    \end{align}
    If there exists $(i,j) \in K$ such that $i,j \in A$, then \cref{eq:lem_phase1} is zero since $\mathbf{v}_j\wedge(\mathbf{1}+\mathbf{v}_j)=\mathbf{0}$. In this case, $\mathrm{wt}(\mathbf{v}_A \wedge Q(K)\mathbf{v}_A) = 0$. If there is no such $(i,j)$, then \cref{eq:lem_phase1} is a wedge product consisting of $|A\setminus F_1(K)|+2|A \cap F_1(K)|$ terms in total. Let $l = |A \cap F_1(K)|$, so $|A\setminus F_1(K)|+2|A \cap F_1(K)| = |A|-l+2l = |A|+l$, and thus $\mathrm{wt}(\mathbf{v}_A \wedge Q(K)\mathbf{v}_A) = 2^{m-|A|-l}$ by \cref{prop:Hamming_weight}. Since $0 \leq |A| \leq \frac{m}{2}-1$ and $l$ is an integer such that $0 \leq l \leq |A|$, we have that $2 \leq m-|A|-l \leq m$. Therefore, $\mathrm{wt}(\mathbf{v}_A \wedge Q(K)\mathbf{v}_A)$ is divisible by 4.

    (2) Similar to the previous case, we have that 
    \begin{equation}
        \mathbf{v}_B \wedge Q(K)\mathbf{v}_B = \mathbf{v}_{B\setminus F_1(K)} \wedge \bigwedge_{\substack{i \in B \cap F_1(K)\\ (i,j) \in K}} \left(\mathbf{v}_i\wedge(\mathbf{1}+\mathbf{v}_j)\right). \label{eq:lem_phase2}
    \end{equation}
    If there exists $(i,j) \in K$ such that $i,j \in B$, then \cref{eq:lem_phase2} is zero, and $\mathrm{wt}(\mathbf{v}_B \wedge Q(K)\mathbf{v}_B) = 0$. If there is no such $(i,j)$, then $\mathrm{wt}(\mathbf{v}_B \wedge Q(K)\mathbf{v}_B) = 2^{m-|B|-l}$, where $l = |B \cap F_1(K)|$. Note that $|B|=m/2$, and $0 \leq l \leq |B|$. (i) If $l = m/2$ (which can occur only when $|K|=m/2$), then $\mathrm{wt}(\mathbf{v}_B \wedge Q(K)\mathbf{v}_B)=1$. (ii) If $l = m/2-1$ (which can occur only when $|K|=m/2$ or $m/2-1$), then $\mathrm{wt}(\mathbf{v}_B \wedge Q(K)\mathbf{v}_B)=2$. (iii) If $l \leq m/2-2$, then $\mathrm{wt}(\mathbf{v}_B \wedge Q(K)\mathbf{v}_B)$ is divisible by 4.
\end{proof}

Now we are ready to prove \cref{thm:QK_preserve_stb,thm:logical_from_fold,thm:fold_product}.

\subsection{Proof of \cref{thm:QK_preserve_stb}}
By \cref{prop:SW_PH_transform} and \cref{lem:RK_QK_equiv}, $\g{U_P}\left(Q(K)\right)$ maps $g_x(A)$ as follows.
\begin{align}
    g_x(A)=\g{X}(\mathbf{v}_A) &\mapsto (\up{i})^c \g{X}(\mathbf{v}_A)\g{Z}(Q(K)\mathbf{v}_A) \nonumber \\
    &= (\up{i})^c \g{X}(\mathbf{v}_A)\g{Z}(\mathfrak{R}(K)\mathbf{v}_A) \nonumber \\
    &= (\up{i})^c \g{X}(\mathbf{v}_A)\g{Z}\left(\sum_{L\subseteq K}R(L)\mathbf{v}_A\right) \nonumber \\
    &= (\up{i})^c \g{X}(\mathbf{v}_A)\prod_{L\subseteq K}\g{Z}(R(L)\mathbf{v}_A),
\end{align} 
where $c = \mathrm{wt}(\mathbf{v}_A \wedge Q(K)\mathbf{v}_A)$. By \cref{lem:phase_from_QK}, $\mathrm{wt}(\mathbf{v}_A \wedge Q(K)\mathbf{v}_A)$ is divisible by 4 for any $K$ and any $A$, so the phase factor $(\up{i})^c$ is always 1. We also find that for any $L \subseteq K$, 
    \begin{enumerate}
        \item if $F_1(L) \subseteq A$, then $R(L)\mathbf{v}_A = \mathbf{v}_{A'}$ where $A'=\left(A\cup F_2(K)\right) \setminus F_1(K)$ (by \cref{lem:RK_on_v}). Thus, $\g{Z}(R(L)\mathbf{v}_A)= \g{Z}(\mathbf{v}_{A'})$, which is another stabilizer generator since $|A'| \leq |A|$;
        \item if $F_1(L) \nsubseteq A$, then $R(L)\mathbf{v}_A = \mathbf{0}$ (by \cref{lem:RK_on_v}). Thus, $\g{Z}(R(L)\mathbf{v}_A)=\g{I}^{\otimes n}$.
    \end{enumerate}
    As a result, $\g{U_P}\left(Q(K)\right)$ maps any $\g{X}$-type stabilizer generator to a product of the same $\g{X}$-type stabilizer generator and some $\g{Z}$-type stabilizer, and maps any $\g{Z}$-type stabilizer generator to itself. Thus, the stabilizer group of is preserved under the operation of $\g{U_P}\left(Q(K)\right)$. \qed

\subsection{Proof of \cref{thm:logical_from_fold}}
Consider each logical qubit $B$. $\g{U_P}\left(Q(K)\right)$ maps $\logg{Z}(B)$ to itself and maps $\logg{X}(B)$ as follows (by \cref{prop:SW_PH_transform} and \cref{lem:RK_QK_equiv}).
\begin{align}
    \logg{X}(B)=\g{X}(\mathbf{v}_B) &\mapsto (\up{i})^c \g{X}(\mathbf{v}_B)\g{Z}(Q(K)\mathbf{v}_B) \nonumber \\
    &=(\up{i})^c \g{X}(\mathbf{v}_B)\g{Z}(\mathfrak{R}(K)\mathbf{v}_B) \nonumber \\
    &=(\up{i})^c \g{X}(\mathbf{v}_B)\g{Z}\left(\sum_{L \subseteq K} R(L)\mathbf{v}_B\right) \nonumber \\
    &=(\up{i})^c \g{X}(\mathbf{v}_B)\prod_{L \subseteq K}\g{Z}(R(L)\mathbf{v}_B),
\end{align}
where $c = \mathrm{wt}(\mathbf{v}_B \wedge Q(K)\mathbf{v}_B)$.
We find that for any $L \subseteq K$,
    \begin{enumerate}
        \item If $F_1(L) \subseteq B$ and $F_2(L)\cap B = \emptyset$, then $R(L)\mathbf{v}_B = \mathbf{v}_{B''}$, where $B'' = \left(B\cup F_2(L)\right)\setminus F_1(L)$ is a set of size $m/2$. Note that $\g{Z}(\mathbf{v}_{B''}) = \logg{Z}((B'')^\up{c}) = \logg{Z}({B'})$ where $B' = \left(B^\up{c}\cup F_1(L)\right)\setminus F_2(L)$; 
        \item If $F_1(L) \nsubseteq B$, then $R(L)\mathbf{v}_B = \mathbf{0}$. Note that $\g{Z}(\mathbf{0})=\g{I}^{\otimes n}$; 
        \item If $F_1(L) \subseteq B$ but $F_2(L) \cap B \neq \emptyset$ (which is possible only when $|L| \leq m/2-1$), then $R(L)\mathbf{v}_B= \mathbf{v}_A$ where $A = \left(B\cup F_2(L)\right)\setminus F_1(L)$ is a set of size $m/2-|F_2(L) \cap B| \leq m/2-1$. Note that $\g{Z}(\mathbf{v}_{A})$ is a $\g{Z}$-type stabilizer. 
    \end{enumerate}
From this observation, the terms in $\prod_{L \subseteq K}\g{Z}(R(L)\mathbf{v}_B)$ that contribute to logical $\g{Z}$ operators are the terms in which $L$ satisfies $F_1(L) \subseteq B$ and $F_2(L)\cap B = \emptyset$. 

First, let us consider the case that $|K| \leq m/2-2$. In this case, $c$ is always divisible by 4 (by \cref{lem:phase_from_QK}), so for a logical qubit $B$, 
\begin{equation}
    \logg{X}(B)=\g{X}(\mathbf{v}_B) \mapsto \g{X}(\mathbf{v}_B)\prod_{L \subseteq K}\g{Z}(R(L)\mathbf{v}_B).
\end{equation}
The transformation corresponding to a specific $L$ is $\logg{X}(B) \mapsto \logg{X}(B)\logg{Z}({B'})$ where $B' = \left(B^\up{c}\cup F_1(L)\right)\setminus F_2(L)$. Meanwhile, a similar term can also be found in the transformation for the logical qubit $B'$; $\logg{X}(B') \mapsto \logg{X}(B')\logg{Z}({B})$ where $B = \left((B')^\up{c}\cup F_1(L)\right)\setminus F_2(L)$. Such terms correspond to a logical $\g{C_{11}Z}$ gate $\logg{C_{11}Z}(B,B')$. In other words, $\g{U_P}\left(Q(K)\right)$ induces $\logg{C_{11}Z}(B,B')$ between any unordered pair of logical qubits $\{B,B'\}$ satisfying $F_1(L) \subseteq B$, $F_2(L)\cap B = \emptyset$, and $B' = \left(B^\up{c}\cup F_1(L)\right)\setminus F_2(L)$ for all $L \subseteq K$. 

Next, let us consider the case that $|K| = m/2-1$. In this case, $c=2$ only if $|B\cap F_1(K)|=m/2-1$ and there is no $(i,j) \in K$ such that $i,j \in B$; otherwise, $c$ is divisible by 4 (by \cref{lem:phase_from_QK}). We find the following.
\begin{enumerate}
    \item A logical qubit $B$ that satisfies the conditions that give $c=2$ is of the form $B = F_1(K)\cup\{j'\}$ where $j' \in [m]\setminus\left(F_1(K)\cup F_2(K)\right)$. For such $B$, 
    \begin{equation}
        \logg{X}(B)=\g{X}(\mathbf{v}_B) \mapsto (-1)\g{X}(\mathbf{v}_B)\prod_{L \subseteq K}\g{Z}(R(L)\mathbf{v}_B).
    \end{equation}
    We can write the transformation corresponding to a specific $L \subseteq K$ satisfying $F_1(L) \subseteq B$ and $F_2(L)\cap B = \emptyset$ as follows.
    \begin{enumerate}
        \item For $L$ such that $|L|= m/2-1$, $\logg{X}(B) \mapsto (-1)\logg{X}(B)\logg{Z}({B'})$ and $\logg{X}(B') \mapsto (-1)\logg{X}(B')\logg{Z}({B})$, where $B' = \left(B^\up{c}\cup F_1(L)\right)\setminus F_2(L)$; and
        \item For $L$ such that $|L|\leq m/2-2$, $\logg{X}(B) \mapsto \logg{X}(B)\logg{Z}({B'})$ and $\logg{X}(B') \mapsto \logg{X}(B')\logg{Z}({B})$, where $B' = \left(B^\up{c}\cup F_1(L)\right)\setminus F_2(L)$.
    \end{enumerate}
    Note that there is only one $L \subseteq K$ of size $|L|=m/2-1$, which is $L=K$. Thus, the overall phase from the combined transformation of $\logg{X}(B)$ is $-1$ as expected.
    \item For any logical qubit $B$ that does not satisfy the conditions above (i.e., $|B\cap F_1(K)|\leq m/2-2$ or there is $(i,j) \in K$ such that $i,j \in B$), we find that
    \begin{equation}
        \logg{X}(B)=\g{X}(\mathbf{v}_B) \mapsto \g{X}(\mathbf{v}_B)\prod_{L \subseteq K}\g{Z}(R(L)\mathbf{v}_B).
    \end{equation}
    We can write the transformation corresponding to a specific $L \subseteq K$ satisfying $F_1(L) \subseteq B$ and $F_2(L)\cap B = \emptyset$ as $\logg{X}(B) \mapsto \logg{X}(B)\logg{Z}({B'})$ and $\logg{X}(B') \mapsto \logg{X}(B')\logg{Z}({B})$, where $B' = \left(B^\up{c}\cup F_1(L)\right)\setminus F_2(L)$ for any $L$. 
\end{enumerate}
Combining two subcases, we find that if $|K|=m/2-1$, $\g{U_P}\left(Q(K)\right)$ induces $\logg{C_{11}Z}(B,B')$ between any unordered pair of logical qubits $\{B,B'\}$ satisfying $F_1(L) \subseteq B$, $F_2(L)\cap B = \emptyset$, and $B' = \left(B^\up{c}\cup F_1(L)\right)\setminus F_2(L)$ for all $L \subseteq K$ such that $|L|\leq m/2-2$, and induces $\logg{C_{00}Z}(D,D')$ between the unordered pair of logical qubits $\{D,D'\}$ satisfying $F_1(K) \subseteq D$, $F_2(K)\cap D = \emptyset$, and $D' = \left(D^\up{c}\cup F_1(K)\right)\setminus F_2(K)$.

Last, let us consider the case that $|K| = m/2$. In this case, $c=1$ when $|B\cap F_1(K)|=m/2$ and there is no $(i,j) \in K$ such that $i,j \in B$, or $c=2$ when $|B\cap F_1(K)|=m/2-1$ and there is no $(i,j) \in K$ such that $i,j \in B$, or $c$ is divisible by 4 otherwise (by \cref{lem:phase_from_QK}). We find the following.
\begin{enumerate}
    \item The only logical qubit $B$ that satisfies the conditions for $c=1$ is $B=F_1(K)$. For this logical qubit, 
    \begin{equation}
        \logg{X}(B)=\g{X}(\mathbf{v}_B) \mapsto (\up{i})\g{X}(\mathbf{v}_B)\prod_{L \subseteq K}\g{Z}(R(L)\mathbf{v}_B).
    \end{equation}
    We can write the transformation corresponding to a specific $L \subseteq K$ satisfying $F_1(L) \subseteq B$ and $F_2(L)\cap B = \emptyset$ as follows: 
    \begin{enumerate}
        \item For $L = K$ ($|L|=m/2$), $\logg{X}(B) \mapsto (\up{i})^{m+1}\logg{X}(B)\logg{Z}({B})$;
        \item For $L$ such that $|L|= m/2-1$, $\logg{X}(B) \mapsto (-1)\logg{X}(B)\logg{Z}({B'})$ and $\logg{X}(B') \mapsto (-1)\logg{X}(B')\logg{Z}({B})$, where $B' = \left(B^\up{c}\cup F_1(L)\right)\setminus F_2(L)$;
        \item For $L$ such that $|L|\leq m/2-2$, $\logg{X}(B) \mapsto \logg{X}(B)\logg{Z}({B'})$ and $\logg{X}(B') \mapsto \logg{X}(B')\logg{Z}({B})$ where $B' = \left(B^\up{c}\cup F_1(L)\right)\setminus F_2(L)$.
    \end{enumerate}
    Consider $L \subseteq K$ that satisfies $F_1(L) \subseteq B$ and $F_2(L)\cap B = \emptyset$. There is only one $L$ of size $|L|=m/2$, and there are $m/2$ $L$'s of size $|L| = m/2-1$. So the overall phase from the combined transformation of $\logg{X}(B)$ is $(-1)^{m/2}(\up{i})^{m+1} = (\up{i})^{2m+1} = \up{i}$ as expected.
    \item A logical qubit $B$ that satisfies the conditions for $c=2$ is of the form $B = (F_1(K)\cup \{j\})\setminus\{i\}$ for some $(i,j) \in K$. For such a logical qubit, 
    \begin{equation}
        \logg{X}(B)=\g{X}(\mathbf{v}_B) \mapsto (-1)\g{X}(\mathbf{v}_B)\prod_{L \subseteq K}\g{Z}(R(L)\mathbf{v}_B).
    \end{equation}
    We can write the transformation corresponding to a specific $L \subseteq K$ satisfying $F_1(L) \subseteq B$ and $F_2(L)\cap B = \emptyset$ as follows: 
    \begin{enumerate}
        \item For $L$ such that $|L|= m/2-1$, $\logg{X}(B) \mapsto (-1)\logg{X}(B)\logg{Z}({B'})$ and $\logg{X}(B') \mapsto (-1)\logg{X}(B')\logg{Z}({B})$, where $B' = \left(B^\up{c}\cup F_1(L)\right)\setminus F_2(L)$;
        \item For $L$ such that $|L|\leq m/2-2$, $\logg{X}(B) \mapsto \logg{X}(B)\logg{Z}({B'})$ and $\logg{X}(B') \mapsto \logg{X}(B')\logg{Z}({B})$, where $B' = \left(B^\up{c}\cup F_1(L)\right)\setminus F_2(L)$.
    \end{enumerate}
    Consider $L \subseteq K$ that satisfies $F_1(L) \subseteq B$ and $F_2(L)\cap B = \emptyset$. There is only one $L$ of size $|L| = m/2-1$, which is $L$ such that $F_1(L)=B \cap F_1(K)$, so the overall phase from the combined transformation of $\logg{X}(B)$ is $-1$ as expected. 
    \item For any logical qubit $B$ that does not satisfy the conditions for $c=1$ or $c=2$ (i.e., $|B\cap F_1(K)|\leq m/2-2$ or there is $(i,j) \in K$ such that $i,j \in B$), we find that
    \begin{equation}
        \logg{X}(B)=\g{X}(\mathbf{v}_B) \mapsto \g{X}(\mathbf{v}_B)\prod_{L \subseteq K}\g{Z}(R(L)\mathbf{v}_B).
    \end{equation}
    We can write the transformation corresponding to a specific $L \subseteq K$ satisfying $F_1(L) \subseteq B$ and $F_2(L)\cap B = \emptyset$ as $\logg{X}(B) \mapsto \logg{X}(B)\logg{Z}({B'})$ and $\logg{X}(B') \mapsto \logg{X}(B')\logg{Z}({B})$, where $B' = \left(B^\up{c}\cup F_1(L)\right)\setminus F_2(L)$ for any $L$.
\end{enumerate}
Combining all three subcases, we find that if $|K|=m/2$, $\g{U_P}\left(Q(K)\right)$ induces $\logg{C_{11}Z}(B,B')$ between any unordered pair of logical qubits $\{B,B'\}$ satisfying $F_1(L) \subseteq B$, $F_2(L)\cap B = \emptyset$, and $B' = \left(B^\up{c}\cup F_1(L)\right)\setminus F_2(L)$ for all $L \subseteq K$ such that $|L|\leq m/2-2$, induces $\logg{C_{00}Z}(D,D')$ between any unordered pair of logical qubits $\{D,D'\}$ satisfying $F_1(L) \subseteq D$, $F_2(L)\cap D = \emptyset$, and $D' = \left(D^\up{c}\cup F_1(L)\right)\setminus F_2(K)$ for all $L \subseteq K$ such that $|L|= m/2-1$, and induces $\logg{S}(F_1(K))$ if $m/2$ is even (or induces $\logg{S}^{\dagger}(F_1(K))$ $m/2$ is odd). This completes the proof. \qed

\subsection{Proof of \cref{thm:fold_product}.} 

We aim to find the overall logical operation of the product $\prod_{L \subseteq K}\g{U_P}\left(Q(L)\right)$. First, consider each $L \subseteq K$. By \cref{thm:logical_from_fold}, $\g{U_P}\left(Q(L)\right)$ implements a product of logical gates of the form $\logg{S}(B)$, $\logg{C_{00}Z}(B,B')$, or $\logg{C_{11}Z}(B,B')$; each gate is applied to a logical qubit $B$ or between logical qubits $B,B'$ that satisfy certain conditions. We can associate each of $\logg{S}(B)$, $\logg{C_{00}Z}(B,B')$, or $\logg{C_{11}Z}(B,B')$ that arises from $\g{U_P}\left(Q(L)\right)$ with its corresponding subset $M\subseteq L$ (multiple logical gates can be associated with the same subset $M$).

Next, suppose that there exist subsets $L_1,L_2,M \subseteq K$ such that $L_1 \neq L_2$, $M \subseteq L_1$, and $M \subseteq L_2$. If $M \subseteq L_1$ leads to a logical gate $\logg{G}$ that arises from $\g{U_P}\left(Q(L_1)\right)$, the same subset $M \subseteq L_2$ also leads to the same logical gate $\logg{G}$ that arises from $\g{U_P}\left(Q(L_2)\right)$; i.e., the logical gate $\logg{G}$ that corresponds to a certain $M \subseteq K$ can appear from multiple terms in the product $\prod_{L \subseteq K}\g{U_P}\left(Q(L)\right)$. Note that $\logg{G}$ can be $\logg{C_{11}Z}(B,B')$, $\logg{C_{00}Z}(B,B')$, or $\logg{S}(B)$, and for any $B,B'$, $\logg{C_{11}Z}(B,B')^2 = \logg{C_{00}Z}(B,B')^2 = \logg{I}$. Here we want to show that any logical gate $\logg{G}$ associated with $M \subsetneq K$ always arises from an even number of terms in $\prod_{L \subseteq K}\g{U_P}\left(Q(L)\right)$, so such $\logg{G}$ disappears from the overall logical operation of $\prod_{L \subseteq K}\g{U_P}\left(Q(L)\right)$, and the remaining logical gate is the gate associated with $M=K$.

We will show that for any $M \subseteq K$, the number of subsets $L\subseteq K$ such that $M \subseteq L$ is $2^{|K\setminus M|}$; Suppose that $K = \{a_1,\dots,a_{|K|}\}$ and $M = \{a_1,\dots,a_{|M|}\}$. We have that $K\setminus M = \{a_{|M|+1},\dots,a_{|K|}\}$. Any $L\subseteq K$ such that $M \subseteq L$ is of the form $L = M\cup A$ where $A$ is a subset of $K\setminus M$. The number of possible subsets of $K\setminus M$ is $2^{|K\setminus M|}$, so the number of $L\subseteq K$ such that $M \subseteq L$ is also $2^{|K\setminus M|}$. Note that $2^{|K\setminus M|}$ is even if $M\subsetneq K$, and it is odd if $M=K$.

Consider the overall logical operation of $\prod_{L \subseteq K}\g{U_P}\left(Q(L)\right)$. For any size of $K$, any logical gate associated with $M \subsetneq K$ (which is either $\logg{C_{11}Z}$ or $\logg{C_{00}Z}$) arises from an even number of terms in $\prod_{L \subseteq K}\g{U_P}\left(Q(L)\right)$, so it disappears. The remaining logical gates are those associated with $M=K$, which are as follows:
\begin{enumerate}
    \item If $|K| \leq m/2-2$, the remaining logical gates consist of $\logg{C_{11}Z}(B,B')$ applied to any unordered pair of logical qubits $\{B,B'\}$ satisfying $F_1(K) \subseteq B$, $F_2(K)\cap B = \emptyset$, and $B' =  \left(B^\up{c}\cup F_1(K)\right)\setminus F_2(K)$.
    \item If $|K| = m/2-1$, the remaining logical gate is $\logg{C_{00}Z}(B,B')$ applied to the logical qubits $B,B'$ satisfying $F_1(K) \subseteq B$, $F_2(K)\cap B = \emptyset$, and $B' =  \left(B^\up{c}\cup F_1(K)\right)\setminus F_2(K)$.
    \item If $|K| = m/2$, the remaining logical gate is $\logg{S}(F_1(K))$ if $m/2$ is even (or $\logg{S}^{\dagger}(F_1(K))$ if $m/2$ is odd). \qed
\end{enumerate}

\bibliographystyle{quantum}

\bibliography{myref}

\end{document}